\newtheorem{theorem}{Theorem}[section]
\newtheorem{proposition}[theorem]{Proposition}
\theoremstyle{definition}
\newtheorem{definition}[theorem]{Definition}
\newtheorem{example}[theorem]{Example}
\newtheorem{remark}[theorem]{Remark}
\tikzset{
        hatch distance/.store in=\hatchdistance,
        hatch distance=5pt,
        hatch thickness/.store in=\hatchthickness,
        hatch thickness=5pt
        }
\pgfqpoint{\hatchdistance}{\hatchdistance}}
\newcommand{\sat}{\mathrm{sat}}
\newcommand{\apprvls}{\mathbf{a}}
\newcommand{\outcome}{\mathbf{o}}
\newcommand{\calE}{\mathcal{E}}
\newcommand{\calF}{\mathcal{F}}
\newcommand{\calA}{\mathcal{A}}
\newcommand{\calC}{\mathcal{C}}
\newcommand{\calN}{\mathcal{N}}
\newcommand{\calI}{\mathcal{I}}
\DeclarePairedDelimiter{\floor}{\lfloor}{\rfloor}
\DeclarePairedDelimiter{\ceil}{\lceil}{\rceil}
\title{\bf Strengthening Proportionality in Temporal Voting}
\author[1]{Bradley Phillips}
\author[2]{Edith Elkind}
\author[1]{Nicholas Teh}
\author[1]{Tomasz Wąs}
\affil[1]{University of Oxford, UK}
\affil[2]{Northwestern University, USA}
\date{\vspace{-10mm}}
\begin{document}
\maketitle

\begin{abstract}
    We study proportional representation in the framework of temporal voting with approval ballots. Prior work adapted basic proportional representation concepts---justified representation (JR), proportional JR (PJR), and extended JR (EJR)---from the multiwinner setting to the temporal setting. Our work introduces and examines ways of going beyond EJR. Specifically, we consider stronger variants of JR, PJR, and EJR, and introduce temporal adaptations of more demanding multiwinner axioms, such as EJR+, full JR (FJR), full proportional JR (FPJR), and the Core. For each of these concepts, we investigate its existence and study its relationship to existing notions, thereby establishing a rich hierarchy of proportionality concepts. Notably, we show that two of our proposed axioms---EJR+ and FJR---strengthen EJR while remaining satisfiable in every temporal election.

\end{abstract}

\section{Introduction}
\emph{Proportional representation} is a fundamental principle in the design of fair decision-making mechanisms, particularly in the framework of \emph{multiwinner voting}, where the goal is to select a representative group of candidates based on the preferences of the electorate.
For approval ballots, this principle is instantiated as the axiom of \emph{justified representation (JR)} and its extensions, such as PJR/EJR/FJR/EJR+ and the core \citep{aziz2017jr,brill2023robust,peters2021fjr,sf2017pjr}. 
These notions are motivated by classical social choice contexts, from electing diverse panels and boards to ensuring the presence of minority voices in political bodies \citep{lackner2022abc,phragmen}.
More recently, they have been shown to be relevant for machine learning (ML) systems that incorporate human feedback, optimize group-level utility, or seek fairness across populations. Examples include increasing diversity in recommendation systems and social media~\citep{gawron2024movies,revel2025fb,streviniotis2022tourismrecsys}, ensuring balanced representation in committee-based decisions in blockchain governance \citep{boehmer2024polkadot,polkadot2020,cevallos2021polkadot}, and offering representation guarantees in LLM-augmented democratic processes \citep{boehmer2025gensocialchoicenext,fish2024generativesocialchoice}.

However, beyond static selection tasks, many ML applications involve decision-making over multiple rounds, with preferences that may change over time.
For instance, curriculum learning constructs sequences of training tasks for reinforcement learning agents; temporal fairness guarantees can prevent the curriculum from overemphasizing early-stage objectives at the expense of later goals \citep{wu2024marlcurriculum, yang2021marlcurriculum}.
Similarly, streaming service catalogs and content recommendation systems periodically refresh their offerings: ensuring proportionality across time ensures that under-served genres or user communities are not systematically sidelined across updates \citep{chen2024multiwinnerreconfiguration}.
In the domain of generative AI, blending outputs from multiple models---each with its own inductive biases---calls for proportional merging strategies that preserve diversity of generated content over time \citep{peters2024proportionalECAI}.

In these applications, proportional representation needs to be maintained not just in a single round, but across the entire time horizon. This calls for an adaptation of the JR axioms to the temporal setting.

Recently, \citet{bulteau2021jrperpetual}, 
\citet{chandak2024proportional}, and \citet{elkind2025verifying}
considered temporal voting with approval ballots, where the goal is to select one candidate per round. 
They adapted some of the more established justified representation axioms (namely, JR, PJR and EJR) to this setting, formulated temporal variants of popular multiwinner rules, e.g., Proportional Approval Voting \citep{kilgour2010approval}, Method of Equal Shares \citep{peters2020proportionality}, or Greedy Cohesive Rule \citep{bredereckF0N19}, and checked whether their outputs satisfy temporal JR axioms, as well as considered the complexity of verifying whether an outcome satisfies a given axiom.
However, they did not attempt to extend more demanding proportionality axioms such as EJR+, FJR, and core stability to the temporal setting. 

\subsection{Our Contributions}
In this work, we aim to identify the most ambitious justified representation-style axioms that can be satisfied in temporal voting, and, more broadly, 
to create a comprehensive map of the landscape of temporal JR axioms. 

In \Cref{sec:ejr-plus}, we focus on the EJR+ axiom~\citep{brill2023robust}. We propose an adaptation that preserves the three main attractive features of its multiwinner variant: strengthening EJR, being verifiable in polynomial time, and being satisfiable by a polynomial-time computable voting rule. Our analysis in this section highlights the difficulties of extending proportionality axioms from the multiwinner setting to the temporal setting: the most straightforward variant of temporal EJR+ turns out to be unsatisfiable in general, and we explain why this is the case by uncovering a hidden layer of our axiomatic hierarchy.

In \Cref{sec:fjr}, we pursue the same approach for the FJR axiom~\citep{peters2021fjr}, and show that an outcome satisfying temporal FJR always exists and can be found by a modification of the Greedy Cohesive Rule.

In \Cref{sec:fpjr}, we look at the recently proposed notion of full proportional justified representation (FPJR) \citep{kalayci2025fpjr} that combines the FJR notion of group cohesion with a PJR-style collective guarantee.
We adapt it to our setting and show that if the number of voters divides the number of rounds and each voter approves at least one candidate per round, an outcome that satisfies a strong version of FPJR (sFPJR) can be found in polynomial time by the Serial Dictatorship Rule.
Importantly, EJR+ and FJR are stronger than EJR, and sFPJR is incomparable with EJR and the other two axioms. Thus, each of our results advances the frontier of satisfiable proportionality guarantees in the temporal setting.

Finally, in \Cref{sec:core}, we complete the picture of proportionality notions in the temporal setting by formulating temporal core stability axioms.

For all axioms that we define, we establish that they satisfy the expected implications (e.g., temporal FJR implies temporal EJR, just like in the multiwinner setting); for cases where implications do not hold, we provide concrete counterexamples. 
Thus, our work establishes a rich hierarchy of proportionality concepts in temporal voting, illustrated in \Cref{fig:axiom_map}.
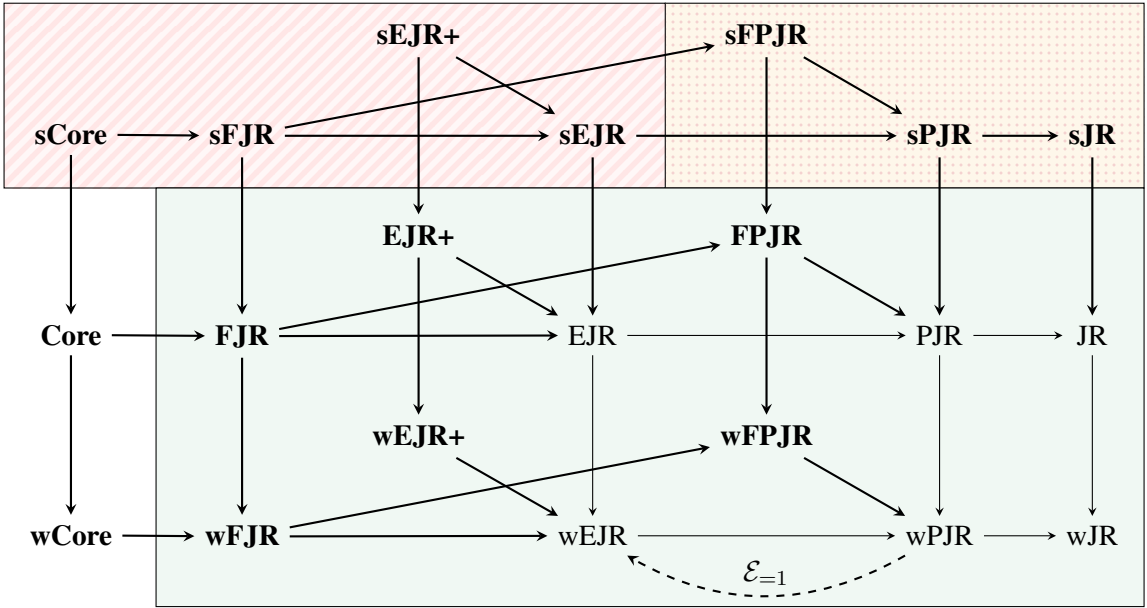
\begin{figure*}[t]
\begin{center}
\begin{tikzpicture}
[Pattern/.style={pattern=north east hatch, pattern color=red!10, hatch distance=7pt, hatch thickness=2pt}]
    \draw [draw=black, preaction={fill=Dandelion!10}, pattern = dots, pattern color=BrickRed!15] (1.2,1.3) rectangle (7.5,3.75);
    \draw [draw=black, preaction={fill=red!03}, Pattern] (-7.5,1.3) rectangle (1.2,3.75);
    \draw [draw=black, fill=Green!05] (-5.5,-4.25) rectangle (7.5,1.3);
  \matrix(m)[matrix of math nodes,row sep=2em,column sep=2.5em,minimum width=2em]
  { & & \text{{\bf sEJR+}} & & \text{{\bf sFPJR}} & &\\
    \text{{\bf sCore}} & \text{{\bf sFJR}} & & \text{{\bf sEJR}} & & \text{{\bf sPJR}} & \text{{\bf sJR}} \\
    & & \text{{\bf EJR+}} & & \text{{\bf FPJR}} & & \\
    \text{{\bf Core}} & \text{{\bf FJR}} & & \text{{EJR}} & & \text{{PJR}} & \text{{JR}} \\
    & & \text{{\bf wEJR+}}& & \text{{\bf wFPJR}} & &\\
    \text{{\bf wCore}} & \text{{\bf wFJR}} & & \text{{wEJR}} & & \text{{wPJR}} & \text{{wJR}} \\};
  \path[-stealth, thick]
    (m-1-3) edge (m-2-4) edge (m-3-3)
    (m-1-5) edge (m-2-6) edge (m-3-5) 
    (m-2-1) edge (m-2-2) edge (m-4-1)
    (m-2-2) edge (m-2-4) edge (m-4-2) edge (m-1-5)
    (m-2-4) edge (m-2-6) edge (m-4-4)
    (m-2-6) edge (m-2-7) edge (m-4-6)
    (m-2-7) edge (m-4-7)
    (m-3-3) edge (m-4-4) edge (m-5-3)
    (m-3-5) edge (m-4-6) edge (m-5-5)
    (m-4-1) edge (m-4-2) edge (m-6-1)
    (m-4-2) edge (m-4-4) edge (m-6-2) edge (m-3-5)
    (m-5-3) edge (m-6-4)
    (m-5-5) edge (m-6-6)
    (m-6-1) edge (m-6-2)
    (m-6-2) edge (m-6-4) edge (m-5-5)
    (m-6-6) edge[dashed,out=-150,in=-30] node [above] {$\calE_{=1}$} (m-6-4)
    ;
  \path[-stealth]
    (m-4-4) edge (m-4-6) edge (m-6-4)
    (m-4-6) edge (m-4-7) edge (m-6-6)
    (m-4-7) edge (m-6-7)
    (m-6-4) edge (m-6-6)
    (m-6-6) edge (m-6-7)
    ;
\end{tikzpicture}
\caption{Axioms considered in our paper.
A solid arrow from axiom $A$ to axiom $B$
means that $A$ implies $B$;
an absence of a path from $A$ to $B$
means that the implication does not hold
(even if each voter approves exactly one candidate per round, denoted by $\calE_{=1}$).
A dashed arrow means implication for $\calE_{=1}$,
but not in general.
Thick arrows denote implications established in this paper.
The axioms on a green plain background
can be satisfied in each election.
The axioms on a yellow dotted background
can be satisfied if each voter approves at least one candidate per round and the number of rounds is divisible by the number of voters,
but not in general;
for the ones on a red striped background
even in this special case
there are instances where they are not satisfiable.
For axioms on white background, 
satisfiability remains open.
The axioms introduced in this paper
are written in bold.}
\label{fig:axiom_map}
\end{center}
\vspace{-0.15in}
\end{figure*}
All missing proofs are included in the technical appendix.

\subsection{Related Work}
Proportional representation in temporal voting was first studied by \citet{bulteau2021jrperpetual}, who extended two notions of proportional representation---JR and PJR---to the temporal setting. 
They showed that a JR outcome can be computed in polynomial time, even for the most demanding version of JR among those they considered. 
For PJR, they proved the existence of an outcome satisfying the axiom, but their constructive proof relies on an exponential-time algorithm.
This work was extended by \citet{chandak2024proportional}, who introduced a temporal variant of EJR and showed that several multiwinner voting rules that satisfy PJR or EJR in the standard model can be adapted to the temporal setting so as to satisfy the corresponding temporal variants of PJR and EJR.
Subsequently, \citet{elkind2025verifying} studied the complexity of verifying whether an outcome satisfies various proportionality axioms.

\citet{elkind2024temporalelections} focused on the tradeoffs between the social welfare and other desirable properties in the temporal setting, such as strategyproofness and a weaker form of proportionality. \citet{elkind2025temporalchores} considered similar computational questions, but in the setting where voters have disutility over candidates.
A similar model has also been explored by \citet{lackner2020perpetual} and \citet{lackner2023proportionalPV} under the \emph{perpetual voting} framework, where they consider temporal extensions of  multiwinner voting rules and their axiomatic properties.
We refer the reader to \citet{elkind2024temporalsurvey} for a systematic review of recent work on temporal voting and related models.
An adjacent line of work looks into sequential committee elections where an entire committee is elected at each round \citep{bredereck2020successivecommittee,bredereck2022committeechange,chen2024multiwinnerreconfiguration,deltl2023seqcommittee,zech2024multiwinnerchange}.
These works focus on constraining the extent of changes to the committees chosen across rounds.

Multiple other subareas are similar in spirit to the temporal voting framework, but with additional constraints/restrictions. In \emph{apportionment with approval preferences}, the goal is to allocate the seats of a fixed-size committee to parties based on voters' (approval) preferences over the parties \citep{brill2024partyapportionment,delemazure2022spelection}. This is equivalent to a restricted setting of temporal voting with static voter preferences.
In \emph{fair scheduling}, each agent's preference is a permutation, and so is the outcome \citep{elkind2022temporalslot,patro2022virtualconf}.
Our model differs from this setting in that we allow each project to be chosen more than once (both in agents' preferences and the outcome). \emph{Fair public decision-making} is similar to the temporal voting model, but the focus is typically on weaker forms of proportionality \citep{alouf2022better,conitzer2017fairpublic,fain2018publicgoods,skowron2022proppublic}.
Finally, an adjacent but related framework is that of \emph{resource allocation over time} \citep{allouah2023fairallocationtime,bampis2018fairtime,elkind2025temporalfairdivision}, where items are sequentially (and privately) allocated to agents. While this setting can potentially be modeled using the temporal voting framework, the fairness notions typically considered are fundamentally different.

Another relevant work is that of \citet{MasPieSko-2023-GeneralProportionality}, who considered a more general voting model incorporating feasibility constraints, and proposed proportionality axioms for this model. We include a detailed discussion of the connection between their concepts and some of ours in \Cref{app:general-prop}.

\section{Preliminaries} \label{sec:preliminaries}

For every natural number $k \in \mathbb{N}$, we let $[k] = \{1, 2, \dots, k\}$.
A \emph{temporal election} is a tuple
$E = (C, N, \ell, A)$, where
$C$ is the set of {\em candidates},
$N$ is the set of $n$ \emph{voters},
$\ell$ is the number of \emph{rounds},
and $A = (\apprvls_i)_{i \in N}$ is an \emph{approval profile},
where $\apprvls_i=(a_{i,1}, a_{i,2},\dots, a_{i,\ell})$ is the \emph{ballot} of voter $i \in N$, i.e., 
a list of $\ell$ subsets of $C$; 
in round $r \in [\ell]$ voter $i$
\emph{approves} candidates in $a_{i, r} \subseteq C$.
We denote the set of all temporal elections by $\calE$
and the set of temporal elections in which in every round
every voter approves at least (resp. exactly) one candidate by $\calE_{\ge 1}$ (resp. $\calE_{=1}$).
We say that voters from a subset $S \subseteq N$ \emph{agree} in a round $r \in [\ell]$ if there exists a candidate that they all approve in this round, i.e., $\bigcap_{i \in S} a_{i,r} \neq \varnothing$.

An \emph{outcome}
\(
    \outcome = (o_1, o_2, \dots, o_\ell) \in
    C^\ell
\)
is a sequence of candidates chosen in each round.
For a subset of rounds $R \subseteq [\ell]$ and an outcome $\outcome$, we write $\outcome_R = (o_r)_{r \in R}$ to denote the \emph{suboutcome} with respect to $R$.
The \emph{satisfaction} of a subset of voters $S \subseteq N$ from a suboutcome $\outcome_R$ is $\sat_S(\outcome_R) = |\{r \in R : o_r \in \bigcup_{i\in S} a_{i,r}\}|$, i.e., the number of rounds in $R$ in which the selected candidate is approved by at least one voter from $S$.
If $S = \{i\}$ for some $i \in N$,
we drop the brackets and write
$\sat_i(\outcome_R)$.
A \emph{voting rule} $f$ is a function that for every election $E$ outputs a non-empty set of outcomes $f(E)$.

\subsection*{Proportionality Axioms from Prior Work}
The temporal voting setting can be seen as an extension of {\em multiwinner voting}, where, in a single round of voting, voters in $N$ report approvals $(a_i)_{i\in N}$ over candidates in $C$, and the goal is to select a set of $k$ winners, $W$. The satisfaction of a group of voters $S$ is then defined as the number of candidates in $W$ approved by at least one member of $S$. Key proportionality axioms for multiwinner voting \citep{aziz2017jr,sf2017pjr} require $W$ to provide a certain level of satisfaction to each group of voters $S$ that agree on $t\ge 1$ candidates (in the sense that $|\bigcap_{i\in S}a_i|\ge t$). In particular, 
{\em justified representation (JR$^\textit{\,mw}$)} requires the satisfaction to be at least $1$ if $k\cdot |S|/n\ge 1$, {\em proportional justified representation (PJR$^\textit{\,mw}$)} 
strengthens this guarantee to $\min(t,\floor{k \cdot |S|/n})$, while \emph{extended justified representation} (EJR$^\textit{\,mw}$) requires at least one voter in $S$
to approve $\min(t,\floor{k \cdot |S|/n})$ members of $W$.

\citet{bulteau2021jrperpetual} and \citet{chandak2024proportional} propose two versions of each of these axioms for the temporal setting.
We will now present these axioms using the terminology of \citet{elkind2025verifying}.\footnote{In their conference paper, \citet{chandak2024proportional} use `JR/PJR/EJR' for what we call `weak JR/PJR/EJR' and `strong JR/PJR/EJR' for what we call `JR/PJR/EJR', but their arXiv version uses the same terminology as we do.}
Note that the temporal setting is more restrictive as compared to the multiwinner setting, as we cannot select two candidates in the same round.

The weakest version of these axioms  only considers subsets of voters that agree in every round.
\begin{definition}
\label{def:wJR:wPJR:wEJR:combined}
    Given a temporal election $E =(C, N, \ell, A )$ and an outcome $\outcome$, if for every subset of voters $S \subseteq N$ that agree in all $\ell$ rounds it holds that
    \begin{itemize}[leftmargin=0.3in]
        \item $\sat_{S}(\outcome) \geq \min(1, \floor{\ell \cdot |S|/n})$,
        then $\outcome$ provides \emph{weak justified representation} (wJR),
        \item $\sat_{S}(\outcome) \geq \floor{\ell \cdot |S|/n}$,
        then $\outcome$ provides \emph{weak proportional justified representation} (wPJR),
        \item $\sat_{i}(\outcome) \geq \floor{\ell \cdot |S|/n}$ for some $i\in S$,
        then $\outcome$ provides \emph{weak extended justified representation} (wEJR).
    \end{itemize}
\end{definition}

Definition~\ref{def:wJR:wPJR:wEJR:combined} offers no guarantees to groups that disagree even in a single round. A more flexible approach is to provide guarantees that scale with the number of rounds where the members agree.

\begin{definition}
\label{def:JR:PJR:EJR:combined}
    Given a temporal election
     $E =( C ,N , \ell, A )$
    and an outcome $\outcome$, if for each $t>0$ and every subset of voters $S \subseteq N$
    that agree in a size-$t$ subset of rounds it holds that
    \begin{itemize}[leftmargin=0.3in]
        \item $\sat_{S}(\outcome) \geq \min(1, \floor{t \cdot |S|/n})$,
        then $\outcome$ provides \emph{justified representation} (JR),
        \item $\sat_{S}(\outcome) \geq \floor{t \cdot |S|/n}$,
        then $\outcome$ provides \emph{proportional justified representation} (PJR),
        \item $\sat_{i}(\outcome) \geq \floor{t \cdot |S|/n}$ for some $i\in S$,
        then $\outcome$ provides \emph{extended justified representation} (EJR).
    \end{itemize}
\end{definition}

\citet{chandak2024proportional} show that every temporal election admits an outcome that provides EJR. Also, EJR implies PJR, PJR implies JR, and each axiom implies its weak counterpart (we say that an axiom $A_1$ implies axiom $A_2$,
denoted $A_1 \implies A_2$, if every outcome that provides $A_1$ also provides $A_2$).

\section[Extended Justified Representation +]{Extended Justified Representation\,+}
\label{sec:ejr-plus}
In the context of multiwinner voting, 
\citet{brill2023robust} have recently strengthened EJR$^\textit{\,mw}$ to a new axiom, which they call
\emph{Extended Justified Representation\,+ (EJR$^\textit{\,mw}$+)}. 
The goal of this section is to adapt this axiom to the temporal setting.
EJR$^\textit{\,mw}$+ has three attractive features, which we aim to preserve in our temporal adaptation:
(i) it implies EJR$^\textit{\,mw}$ (while the converse is not true)
(ii) a committee that provides EJR$^\textit{\,mw}$+ can be found in polynomial time, and
(iii) it can be verified in polynomial time whether a given committee provides EJR$^\textit{\,mw}$+.

EJR$^\textit{\,mw}$+ requires that for every group of voters $S$ that jointly approve a candidate $c$, either $c$ is included in the winning committee $W$ or
there is a voter $i \in S$ whose satisfaction from $W$ is at least $\floor{k\cdot |S|/n}$. A direct translation of this axiom to the temporal setting gives us the following definition.

\begin{definition}\label{def:sEJR+}
    Given a temporal election
    $E =( C, N, \ell, A )$, 
    an outcome $\outcome$ provides \emph{strong extended justified representation\,+} (sEJR+)
    if for every subset of voters $S \subseteq N$ and every  round $r \in [\ell]$ with
    $\bigcap_{i \in S} a_{i,r}\neq\varnothing$
    it holds that
    \begin{equation*}
    \text{(i) } \ \sat_i(\outcome) \ge \floor{\ell \cdot |S|/n} \text{ for some } i \in S, \quad \text{or} \quad \text{(ii) } \ o_r \in \textstyle{\bigcap_{i \in S} a_{i,r}}.
\end{equation*}
\end{definition}
Note that the second condition is phrased in terms of choosing an outcome from $\bigcap_{i\in S}a_{i, r}$ rather than selecting a specific candidate;
this is because there may be several candidates approved by all members of $S$ in round $r$, but only one of them can be selected. In contrast, in the multiwinner setting, if $S$ agrees on multiple candidates, several such candidates can be included in the winning committee.

We labeled the concept introduced in \Cref{def:sEJR+} as `strong' for two reasons. First, there are elections (even in $\calE_{=1}$) with no sEJR+ outcomes. Second, 
sEJR+ is fundamentally different from JR/PJR/EJR, as defined in \Cref{def:JR:PJR:EJR:combined}. To show this, we will now define strong JR/PJR/EJR following the approach of \Cref{def:sEJR+}, and explain how the resulting notions differ from JR/PJR/EJR.

\begin{definition}
\label{def:sJR:sPJR:sEJR:combined}
    Given a temporal election
     $E =(C, N, \ell, A )$
    and an outcome $\outcome$,
    if for each $t>0$ and every subset of voters $S \subseteq N$
    that agree in a size-$t$ subset of rounds it holds that
    \begin{itemize}[leftmargin=0.3in]
        \item $\sat_{S}(\outcome) \geq \min(1, \floor{\ell \cdot |S|/n})$,
        then $\outcome$ provides \emph{strong justified representation} (sJR),
        \item $\sat_{S}(\outcome) \geq \min(t, \floor{\ell \cdot |S|/n})$,
        then $\outcome$ provides \emph{strong proportional justified representation} (sPJR),
        \item $\sat_{i}(\outcome) \geq \min(t, \floor{\ell \cdot |S|/n})$ for some $i \in S$,
        then $\outcome$ provides \emph{strong extended justified representation} (sEJR).
    \end{itemize}
\end{definition}

Together with sEJR+, strong JR/PJR/EJR form a natural hierarchy and 
imply their counterparts from \Cref{def:JR:PJR:EJR:combined}.

\begin{restatable}{proposition}{PropVsImplications}\label{prop:vs:implications}
    It holds that: (i) sEJR+ $\implies$ sEJR, (ii) sEJR $\implies$ sPJR and EJR, (iii) sPJR $\implies$ sJR and PJR, and (iv) sJR $\implies$ JR.
\end{restatable}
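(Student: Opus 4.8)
The plan is to group the four implications by the inequalities they rely on among the relevant thresholds, dispatch (ii)--(iv) with one shared elementary computation, and treat (i) separately since sEJR+ is phrased round-by-round rather than in terms of a $t$-fold agreement.

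First I would record the two elementary facts that drive (ii)--(iv). Fix any group $S$ that agrees on a size-$t$ subset of rounds; then $1 \le t \le \ell$ and $|S| \le n$. Monotonicity of the floor gives $\floor{t \cdot |S|/n} \le \floor{\ell \cdot |S|/n}$, and $|S|/n \le 1$ gives $\floor{t \cdot |S|/n} \le t$, so together $\floor{t \cdot |S|/n} \le \min(t, \floor{\ell \cdot |S|/n})$; also $1 \le t$ yields $\min(1, \floor{\ell \cdot |S|/n}) \le \min(t, \floor{\ell \cdot |S|/n})$. With these in hand the three threshold implications are immediate. For (iv), the sJR bound $\sat_{S}(\outcome) \ge \min(1, \floor{\ell \cdot |S|/n})$ dominates the JR requirement $\min(1, \floor{t \cdot |S|/n})$ since $\floor{t \cdot |S|/n} \le \floor{\ell \cdot |S|/n}$. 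For (iii), the sPJR bound $\min(t, \floor{\ell \cdot |S|/n})$ dominates both the sJR bound (using $1 \le t$) and the PJR bound $\floor{t \cdot |S|/n}$ (using the displayed inequality). For (ii), sEJR supplies some $i \in S$ with $\sat_{i}(\outcome) \ge \min(t, \floor{\ell \cdot |S|/n})$; since $\sat_{S}(\outcome) \ge \sat_{i}(\outcome)$ (every round counted for $i \in S$ is counted for $S$), this gives sPJR, and the same witness $i$ meets the EJR bound $\floor{t \cdot |S|/n}$ via the displayed inequality.

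The step needing a different idea is (i), sEJR+ $\implies$ sEJR. Here I would fix $t > 0$ and a group $S$ agreeing on a set $R$ of $t$ rounds, and split on whether sEJR+'s option (i)---which does not mention the round---already holds for $S$. If it does, some $i \in S$ satisfies $\sat_{i}(\outcome) \ge \floor{\ell \cdot |S|/n} \ge \min(t, \floor{\ell \cdot |S|/n})$, giving sEJR. Otherwise option (i) fails for $S$, so sEJR+ forces option (ii) in every round where $S$ agrees, in particular in each $r \in R$; thus $o_r \in \bigcap_{i \in S} a_{i,r}$ for all $r \in R$, so every member of $S$ approves the chosen candidate in all $t$ rounds of $R$ and hence has $\sat_{i}(\outcome) \ge t \ge \min(t, \floor{\ell \cdot |S|/n})$.

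The main obstacle---indeed essentially the only subtlety---is recognizing that sEJR+'s option (i) is a statement about the group $S$ alone, independent of $r$. This is what lets its failure be applied uniformly across all agreement rounds, converting the per-round guarantee of sEJR+ into the aggregate $t$-round guarantee demanded by sEJR. Everything else reduces to the monotonicity inequalities above.
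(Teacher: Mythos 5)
Your proposal is correct and follows essentially the same route as the paper: the threshold inequalities $\floor{t\cdot|S|/n}\le\min(t,\floor{\ell\cdot|S|/n})$ handle (ii)--(iv), and (i) is a two-case analysis that is the contrapositive reorganization of the paper's split (the paper cases on whether every round in $R$ has $o_r\in\bigcap_{i\in S}a_{i,r}$, you case on whether sEJR+'s round-independent option (i) holds for $S$ — logically the same argument). Your explicit remark that option (i) of sEJR+ depends only on $S$ and not on $r$ is exactly the fact the paper uses implicitly, so there is nothing to add or fix.
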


However, there exist elections (even in $\calE_{=1}$) for which
even the weakest of these axioms, i.e., sJR, is impossible to satisfy. By \Cref{prop:vs:implications},
this impossibility extends to sPJR/sEJR/sEJR+.

\begin{restatable}{proposition}{sjrnotsatisfiable}
\label{prop:sjr:not-satisfiable}
    In the temporal election $E \in \calE_{=1}$ given below
    no outcome provides sJR.

\begin{table}[H]
    \setlength{\tabcolsep}{3pt}
    \renewcommand{\arraystretch}{1.2}
    \small
    \begin{center}
        \begin{tabular}{c | cccc cccc cccc }
        \toprule
        & \multicolumn{12}{c}{Voter} \\
        Rounds & 1 & 2 & 3 & 4 & 5 & 6 & 7 & 8 & 9 & 10 & 11 & 12 \\
        \midrule
        1 & $x$ & $x$ & $x$ & $x$ & $y$ & $y$ & $y$ & $y$ & $z$ & $z$ & $z$ & $z$ \\
        2--6 & $c_1$ & $c_2$ & $c_3$ & $c_4$ & $c_5$ & $c_6$ & $c_7$ & $c_8$ & $c_9$ & $c_{10}$ & $c_{11}$ & $c_{12}$\\
        \bottomrule
        \end{tabular}
    \end{center}
    \label{table:sJR:not-satisfiable}
\end{table}
\end{restatable}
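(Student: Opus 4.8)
The plan is to turn sJR on this instance into a counting constraint on how many voters can remain completely unsatisfied, and then derive a contradiction by a pigeonhole argument over the rounds. First I would record the parameters: here $n=12$ and $\ell=6$, so $\floor{\ell \cdot |S|/n} = \floor{|S|/2}$, and hence for every group $S$ with $|S|\ge 2$ the sJR requirement reads $\sat_S(\outcome)\ge \min(1,\floor{|S|/2}) = 1$. Next I would pin down exactly which groups of size at least two agree in some round. In rounds $2$--$6$ the ballots are pairwise-disjoint singletons $\{c_i\}$, so no two distinct voters agree there; in round $1$ the only agreements are within the three blocks $B_1=\{1,2,3,4\}$, $B_2=\{5,6,7,8\}$, $B_3=\{9,10,11,12\}$. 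Thus the only nontrivial sJR constraints come from pairs (and larger subsets) inside a single block, and each such pair must receive satisfaction at least $1$.

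I would then reformulate this as a bound on unsatisfied voters. Call a voter $i$ \emph{unsatisfied} if $\sat_i(\outcome)=0$. If two voters $i,j$ in the same block were both unsatisfied, then $\sat_{\{i,j\}}(\outcome)=0$, violating the constraint above for the pair $\{i,j\}$, which agrees in round $1$. Hence each block contains at most one unsatisfied voter, so at least $3$ voters per block---at least $9$ voters overall---must have $\sat_i(\outcome)\ge 1$.

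Finally I would upper-bound the number of voters that any outcome can satisfy at least once, and obtain the contradiction. A voter $i\in B_k$ is satisfied only if either the round-$1$ choice equals the block candidate (so $o_1\in\{x,y,z\}$ satisfies every voter of exactly one block, while any other choice of $o_1$ satisfies nobody in round $1$, since $c_i\notin a_{j,1}$ for all $j$), or $o_r=c_i$ for some $r\in\{2,\dots,6\}$. Each of the five rounds $2$--$6$ selects a single candidate and therefore makes at most one voter satisfied through this second route. Now I would split on the round-$1$ choice: whether $o_1\in\{x,y,z\}$ (serving exactly one block) or not (serving none); in either case at least two blocks are served exclusively by rounds $2$--$6$, so these blocks require at least $3+3=6$ satisfied voters, yet the five rounds $2$--$6$ can satisfy at most $5$ voters in total---a contradiction with the lower bound of the previous step. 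The main obstacle is isolating this $6$-versus-$5$ pigeonhole cleanly: one must argue that round $1$ can help only a single block, so that the remaining blocks fall entirely on the five disjoint-singleton rounds, and verify that picking a candidate outside $\{x,y,z\}$ in round $1$ (or repeating a $c_i$ across rounds $2$--$6$) can only make matters worse.
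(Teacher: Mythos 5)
Your proof is correct and follows essentially the same route as the paper's: pairwise sJR constraints within each round-1 block force at least three satisfied voters per block, and since round 1 can serve at most one block, the remaining two blocks need six satisfied voters from the five singleton rounds 2--6, a contradiction. The only difference is cosmetic---the paper disposes of the case $o_1\notin\{x,y,z\}$ via a without-loss-of-generality assumption, whereas you handle it by an explicit (and slightly more careful) case split.
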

\begin{proof}
The table specifies an election $E=(C,N,\ell,A)$ with $\ell = 6$ rounds, a set of voters $N=[12]$, and a set of candidates $C = \{x, y, z, c_1,\dots,c_{12}\}$, 
where each voter approves a single candidate in each round.

Intuitively, in the first round the voters form three cohesive blocks of size four, and in the next five rounds, every voter supports a different candidate.
Without loss of generality, assume that candidate $x$ is selected in the first round.
Let $S$ be a size-2 subset of $\{5, 6, 7,8\}$. As $\floor{\ell \cdot |S|/n} = \floor{6 \cdot 2 /12} = 1$ and voters in $S$ agree on $y$ in the first round, sJR requires that at least one voter in $S$ has positive satisfaction. Thus, at least three voters in $\{5,6, 7, 8\}$ must have positive satisfaction. By the same argument, 
at least three voters in $\{9,10,11,12\}$ must have positive satisfaction. Thus, we need to provide positive satisfaction to $6$ voters in rounds 2--6, which is clearly impossible, as in each of these rounds we can satisfy at most one voter.
Therefore, no outcome can provide sJR.
\end{proof}

The reason why sJR (and hence sEJR+) 
is hard to satisfy is that it can give strong guarantees to groups of voters
based on agreement in just a few rounds.
These rounds may be heavily contested, and selecting a candidate to satisfy one group might make it impossible to satisfy another group that agrees over the same rounds (such as groups 1--4, 5--8 and 9--12 in the proof of \Cref{prop:sjr:not-satisfiable}). In contrast, in the multiwinner setting,
all $k$ committee slots are equivalent.

Now, the difference between sEJR and EJR is that the satisfaction guarantee to a group $S$ that agrees in $t$ rounds is $\floor{\ell \cdot |S|/n}$ under to the former and $\floor{t \cdot |S|/n}$ under the latter. Can we modify the definition of sEJR+ in a similar way to obtain an axiom that can always be satisfied?
The challenge is that the multiwinner variant of this axiom, 
EJR$^\textit{\,mw}$+, provides guarantees to voters in $S$ as long as they agree on one candidate, i.e., there is no `$t$' that can be used to replace $\ell$. We would like to preserve this feature, but define the satisfaction guarantee so that it only takes into account the rounds with some level of agreement.
To this end, we replace the size of the group $|S|$ with a more `local' measure, namely, the number of voters in $S$ who agree on a candidate in a given round, 
and focus on the rounds where this quantity is high.

\begin{definition}\label{def:ejr+}
Given a temporal election $E=(C, N, \ell, A)$ and $\sigma \in [n]$, $\tau \in [\ell]$, we say that a subset of voters $S$ is {\em $(\sigma, \tau)$-cohesive} if there exists a set of $\tau$ rounds $R\subseteq [\ell]$ and a suboutcome $\outcome_R=(o_r)_{r\in R}$ such that for each $r\in R$ at least $\sigma$ voters in $S$ approve $o_r$. 
We say that an outcome $\outcome$ provides
\emph{extended justified representation\,+  (EJR+)} if for all $\sigma\in [n]$, $\tau\in [\ell]$, every $(\sigma, \tau)$-cohesive subset of voters $S$ and every round $r\in [\ell]$ with $\bigcap_{i\in S}a_{i, r}\neq\varnothing$ it holds that
\begin{equation*}
    \text{(i) } \ \sat_i(\outcome) \ge \floor{\tau \cdot \sigma/n} \text{ for some } i \in S, \quad \text{or} \quad \text{(ii) } \ o_r \in \textstyle{\bigcap_{i \in S} a_{i,r}}.
\end{equation*}
\end{definition}

In the election from \Cref{prop:sjr:not-satisfiable}, 
the group $S=\{1, 2, 3, 4\}$ is $(4, 1)$-cohesive 
(as all members of $S$ approve $x$ in round~1) and $(1, 6)$-cohesive (as at least one member of $S$ approves $(x, c_1, \dots, c_1)$), but $4\cdot 1< 12$, $1\cdot 6 < 12$, so EJR+ does not offer any representation guarantees to $S$.

We will now argue that this is the `correct' definition of EJR+ for the temporal setting. Our argument is threefold: we show that 
(a) EJR+ implies EJR and is implied by sEJR+,  (b)~EJR+ is polynomial-time verifiable, and (c) an EJR+ outcome always exists and can be found in polynomial time.

\begin{restatable}{proposition}{EJRplusImplications}\label{prop:ejr+:implications}
    It holds that: (i) sEJR+ $\implies$ EJR+ and (ii) EJR+ $\implies$ EJR. 
\end{restatable}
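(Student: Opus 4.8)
The plan is to prove each implication by invoking the weaker axiom on exactly the same group and round that witness the stronger axiom's hypothesis, and then reconciling the two satisfaction thresholds; in both cases the two clauses of one axiom map onto the two clauses of the other.

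For (i), suppose $\outcome$ provides sEJR+, and fix parameters $\sigma \in [n]$, $\tau \in [\ell]$, a $(\sigma,\tau)$-cohesive set $S$, and a round $r$ with $\bigcap_{i\in S} a_{i,r}\neq\varnothing$, as in the hypothesis of EJR+. I would apply sEJR+ directly to this same $S$ and $r$: since $\bigcap_{i\in S} a_{i,r}\neq\varnothing$, sEJR+ guarantees that either $o_r \in \bigcap_{i\in S} a_{i,r}$---which is precisely clause (ii) of EJR+---or some $i\in S$ has $\sat_i(\outcome) \ge \floor{\ell \cdot |S|/n}$. In the latter case it remains to verify $\floor{\ell \cdot |S|/n} \ge \floor{\tau \cdot \sigma/n}$, which follows from $\ell \ge \tau$ (as $\tau \in [\ell]$) together with $|S| \ge \sigma$; the latter holds because $(\sigma,\tau)$-cohesiveness with $\tau \ge 1$ forces at least $\sigma$ members of $S$ to approve a common candidate in some witness round. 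Thus clause (i) of EJR+ holds. This part is essentially a parameter-substitution argument and should be short.

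For (ii), suppose $\outcome$ provides EJR+, and fix $t>0$ and a group $S$ that agrees in a size-$t$ set of rounds $T$; I want a voter in $S$ with $\sat_i(\outcome) \ge \floor{t \cdot |S|/n}$. The key observation is that $S$ is $(|S|, t)$-cohesive: for each $r\in T$ pick any $o_r \in \bigcap_{i\in S} a_{i,r}$, exhibiting $t$ rounds in each of which all $|S|$ voters of $S$ approve the chosen candidate. I would then apply EJR+ with $\sigma = |S|$ and $\tau = t$, taking each round $r\in T$ in turn as the distinguished round (each is valid since $\bigcap_{i\in S} a_{i,r}\neq\varnothing$). For every such $r$, EJR+ yields either a voter $i\in S$ with $\sat_i(\outcome) \ge \floor{t \cdot |S|/n}$---in which case EJR holds immediately---or $o_r \in \bigcap_{i\in S} a_{i,r}$.

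The one point requiring care is the case where clause (i) of EJR+ never fires. Then clause (ii) must hold for every $r\in T$, i.e. $o_r \in \bigcap_{i\in S} a_{i,r}$ for all $t$ agreement rounds, so every voter in $S$ is satisfied in each round of $T$ and hence $\sat_i(\outcome) \ge t$. Since $|S| \le n$ gives $\floor{t \cdot |S|/n} \le t$, this already meets the EJR bound. I do not anticipate a genuine obstacle in either direction; the only subtlety is recognizing that the per-round clause (ii) of EJR+ accumulates across the $t$ agreement rounds to deliver the required satisfaction, rather than needing to be combined with clause (i).
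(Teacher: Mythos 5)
Your proof is correct and takes essentially the same approach as the paper's: part (i) is the same parameter comparison $\floor{\tau\cdot\sigma/n}\le\floor{\ell\cdot|S|/n}$ (via $\tau\le\ell$ and $\sigma\le|S|$, the latter from cohesiveness with $\tau\ge 1$), and part (ii) is just the direct (contrapositive-free) form of the paper's argument, which likewise observes that $S$ is $(|S|,t)$-cohesive and that when clause~(i) fails, $o_r\in\bigcap_{i\in S}a_{i,r}$ must hold in every agreement round, yielding satisfaction $t\ge\floor{t\cdot|S|/n}$.
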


\begin{restatable}{proposition}{EJRplusPolyTime}\label{prop:ejr+:poly-time}
    Given a temporal election $E$ and an outcome $\outcome$,
    it can be checked in polynomial time
    whether $\outcome$ provides EJR+.
\end{restatable}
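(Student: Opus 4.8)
The plan is to reduce the (a priori exponential) quantification over all $(\sigma,\tau)$-cohesive subsets to a polynomial search, by showing that a violation, if one exists, is always witnessed by a canonical \emph{maximal} group. Concretely, I would characterize the failure of EJR+ as follows: $\outcome$ violates EJR+ if and only if there exist $\sigma \in [n]$, $\tau\in[\ell]$, a round $r^*\in[\ell]$, and a candidate $c^*\in C$ such that, writing $\beta = \floor{\tau\sigma/n}$ and
\[
P = \{\, i\in N : c^*\in a_{i,r^*}\ \text{and}\ \sat_i(\outcome) < \beta \,\},
\]
we have $\beta \ge 1$, some voter $i\in P$ satisfies $o_{r^*}\notin a_{i,r^*}$, and $P$ is $(\sigma,\tau)$-cohesive. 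Since there are only $O(n\,\ell^2\,|C|)$ choices of $(\sigma,\tau,r^*,c^*)$ and each pool $P$ is computable in polynomial time (the values $\sat_i(\outcome)$ are precomputed once), this characterization immediately yields a polynomial-time verification algorithm, provided the cohesiveness of $P$ can itself be checked efficiently.

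Two observations drive the characterization. First, $(\sigma,\tau)$-cohesiveness decomposes over rounds: since the witnessing suboutcome in \Cref{def:ejr+} may choose an arbitrary candidate in each of the $\tau$ rounds independently, $P$ is $(\sigma,\tau)$-cohesive exactly when at least $\tau$ rounds $r$ admit some candidate approved in round $r$ by at least $\sigma$ members of $P$. Thus for each round it suffices to compute $\max_{c\in C}|\{i\in P: c\in a_{i,r}\}|$, compare it with $\sigma$, and count the rounds that clear the bar---clearly polynomial. Second, it is without loss of generality to take the violating group to be the \emph{entire} pool $P$. This rests on monotonicity of cohesiveness: if $S\subseteq S'$ and $S$ is $(\sigma,\tau)$-cohesive, then so is $S'$, because any round/candidate pair achieving $\sigma$ approvals within $S$ achieves at least that many within $S'$.

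With these in hand, the equivalence is routine. For the ``if'' direction, the set $S=P$ is $(\sigma,\tau)$-cohesive, every member has satisfaction below $\beta=\floor{\tau\sigma/n}$ so condition~(i) of \Cref{def:ejr+} fails, and the common candidate $c^*$ witnesses $\bigcap_{i\in P}a_{i,r^*}\neq\varnothing$ while the voter guaranteed not to approve $o_{r^*}$ shows $o_{r^*}\notin\bigcap_{i\in P}a_{i,r^*}$, so condition~(ii) fails as well. For the ``only if'' direction, given any violating group $S$ with parameters $\sigma,\tau$ and violated round $r$, set $r^*=r$ and pick any $c^*\in\bigcap_{i\in S}a_{i,r}$ (nonempty by assumption); then $S\subseteq P$, so $P$ inherits cohesiveness by monotonicity, and the voter of $S$ not approving $o_r$ (which exists because $o_r\notin\bigcap_{i\in S}a_{i,r}$) lies in $P$. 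One may assume $\beta\ge 1$ throughout, since when $\floor{\tau\sigma/n}=0$ condition~(i) holds vacuously and no violation can occur.

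The step needing the most care---and the crux of the argument---is the maximal-group reduction: collapsing the existential quantifier over all subsets $S$ to the single canonical pool $P$. Everything else is bookkeeping, but this reduction is what turns an exponential search into a polynomial one, and it is exactly where the monotonicity of $(\sigma,\tau)$-cohesiveness and the decoupling of condition~(ii) (it fails as soon as \emph{one} member fails to approve $o_{r^*}$, a property preserved as $S$ grows to $P$) are both essential.
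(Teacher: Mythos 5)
Your proposal is correct and matches the paper's proof in all essentials: both reduce the exponential quantification over groups to the canonical maximal pool of voters who approve a fixed candidate $c$ in a fixed round $r$ and have satisfaction below a threshold, justify this via monotonicity of $(\sigma,\tau)$-cohesiveness under supersets, and verify cohesiveness by the per-round counts of maximum common approvals. The only difference is bookkeeping: the paper enumerates the threshold $\lambda\in[\ell]$ directly and checks all $(\sigma,\tau)$ pairs at once by sorting the per-round maxima, whereas you enumerate $(\sigma,\tau)$ explicitly and set $\beta=\floor{\tau\sigma/n}$ --- a trivial re-parametrization that remains polynomial.
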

\begin{proof}
    Consider an arbitrary temporal election $E = (C, N, \ell, A)$
    and an outcome $\outcome$.
    Our goal is to check if there are integers 
    $\sigma \in [n]$, $\tau \in [\ell]$, a $(\sigma, \tau)$-cohesive subset of voters $S$, and a round $r\in [\ell]$ with $\bigcap_{i\in S}a_{i, r}\neq\varnothing$ such that $o_r\not\in \bigcap_{i\in S}a_{i, r}$ and for each $i\in S$ it holds that $\sat_i(S)<\floor{\tau \cdot \sigma/n}$. Our algorithm makes use of the observation that every superset of a $(\sigma, \tau)$-cohesive subset of voters is itself $(\sigma, \tau)$-cohesive.

    We go over all possibilities for $r\in [\ell]$ and $c\in C\setminus\{o_r\}$. For a given pair $(r, c)$, we go over all $\lambda\in [\ell]$ and identify the set of all voters $i\in N$
    who (i) approve $c$ in round $r$ 
    and (ii) have $\sat_i(\outcome)<\lambda$; denote this set by $S_{r, c, \lambda}$.
    If all voters in $S_{r, c, \lambda}$ approve $o_r$ in round $r$, we disregard this set. Otherwise,  
    for each $q\in [\ell]$ let $c_q$ be a candidate that receives the maximum number of approvals from voters in $S_{r, c, \lambda}$, and let $\alpha_q$ be the number of voters in $S_{r, c, \lambda}$ who approve $c_q$ in round $q$. We then sort $(\alpha_q)_{q\in [\ell]}$ is non-increasing order; denote the resulting list by $(\alpha'_q)_{q\in [\ell]}$.
    Finally, we check whether there exists a $q\in [\ell]$ such that 
    $\floor{q\cdot \alpha'_q/n}\ge \lambda$. If this is the case, the respective set of $q$ rounds witnesses that $S_{r, c, \lambda}$ is $(\alpha'_q, q)$-cohesive. Moreover, voters in $S_{r, c, \lambda}$ approve $c$ in round $r$ (so $\bigcap_{i\in S}a_{i, r}\neq\varnothing$, but $o_r\not\in \bigcap_{i\in S}a_{i, r}$), yet for each voter 
    $i\in S_{r, c, \lambda}$ its satisfaction is less than $\lambda\le \floor{q\cdot \alpha'_q/n}$, i.e., $S_{r, c, \lambda}$ witnesses that $\outcome$ violates EJR+. In this case, our algorithm reports that $\outcome$ does not provide EJR+.   

    Our algorithm goes over $m\ell^2$ triples $(r, c, \lambda)$, and performs
    $O(nm\ell+\ell\log \ell)$ operations for each such triple. Hence, it runs in time polynomial in the input size.

    Clearly, if our algorithm reports that EJR+ is violated, it provides an explicit witness. It remains to argue that if EJR+ is violated, our algorithm will be able to detect this. To see this, suppose that the violation of EJR+ is witnessed by a $(\sigma, \tau)$-cohesive set $S$ such that $\bigcap_{i\in S}a_{i, r}\neq\varnothing$, but $o_r\not\in \bigcap_{i\in S}a_{i, r}$, and 
     the satisfaction of each voter in $S$ is less than $\floor{\sigma \cdot \tau/n}$. Then there exists a candidate $c\in \bigcap_{i\in S}a_{i, r}$; note that $c\neq r$.
     Let $R$ be the set of rounds witnessing that $S$ is $(\sigma, \tau)$-cohesive. When our algorithm considers 
    the triple $(r, c, \lambda)$ with $\lambda=\floor{\sigma\cdot \tau/n}$, it constructs the set $S_{r, c, \lambda}$ and by definition it holds that $S\subseteq S_{r, c, \lambda}$. Then, for each $q\in R$ we have $\alpha_q\ge \sigma$, since at least $\sigma$ voters from $S$ approve the same candidate in round $q$. Hence, the sequence $(\alpha_q)_{q\in [\ell]}$ contains at least $|R|=\tau$ values that are 
    at least $\sigma$, and consequently our algorithm will be able to identify that EJR+ is violated.
\end{proof}
To prove that every temporal election has an EJR+ outcome, we show that the temporal variant of the 
$\varepsilon$-lsPAV rule (introduced by \citet{aziz2018complexityepjr} in the multiwinner setting and adapted to the temporal setting by \citet{chandak2024proportional}) with $\varepsilon=1/(2\ell^2)$
always outputs EJR+ outcomes and runs in polynomial time. 

\begin{definition}\label{def:ls-pav}
For a temporal election $E = (C,N,\ell,A)$,
the \emph{harmonic score} of an outcome $\outcome$ is defined as 
\(
    s_{H}(\outcome,E) =
    \sum_{i \in N}\sum_{j = 1}^{\sat_i(\outcome)}\frac{1}{j}.
\)
The $\varepsilon$-lsPAV rule outputs all outcomes
$\outcome$ in $C^\ell$ such that for every outcome $\outcome' \in C^\ell$
that agrees with $\outcome$ in $\ell-1$ rounds it holds that
$s_{H}(\outcome,E) + \varepsilon \ge s_{H}(\outcome',E)$.
\end{definition}

Setting $\varepsilon \ge 1/(2\ell^2)$ ensures that $\varepsilon$-lsPAV outputs an outcome in polynomial time (see~\citet{aziz2018complexityepjr} and \citet{chandak2024proportional}). Moreover, we show that $\varepsilon$-lsPAV provides EJR+ if we set $\varepsilon < 1/\ell^2$.

\begin{restatable}{theorem}{LSPAVsatisfiesEJRplus}\label{thm:lspav:ejr+}
    For every temporal election $E = (C, N, \ell, A)$, each output of $\varepsilon$-lsPAV with $\varepsilon < 1/\ell^2$ provides EJR+.
\end{restatable}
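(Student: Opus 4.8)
The plan is to argue by contradiction. Suppose $\outcome$ is an output of $\varepsilon$-lsPAV with $\varepsilon < 1/\ell^2$ that fails EJR+. Unpacking \Cref{def:ejr+}, this yields a witness: parameters $\sigma\in[n]$, $\tau\in[\ell]$, a $(\sigma,\tau)$-cohesive set $S$ with witnessing rounds $R$ (of size $\tau$) and target candidates $(d_q)_{q\in R}$, each approved by at least $\sigma$ members of $S$, together with an ``agreement'' round $r$ having a common candidate $c\in\bigcap_{i\in S}a_{i,r}$ such that $o_r\notin\bigcap_{i\in S}a_{i,r}$, while every $i\in S$ satisfies $\sat_i(\outcome) < \beta := \floor{\tau\sigma/n}$. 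The goal is to exhibit a single-round deviation that raises the harmonic score by strictly more than $\varepsilon$, contradicting the defining local-optimality condition of $\varepsilon$-lsPAV.

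First I would record the exact change in harmonic score from replacing $o_q$ by a candidate $d$ in round $q$: writing $\sat_i = \sat_i(\outcome)$, each voter who becomes newly served contributes $+1/(\sat_i+1)$ and each voter who stops being served contributes $-1/\sat_i$, so $\Delta(q,d) = \sum_{i:\,d\in a_{i,q},\,o_q\notin a_{i,q}} 1/(\sat_i+1) - \sum_{i:\,o_q\in a_{i,q},\,d\notin a_{i,q}} 1/\sat_i$, and local optimality asserts $\Delta(q,d)\le\varepsilon$ for every $q$ and $d$. The two ingredients are then (a) a lower bound on the available \emph{gain}: since $\sat_i+1\le\beta$ for all $i\in S$, serving a previously unsatisfied member of $S$ is worth at least $1/\beta$, and cohesiveness guarantees a supply of such members across the rounds of $R$; and (b) an upper bound on the \emph{displacement loss}: the cost of vacating rounds telescopes, $\sum_{q\in[\ell]}\sum_{i:\,o_q\in a_{i,q}} 1/\sat_i = |\{i : \sat_i>0\}|\le n$, so an averaging argument over $R$ locates a round whose loss is at most $n/\tau$.

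The quantitative heart is to pit the per-round gain $\sigma/\beta$ against this amortized loss $n/\tau$ and extract a margin above $1/\ell^2$. The floor in $\beta=\floor{\tau\sigma/n}$ supplies the slack: when $\tau\sigma/n\notin\mathbb{Z}$ we have $\beta\le(\tau\sigma-1)/n$, hence $\sigma/\beta - n/\tau \ge n/\bigl(\tau(\tau\sigma-1)\bigr) > 1/\ell^2 > \varepsilon$, producing a strictly improving swap; when $\tau\sigma/n\in\mathbb{Z}$ the two bounds coincide exactly, and I would instead close the gap by exploiting a voter with zero satisfaction, whose existence sharpens the loss bound to $|\{i:\sat_i>0\}|<n$ and again yields strict improvement. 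Throughout, the agreement round $r$ and the rounds of $R$ are the only rounds where serving $S$ is possible, so the search for a good deviation is confined to $R\cup\{r\}$.

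The step I expect to be the main obstacle is the coupling between gain and loss, which has no analogue in the multiwinner setting. There, EJR$^\textit{\,mw}$+ is established by \emph{adding} a commonly approved candidate and \emph{removing} an unrelated least-useful committee member, so the gain and the loss are chosen independently; in the temporal model every modification of a round both serves the cohesive group and displaces whatever candidate currently occupies that round, so I cannot optimize the two separately. The delicate part is therefore to select a single round at which $o_q$ is simultaneously cheap to displace and profitable to overwrite, i.e.\ few members of $S$ are already served there. Making this simultaneous choice work requires a careful averaging that trades displacement cost against already-served overlap, using that $\sum_{i\in S}\sat_i(\outcome_R)\le|S|(\beta-1)<\tau\sigma$ bounds the total wasted overlap, so that some round of $R$ admits the clean gain-versus-loss comparison above. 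I expect this balancing to be the technically heaviest and most error-prone part of the argument.
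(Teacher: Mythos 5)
Your skeleton matches the paper's proof: contradiction via a single-round swap, the same exact formula for $\Delta(q,d)$, a global loss budget of the form $\sum_{q}\sum_{i:\,o_q\in a_{i,q}}1/\sat_i(\outcome)\le|\{i:\sat_i(\outcome)>0\}|\le n$, and an averaging/pigeonhole step over the witnessing rounds. The genuine gap is in where you find the strict slack. In the divisible case $\tau\sigma/n\in\mathbb{Z}$ you invoke ``a voter with zero satisfaction,'' but no such voter is guaranteed: the EJR+ violation only gives $\sat_i(\outcome)\le\beta-1$ for $i\in S$, which is positive whenever $\beta\ge 2$, and voters outside $S$ are unconstrained. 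If every voter has positive satisfaction, your bounds yield only $\sum_{q\in R}\Delta_q\ge\tau\cdot(\sigma/\beta)-n=0$, which does not contradict $\varepsilon$-local optimality. The paper's slack comes from a different place entirely: it first disposes of $\sigma=|S|$, where the witness is outright an EJR violation, impossible because $\varepsilon$-lsPAV provides EJR \citep{chandak2024proportional}; then, for $\sigma<|S|$, it exploits the agreement round $r$, where \emph{all} of $S$ approves $c$, so $|S_r|\ge\sigma+1$ and the summed gain improves to $(\tau\sigma+1)/\floor{\tau\sigma/n}$, giving $\sum_{q\in R}\Delta_q\ge(\tau\sigma+1)/\floor{\tau\sigma/n}-n\ge n/(\tau\sigma)$ and, by pigeonhole, a round with $\Delta\ge n/(\tau^2\sigma)\ge 1/\ell^2>\varepsilon$. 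Your floor-based slack is correct when $n\nmid\tau\sigma$, but you have no valid substitute when $n\mid\tau\sigma$, and you never use $r$'s extra approver or the EJR fallback---both are needed.

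The second problem is the gain/loss coupling you rightly flag as the heaviest step, because your sketched repair fails quantitatively. Deducting the overlap $k_q=|\{i\in S_q:o_q\in a_{i,q}\}|$ from the gain and bounding $\sum_{q\in R}k_q\le\sum_{i\in S}\sat_i(\outcome_R)\le|S|(\beta-1)\le\tau\sigma-n$ yields only $\sum_{q\in R}\Delta_q\ge(\tau\sigma-\sum_q k_q)/\beta-n\ge n/\beta-n$, which is non-positive for every $\beta\ge 1$. The coupling dissolves once you credit overlap voters back on the loss side instead of debiting the gain: for $i\in S_q$ with $o_q\in a_{i,q}$, dropping $i$ from the gain costs $1/(\sat_i(\outcome)+1)$ while dropping $i$ from the loss saves $1/\sat_i(\outcome)\ge 1/(\sat_i(\outcome)+1)$. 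The paper implements exactly this add-and-subtract, obtaining $\Delta(o_q,c_q)\ge\sum_{i\in S_q}\tfrac{1}{\sat_i(\outcome)+1}-\sum_{i\in N:\,o_q\in a_{i,q}}\tfrac{1}{\sat_i(\outcome)}$, so the per-round gain is the full $\sigma/\beta$ (and $(\sigma+1)/\beta$ at $r$), the loss is counted over all of $N$, and your global budget argument then applies unchanged; no delicate round-selection balancing is required, only the final pigeonhole.
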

\begin{proof}
    Fix an arbitrary temporal election
    $E = (C, N, \ell, A)$,
    and let $\mathbf{o}$ be an output of $\varepsilon$-lsPAV on $E$.
    
    For a contradiction, assume that $\outcome$ does not provide EJR+.
    Then there exist $\sigma\in [n]$, $\tau\in [\ell]$, 
    a $(\sigma, \tau)$-cohesive subset of voters $S \subseteq N$, a round $r \in [\ell]$, and 
    a candidate $c \in C$ that is approved by all voters in $S$ in round $r$
    such that $c \neq o_r$,
    and for every $i \in S$
    it holds that $\sat_i(\mathbf{o}) < \lfloor \tau \cdot \sigma/n\rfloor$.
    If $\sigma = |S|$, this immediately implies that EJR is violated, which is a contradiction, 
    as \citet[Theorem 4.5]{chandak2024proportional} show that $\varepsilon$-lsPAV provides EJR for $\varepsilon>1/\ell^2$.
    Thus, let us assume that $\sigma < |S|$.

    Let $R$ be a set of $\tau$ rounds witnessing that $S$ is $(\sigma, \tau)$-cohesive, i.e., in each round $q\in R$ there are $\sigma$ voters in $S$ that approve a common candidate. 
    We can assume that $r\in R$, with $c_r=c$.
    For each $q\in R$, let $c_q$ be 
    the respective candidate and let $S_q$ be the subset of voters in $S$ who approve $c_q$ in round $q$; note that $|S_q|\ge \sigma$. 
    For each $q\in R$, let $\Delta(o_q, c_q)$ be the change in the harmonic score obtained by replacing $o_q$ with $c_q$ in round $q$.
    We have
    \begin{align*}
        \Delta(o_q,c_q) &=
            \sum_{\substack{i \in N \\ 
                o_q \not\in a_{i,q}, c_q \in a_{i,q}}}
            \frac{1}{\sat_i(\mathbf{o})+1} -
            \sum_{\substack{i \in N \\
                o_q \in a_{i,q}, c_q \not\in a_{i,q}}}
            \frac{1}{\sat_i(\mathbf{o})}\\
        &\ge
            \sum_{\substack{i \in S_q \\
                o_q \not\in a_{i,q}}}
            \frac{1}{\sat_i(\mathbf{o})+1} -
            \sum_{\substack{i \in N \setminus S_q \\
                o_q \in a_{i,q}, c_q \not\in a_{i,q}}}
            \frac{1}{\sat_i(\mathbf{o})}
        \tag{for every $i \in S_q$ we have $c_q \in a_{i,q}$}\\
        &\ge
            \sum_{\substack{i \in S_q \\
                o_q \not\in a_{i,q}}}
            \frac{1}{\sat_i(\mathbf{o})+1} -
            \sum_{\substack{i \in N \setminus S_q \\
                o_q \in a_{i,q}}}
            \frac{1}{\sat_i(\mathbf{o})}
        \tag{remove a condition in the second sum}\\
        &=
            \sum_{i \in S_q}
                \frac{1}{\sat_i(\mathbf{o})+1} -
            \sum_{\substack{i \in N \setminus S_q\\ o_q \in a_{i,q}}}
                \frac{1}{\sat_i(\mathbf{o})} -
            \sum_{\substack{i \in S_q\\ o_q \in a_{i,q}}}
                \frac{1}{\sat_i(\mathbf{o}) + 1}
        \tag{add and subtract the same}\\
        &=
            \sum_{i \in S_q}
                \frac{1}{\sat_i(\mathbf{o})+1} -
            \sum_{\substack{i \in N\\ o_q \in a_{i,q}}}
                \frac{1}{\sat_i(\mathbf{o})} +
            \sum_{\substack{i \in S_q\\ o_q \in a_{i,q}}}
                \left(\frac{1}{\sat_i(\mathbf{o})} -
                    \frac{1}{\sat_i(\mathbf{o}) + 1}\right)
        \tag{add and subtract the same}\\
        &\ge
            \sum_{i \in S_q}
                \frac{1}{\sat_i(\mathbf{o})+1} -
            \sum_{\substack{i \in N\\ o_q \in a_{i,q}}}
                \frac{1}{\sat_i(\mathbf{o})}
        \tag{remove a non-negative term}\\
        &\ge
            \frac{\sigma}{\lfloor \tau \cdot \sigma / n \rfloor} -
        \sum_{\substack{i \in N \\ o_q \in a_{i,q}}}
            \frac{1}{\sat_i(\mathbf{o})}.
        \tag{$|S_q| \ge \sigma$ and $\sat_i(\outcome) < \floor{\tau \cdot \sigma/n}$ for $i \in S_q \subseteq S$}
    \end{align*}
    For $q=r$ and $c_r=c$, we have $|S_r| =|S|\ge \sigma + 1$, so we can strengthen the last transition by replacing $\sigma$ with $\sigma+1$. We obtain
    \[
        \Delta(o_r,c) \ge \frac{\sigma + 1}{\lfloor \tau \cdot \sigma / n \rfloor} -
        \sum_{\substack{i \in N \\ o_r \in a_{i,r}}}
            \frac{1}{\sat_i(\mathbf{o})}.
    \]
    Next, we sum $\Delta(o_q,c_q)$ over all $q\in R$,
    including round $r$:
    \begin{align*}
        \sum_{q \in R} \Delta(o_q,c_q) &\ge
            \frac{\tau \cdot \sigma + 1}{\lfloor \tau \cdot \sigma / n \rfloor} -
            \sum_{q \in R} \sum_{\substack{i \in N \\ o_q \in a_{i,q}}}
                \frac{1}{\sat_i(\mathbf{o})}\\
        &\ge
            \frac{\tau \cdot \sigma + 1}{\lfloor \tau \cdot \sigma / n \rfloor} -
            \sum_{q \in R} \sum_{\substack{i \in N \\ o_q \in a_{i,q}}}
                \frac{1}{\sat_i(\mathbf{o}_R)}
        \tag{$\sat_i(\mathbf{o}) \ge \sat_i(\mathbf{o}_R)$}\\
        &=
            \frac{\tau \cdot \sigma + 1}{\lfloor \tau \cdot \sigma / n \rfloor} -
            \sum_{i \in N} \sum_{\substack{q \in R \\ o_q \in a_{i,q}}}
            \frac{1}{\sat_i(\mathbf{o}_R)}
        \tag{change the order of the summation}\\
        &=
            \frac{\tau \cdot \sigma + 1}{\lfloor \tau \cdot \sigma / n \rfloor} -
            \sum_{\substack{i \in N\\o_q\in a_{i, q}\text{ for some }q\in R}} 1
        \tag{by the definition of $\sat_i(\outcome_R)$}\\
        &\ge  \frac{\tau \cdot \sigma + 1}{\lfloor \tau \cdot \sigma / n \rfloor} - n\\
        &\ge  n + \frac{n}{\tau \cdot \sigma} - n\\
        &=  \frac{n}{\tau \cdot \sigma}.
    \end{align*}
        Thus, by the pigeonhole principle, there exists a round $q \in R$ in which switching from $o_q$ to $c_q$ would increase the PAV score by at least $n/(\tau^2\sigma)$.
        Since $n\ge \sigma$ and $\tau\le \ell$, we have
        $n/(\tau^2\sigma) \ge 1/\tau^2 \ge 1/\ell^2$, which contradicts the assumption that $\outcome$ was an outcome of $\varepsilon$-lsPAV with $\varepsilon < 1/\ell^2$.
\end{proof}

We note that $\varepsilon$-lsPAV always produces 
an outcome that satisfies condition~(i) in \Cref{def:ejr+}. Thus, we can obtain a stronger, but still satisfiable version of EJR+ by removing condition~(ii).
However, we decided to keep it for consistency with the multiwinner version of this axiom, i.e.,  EJR$^\textit{\,mw}$+.\footnote{This strengthening of EJR+  appears to be conceptually different from EJR$^\textit{\,mw}$+.
Indeed, one might wonder if this idea can be ported back to the multiwinner setting, 
resulting in an axiom that is distinct from EJR$^\textit{\,mw}$+.
This said, we feel that
\Cref{def:ejr+} is the `correct' adaptation of EJR$^\textit{\,mw}$+
to the temporal setting, as it has the same desirable features---satisfiability, efficient computability, and efficient verifiability---as EJR$^\textit{\,mw}$+.}

Finally, let us define a weaker version of the EJR+ axiom in the spirit of wEJR/wPJR/wJR formulations from \Cref{def:wJR:wPJR:wEJR:combined}.

\begin{definition}\label{def:wejr+}
Given a temporal election $E=(C, N, \ell, A)$, 
we say that an outcome $\outcome$ provides
\emph{weak extended justified representation\,+  (wEJR+)} if for all $\sigma\in [n]$, every $(\sigma, \ell)$-cohesive subset of voters $S$ and every round $r\in [\ell]$ with $\bigcap_{i\in S}a_{i, r}\neq\varnothing$ it holds that
\begin{itemize}
        \item[(i)] $\sat_i(\outcome) \ge \floor{\ell \cdot \sigma/n}$ for some $i\in S$, or
        \item[(ii)] $o_r \in \bigcap_{i \in S} a_{i,r}$.
    \end{itemize}
\end{definition}

We then show that wEJR+ as defined above fits well into our axiomatic hierarchy.

\begin{restatable}{proposition}{WEJRplusrelationship} \label{prop:WEJRplusrelationship}
   It holds that: (i) EJR+ $\implies$ wEJR+ and (ii) wEJR+ $\implies$ wEJR.
\end{restatable}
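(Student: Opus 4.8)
The plan is to prove both implications directly from the definitions, relying on two observations: that wEJR+ is precisely the restriction of EJR+ to $\tau = \ell$, and that a group agreeing in all $\ell$ rounds is automatically $(|S|, \ell)$-cohesive.

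For claim (i), I would simply specialize \Cref{def:ejr+} to $\tau = \ell$. With this choice, EJR+ quantifies over all $\sigma \in [n]$, all $(\sigma, \ell)$-cohesive sets $S$, and all rounds $r$ with $\bigcap_{i \in S} a_{i,r} \neq \varnothing$, and requires that either $\sat_i(\outcome) \ge \floor{\ell \cdot \sigma / n}$ for some $i \in S$, or $o_r \in \bigcap_{i \in S} a_{i,r}$. This is verbatim the statement of wEJR+ in \Cref{def:wejr+}. Hence any outcome satisfying EJR+ satisfies, in particular, these $\tau = \ell$ instances, which is exactly wEJR+. This step requires no real work.

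For claim (ii), suppose $\outcome$ provides wEJR+ and fix a subset $S \subseteq N$ that agrees in all $\ell$ rounds. First I would verify that $S$ is $(|S|, \ell)$-cohesive: choosing in each round $r$ a common approved candidate $c_r \in \bigcap_{i \in S} a_{i,r}$, the round set $R = [\ell]$ together with the suboutcome $(c_r)_{r \in [\ell]}$ witnesses cohesiveness with $\sigma = |S|$, since all $|S|$ voters in $S$ approve $c_r$ in round $r$. Then I argue by contradiction: assume $\sat_i(\outcome) < \floor{\ell \cdot |S| / n}$ for every $i \in S$. Applying wEJR+ with $\sigma = |S|$ to each round $r \in [\ell]$ (each of which has $\bigcap_{i \in S} a_{i,r} \neq \varnothing$), its condition~(i) is exactly the negated assumption and therefore fails, so its condition~(ii) must hold for every round, i.e.\ $o_r \in \bigcap_{i \in S} a_{i,r}$ for all $r \in [\ell]$. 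This forces $\sat_i(\outcome) = \ell$ for every $i \in S$, and since $|S| \le n$ implies $\floor{\ell \cdot |S| / n} \le \ell$, we contradict the assumption $\sat_i(\outcome) < \floor{\ell \cdot |S| / n}$. Thus wEJR must hold.

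The only subtle point---and the step I expect to be the main obstacle---is the disjunction in wEJR+ appearing in claim (ii). A naive single application of wEJR+ does not directly yield the wEJR satisfaction bound, because the axiom may be discharged ``for free'' through condition~(ii) in some round rather than through the satisfaction guarantee of condition~(i). The contradiction argument above resolves this: if condition~(ii) rescues the axiom in every round simultaneously, then the group is in fact fully satisfied (satisfaction $\ell$), which comfortably exceeds the required threshold $\floor{\ell \cdot |S| / n}$.
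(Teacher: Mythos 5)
Your proposal is correct and takes essentially the same approach as the paper: part~(i) is the same observation that wEJR+ is the $\tau = \ell$ specialization of EJR+, and part~(ii) uses the same key fact that if condition~(ii) held in every round then every voter in $S$ would have satisfaction $\ell \ge \floor{\ell \cdot |S|/n}$. The only difference is presentational---you argue by direct contradiction where the paper argues by contraposition (exhibiting a round $r$ with $o_r \notin \bigcap_{i \in S} a_{i,r}$ as an explicit witness that wEJR+ fails)---but the underlying argument is identical.
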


\section{Full Justified Representation} \label{sec:fjr}

The next axiom that we would like to adapt to the temporal setting is
\emph{full justified representation} (FJR$^\textit{\,mw}$) \citep{peters2021fjr}.
Similarly to EJR$^\textit{\,mw}$, it aims to select a size-$k$ committee $W$ that provides guarantees to every subset of voters $S$ in proportion to its size $|S|$ and the level of agreement. However, when measuring agreement, 
it does not require all voters in $S$ to approve
the same set of candidates.
Instead, voters in $S$ go over all subsets of candidates $T$ with $|T|\le k\cdot |S|/n$ and compute their satisfaction from $T$ as $\kappa^\textit{\,mw}_S(T)=\min_{i\in S}|T\cap a_i|$, where $a_i$ is the approval ballot of voter $i$.
FJR$^\textit{\,mw}$ then requires that 
for each choice of $T$,
some voter in $S$ approves at least $\kappa^\textit{\,mw}_S(T)$ members of~$W$.

FJR$^\textit{\,mw}$ is (strictly) stronger than EJR$^\textit{\,mw}$: intuitively, it provides EJR-like guarantees to a larger collection of voter subsets, so every outcome that 
provides FJR$^\textit{\,mw}$ also provides 
EJR$^\textit{\,mw}$, but the converse is not true.
Moreover, every election
has an FJR$^\textit{\,mw}$ outcome \citep{peters2021fjr}.

To translate FJR$^\textit{\,mw}$ to the temporal setting, a natural approach is to iterate over subsets of rounds $R$ (and the best possible selection of outcomes for these rounds) instead of subsets of candidates $T$.
This yields the following definition.

\begin{definition}\label{def:sFJR}
    Given a temporal election
    $E =(C, N, \ell, A)$, we say that
    an outcome $\outcome$ provides \emph{strong full justified representation} (sFJR)
    if for every subset of voters $S \subseteq N$,
    every subset of rounds $R \subseteq [\ell]$
    such that $|R| = \floor{\ell \cdot |S|/n}$, 
    and every outcome $\outcome'$,
    there is a voter $i \in S$ such that
    \(
        \sat_i(\outcome) \ge \min_{j \in S} \sat_j(\outcome'_R).
    \)
    Equivalently, for every subset of voters $S$
    there is an $i \in S$ such that
    \[
        \sat_i(\outcome) \ge 
        \max_{R \subseteq [\ell]:|R|= \floor{\ell \cdot |S|/n}} \
        \max_{\outcome'_R \in C^R} \
        \min_{j\in S} \
        \sat_j(\outcome'_R).
    \]
\end{definition}

However, just as for sEJR+, the direct approach results in a definition that is too demanding:
later in this section, we will show that
sFJR implies sEJR, 
which means that some elections have no sFJR outcomes. A simple way to modify this definition
is to consider the worst (rather than the best) subset of rounds $R$ for the set $S$ (while still allowing the voters in $S$ to choose outcomes for rounds in $R$). 

\begin{definition}\label{def:wFJR}
    Given a temporal election
    $E =(C, N, \ell, A)$, 
    an outcome $\outcome$ provides \emph{weak full justified representation (wFJR)}
    if for every subset of voters $S \subseteq N$
    there is a subset of rounds $R \subseteq [\ell]$ of size $|R| = \floor{\ell \cdot |S|/n}$
    such that for every other outcome $\outcome'$
    there is a voter $i \in S$ with
    \(
        \sat_i(\outcome) \ge \min_{j \in S} \sat_j(\outcome'_R).
    \)
    Equivalently, for every subset of voters $S$,
    there is an $i \in S$ such that
    \[
        \sat_i(\outcome) \ge 
        \min_{R \subseteq [\ell]:|R|= \floor{\ell \cdot |S|/n}} \
        \max_{\outcome'_R \in C^R} \
        \min_{j\in S} \
        \sat_j(\outcome'_R).
    \]
\end{definition}

The reason we placed the concept introduced in \Cref{def:wFJR}
at the `weak' level of our hierarchy is that requiring 
a way of jointly reaching a satisfaction threshold (via $\outcome'_R$) in {\em all} subsets of rounds of size $\floor{\ell \cdot |S|/n}$ is similar in spirit, but weaker than requiring total agreement in all rounds.
Indeed, in a moment we will show that wFJR implies wEJR.

Finally, for an FJR axiom in the spirit of EJR/PJR/JR from \Cref{def:JR:PJR:EJR:combined},
we would like to move away from considering all (subsets of) rounds. To this end, when determining what $S$ `deserves', we allow $S$ to designate a subset of rounds $T$, and then we pick the worst subset of rounds $R$, 
with size of $R$ dependent on $|T|$ (rather than $\ell$).
This way,
if the voters in $S$ can reach an acceptable compromise
over a large set of rounds $T$, they obtain 
strong satisfaction guarantees; however, if $T$
is small, they still get a (weaker) guarantee.

\begin{definition}
\label{def:FJR}
    Given a temporal election
    $E =(C, N, \ell, A)$, we say that
    an outcome $\outcome$ provides \emph{full justified representation (FJR)}
    if for every subset of voters $S \subseteq N$
    and every subset of rounds $T \subseteq [\ell]$
    there exists a subset $R \subseteq T$
    with $|R| = \floor{|T| \cdot |S|/n}$
    such that for every other outcome $\outcome'$
    there is a voter $i \in S$ with
    \(
        \sat_i(\outcome) \ge \min_{j \in S} \sat_j(\outcome'_R).
    \)
    Equivalently, for every subset $S\subseteq N$
    there is a voter $i \in S$ with $\sat_i(\outcome) \ge \max_{T \subseteq [\ell]}\mu_S(T)$, where
    \[
        \mu_S(T) =  
        \
        \min_{R \subseteq T:|R|= \floor{|T| \cdot |S|/n}} \
        \max_{\outcome'_R \in C^R} \
        \min_{j\in S} \
        \sat_j(\outcome'_R).
    \]
\end{definition}

The first argument that our definitions are `correct'
is that all of the expected implications hold.

\begin{samepage}
\begin{restatable}{proposition}{PropFJRimplications}\label{prop:fjr_implications}
    It holds that: (i) sFJR $\implies$ sEJR and FJR, (ii) FJR $\implies$ EJR and wFJR, and (iii) wFJR $\implies$ wEJR.
\end{restatable}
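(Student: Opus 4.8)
The plan is to prove the six implications in \Cref{prop:fjr_implications} by comparing the quantities that each axiom guarantees to a given voter subset $S$, exploiting the fact that all three FJR-style axioms, as well as sEJR/EJR/wEJR, are ultimately of the form ``some voter $i \in S$ has $\sat_i(\outcome)$ at least as large as a certain value $v_S$.'' Since the guaranteed value is always attained by the \emph{same} mechanism (picking the best voter in $S$), each implication reduces to showing a pointwise inequality between the corresponding target values for every $S$.

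First I would handle the two implications that compare the FJR-family axioms among themselves, namely sFJR $\implies$ FJR and FJR $\implies$ wFJR. For these, fix a subset $S$ and observe that the three bounds are built from the common inner expression $\phi(R) = \max_{\outcome'_R \in C^R}\min_{j \in S}\sat_j(\outcome'_R)$, and that $\phi$ is monotone under taking supersets of rounds (adding a round can only help the adversary-then-us value). For sFJR $\implies$ FJR, the sFJR target is $\max_{|R| = \floor{\ell|S|/n}}\phi(R)$, taken over all size-$\floor{\ell|S|/n}$ subsets of $[\ell]$; I would argue that this dominates $\mu_S(T)$ for every $T$, and in particular $\max_T \mu_S(T)$. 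The key step is that for any $T$, the set $R \subseteq T$ selected by the inner $\min$ in $\mu_S(T)$ has size $\floor{|T||S|/n} \le \floor{\ell|S|/n}$; I would extend it to a size-$\floor{\ell|S|/n}$ set $R' \supseteq R$ inside $[\ell]$ (using $\phi(R') \ge \phi(R)$ by monotonicity) so that the sFJR maximum over all such $R'$ is at least $\phi(R) \ge \mu_S(T)$. For FJR $\implies$ wFJR, I would take $T = [\ell]$ in the definition of FJR, which makes $\mu_S([\ell])$ literally equal the wFJR target $\min_{|R|=\floor{\ell|S|/n}}\phi(R)$, so $\max_T\mu_S(T) \ge \mu_S([\ell])$ gives the claim immediately.

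Next I would handle the three implications into the EJR-family, namely sFJR $\implies$ sEJR, FJR $\implies$ EJR, and wFJR $\implies$ wEJR. The idea in each case is that if $S$ agrees (totally) on $t$ rounds, then choosing $R$ to be (a subset of) those rounds and letting the voters in $S$ jointly approve the common candidate in each such round forces $\min_{j\in S}\sat_j(\outcome'_R)$ to equal $|R|$. For FJR $\implies$ EJR, given $S$ agreeing on a size-$t$ set of rounds, I would take $T$ to be that set, so that any $R \subseteq T$ witnesses full agreement and $\phi(R) = |R| = \floor{t|S|/n}$, whence $\mu_S(T) = \floor{t|S|/n}$ and the FJR guarantee yields $\sat_i(\outcome) \ge \floor{t|S|/n}$ for some $i \in S$, which is exactly EJR. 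The sFJR $\implies$ sEJR case is analogous but with the bound $\floor{\ell|S|/n}$: from an agreement set of size $t$ I would pick $R$ of size $\min(t,\floor{\ell|S|/n})$ inside it, giving $\phi(R) = \min(t,\floor{\ell|S|/n})$, so the sFJR maximum over size-$\floor{\ell|S|/n}$ round sets is at least $\min(t,\floor{\ell|S|/n})$, matching the sEJR requirement. The wFJR $\implies$ wEJR case restricts attention to $S$ agreeing in all $\ell$ rounds, where every size-$\floor{\ell|S|/n}$ set $R$ is an agreement set, so $\phi(R) = \floor{\ell|S|/n}$ for every such $R$ and the $\min$ over $R$ in the wFJR target is still $\floor{\ell|S|/n}$.

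The main obstacle I anticipate is the sFJR $\implies$ FJR step, specifically the careful bookkeeping of floors and the extension of the round set: one must verify that $\floor{|T||S|/n}\le\floor{\ell|S|/n}$ (immediate since $|T|\le\ell$) and, more delicately, that enlarging $R$ to a larger round set cannot decrease $\phi$ even though the outer sFJR operator is a \emph{maximum} over round sets of a fixed larger size rather than a minimum. The monotonicity claim $\phi(R')\ge\phi(R)$ for $R'\supseteq R$ is the crux here and deserves an explicit one-line justification: any outcome $\outcome'_R$ achieving the inner maximum on $R$ can be extended arbitrarily to $R'$ without lowering any voter's satisfaction, so the achievable $\min_{j\in S}\sat_j$ can only go up. Once this lemma is in hand, every implication is a short chain of inequalities, and I would present the six cases compactly, grouping the three FJR-internal comparisons and the three EJR-reductions as above.
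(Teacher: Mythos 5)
Your proof is correct and takes essentially the same route as the paper's: the same six reductions with the same witnesses (common-candidate suboutcomes on the agreement rounds for the implications into the EJR family, $T=[\ell]$ for FJR $\implies$ wFJR, and extension of round sets via monotonicity of satisfaction for sFJR $\implies$ FJR and for the case $t<\lfloor \ell\cdot|S|/n\rfloor$ of sFJR $\implies$ sEJR). The only differences are presentational: you isolate the monotone quantity $\phi(R)=\max_{\outcome'_R\in C^R}\min_{j\in S}\sat_j(\outcome'_R)$ as an explicit lemma and argue every implication as a direct chain of inequalities, whereas the paper proves sFJR $\implies$ FJR by contraposition and handles sFJR $\implies$ sEJR via an explicit case split on whether $t\ge\lfloor \ell\cdot|S|/n\rfloor$.
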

\end{samepage}

\begin{algorithm}[!t] 
    \caption{Greedy Cohesive Rule}
    \label{alg:gcr}
    \begin{algorithmic}[1]{
        \REQUIRE A temporal election $E = (C, N, \ell, A)$
        \STATE $\outcome \leftarrow (c,\dots,c)$ for arbitrary $c \in C$
        \STATE $V \leftarrow N$, \quad 
            $p \leftarrow 1$
        \WHILE{$V \neq \varnothing$}
            \STATE pick $S_p$ from $\arg\max_{S\subseteq V}\max_{T\subseteq [\ell]}\mu_S(T)$ 
            \label{alg:line_fjr_req}
            \STATE pick $T_p$ from $\arg\max_{T\subseteq[\ell]}\mu_{S_p}(T)$
            \STATE $V \leftarrow V \setminus S_p$, \quad
                $p \leftarrow p+1$
        \ENDWHILE
        \STATE $\pi \leftarrow$ a permutation of $[p]$ such that
        $|T_{\pi(1)}| \le \dots \le |T_{\pi(p)}|$ \label{alg:line_perm}
        \STATE $\mathcal{T} \leftarrow [\ell]$
        \FOR{$q \in [p]$}
            \STATE $i \leftarrow \pi(q)$
            \STATE $R \leftarrow$ arbitrary subset of $\floor{|T_i|\cdot|S_i|/n}$ rounds from $T_i \cap \mathcal{T}$
            \STATE pick $\outcome'_R$ from $\arg\max_{\outcome''_R\in C^{|R|}}\min_{j \in S_i} \sat_j(\outcome''_R)$
            \STATE $\outcome_R \leftarrow \outcome'_R$, \quad $\mathcal{T} \leftarrow \mathcal{T}\setminus R$
        \ENDFOR
        \RETURN $\outcome$ }
        \end{algorithmic}
\end{algorithm}

In the multiwinner setting, \citet{peters2021fjr}
show that the Greedy Cohesive Rule (GCR) \citep{bredereckF0N19} always outputs
FJR$^\textit{\,mw}$ outcomes.
We will now show that a variant of GCR satisfies FJR in the temporal setting, providing further evidence that our definition of FJR is `correct'.
Our starting point is an adaptation of GCR to the temporal setting by \citet{elkind2025verifying}, who show that their variant of the rule satisfies EJR.
We modify the algorithm of \citet{elkind2025verifying}, and show that our version satisfies FJR.

A formal description of GCR
is presented in \Cref{alg:gcr}.
Intuitively, during the first stage (lines 1--7) GCR partitions the voters into sets $S_1,S_2,\dots,S_p$: at each step $q \in [p]$ it picks a subset of voters $S_q$ with the  highest `demand' $\max_T\mu_S(T)$, adds it to the partition, and discards all voters in $S_q$.
During the second stage (lines 8--16), it iterates through the sets $S_1,S_2,\dots,S_p$
in order of the size of the associated
subset of rounds $T$, from smallest to largest; when processing $S_q,$ it selects outcomes for some of the rounds in $T_q$ so as to satisfy the `demand' of voters in $S_q$.

\begin{restatable}{theorem}{thmfjrgcr} \label{thm:fjr_gcr}
    For every temporal election $E = (C, N, \ell, A)$,
    every output of GCR provides FJR.
\end{restatable}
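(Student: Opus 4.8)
The plan is to fix an output $\outcome$ of GCR and an arbitrary subset of voters $S \subseteq N$, and exhibit a voter $i \in S$ with $\sat_i(\outcome) \ge \max_{T \subseteq [\ell]} \mu_S(T)$, which is exactly the FJR requirement in \Cref{def:FJR}. The central idea is to track where the voters of $S$ land in the greedy partition $S_1, \dots, S_p$ produced in the first stage. Let $S_q$ be the \emph{first} block (in the processing order of the first loop, i.e.\ the order of selection) that intersects $S$, and let $i$ be any voter in $S \cap S_q$. I would argue that this $i$ is the witness. The key structural fact I would establish first is that the greedy choice guarantees $\max_T \mu_{S_q}(T) \ge \max_T \mu_S(T)$: at the step when $S_q$ was selected, all voters of $S$ were still available in $V$ (since no earlier-selected block met $S$), so $S$ itself was a candidate in the $\arg\max$ on line~\ref{alg:line_fjr_req}; since GCR picked $S_q$ to maximize the demand, $S_q$'s demand is at least that of $S$.

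The heart of the argument is then to show that the chosen voter $i \in S_q$ actually \emph{achieves} satisfaction $\max_T \mu_{S_q}(T) = \mu_{S_q}(T_q)$ from the final outcome $\outcome$, via the block of rounds assigned to $S_q$ in the second stage. When $S_q$ is processed in the second loop, a set $R$ of $\floor{|T_q| \cdot |S_q|/n}$ rounds is chosen from $T_q \cap \mathcal{T}$, and on those rounds $\outcome$ is set to an $\outcome'_R$ maximizing $\min_{j \in S_q} \sat_j(\outcome'_R)$. So every voter in $S_q$---in particular $i$---gets satisfaction at least $\max_{\outcome'_R} \min_{j \in S_q} \sat_j(\outcome'_R)$ \emph{restricted to these rounds}, and since overall satisfaction only exceeds satisfaction on a subset of rounds, $\sat_i(\outcome) \ge \max_{\outcome'_R \in C^R} \min_{j \in S_q} \sat_j(\outcome'_R) \ge \mu_{S_q}(T_q)$, where the last inequality uses that $R$ is one admissible choice in the $\min$ defining $\mu_{S_q}(T_q)$. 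Chaining this with the demand comparison from the previous paragraph yields $\sat_i(\outcome) \ge \mu_{S_q}(T_q) = \max_T \mu_{S_q}(T) \ge \max_T \mu_S(T)$, as desired.

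The main obstacle, and the step requiring genuine care, is verifying that the second loop can always carve out a set $R$ of the required size $\floor{|T_q| \cdot |S_q|/n}$ from the rounds that remain \emph{unassigned} in $\mathcal{T}$, i.e.\ that $|T_q \cap \mathcal{T}| \ge \floor{|T_q| \cdot |S_q|/n}$ when $S_q$ is processed. This is precisely why the permutation $\pi$ on line~\ref{alg:line_perm} sorts the blocks by increasing $|T_q|$: I would prove a counting invariant showing that after processing any prefix of blocks, the number of rounds consumed is controlled. Concretely, the blocks $S_1, \dots, S_p$ partition the (relevant) voters, so $\sum_q |S_q| \le n$, and processing in order of nondecreasing $|T_q|$ ensures that when we reach a block with its $T_q$, every previously consumed round was needed by a block $S_{q'}$ with $|T_{q'}| \le |T_q|$; the total rounds consumed by all blocks with $T$-sets contained within (or overlapping) a given region is at most $\sum_{q'} \floor{|T_{q'}| \cdot |S_{q'}|/n}$, which a careful application of the $\floor{\cdot}$ bound and $\sum |S_{q'}| \le n$ keeps below the available budget. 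I would isolate this feasibility claim as a separate lemma and prove it by the sorted-order invariant, as it is the one place where the algorithm could conceivably fail and where the design choices (the sorting and the dependence of $|R|$ on $|T|$ rather than $\ell$) are essential.
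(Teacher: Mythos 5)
Your proposal is correct and follows essentially the same route as the paper's proof: the same greedy-exchange argument (all of $S$ is still in $V$ when the first block intersecting $S$ is selected, so $\max_T \mu_{S_q}(T) \ge \max_T \mu_S(T)$), the same observation that the assigned rounds $R$ are an admissible choice in the minimum defining $\mu_{S_q}(T_q)$, and the same feasibility lemma proved by the sorted-order counting argument ($\sum_i \floor{|T_i|\cdot|S_i|/n} > |T_q|$ together with $|T_i| \le |T_q|$ would force $\sum_i |S_i| > n$, contradicting disjointness). The only cosmetic difference is that you treat the blocks $S_q$ and arbitrary sets $S$ in one unified argument, whereas the paper handles them in two steps; the substance is identical.
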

\begin{proof}
    By construction, the sets $S_1,S_2,\dots,S_p$ are pairwise disjoint: for each $q \in [p]$, when we select $S_q$, we remove voters in $S_q$ from $V$, and therefore no set $S_i$ with $i > q$ can contain a voter from $S_q$.
    We will now argue that during the second stage of the algorithm (lines 8--16), when we consider the set $S_i$, the set of rounds $T_i$ contains $\floor{|T_i| \cdot |S_i|/n}$ rounds whose outcomes have not been set in previous iterations. For readability, we will assume that $\pi(q) = q$ for all $q \in [p]$.

    Suppose for a contradiction that this is not the case, and let $q \in [p]$ be the first index such that there are fewer than $\floor{|T_q| \cdot |S_q|/n}$ rounds from $T_q$ available.
    This means that strictly more than $|T_q| - \floor{|T_q| \cdot |S_q|/n}$ of the rounds in $T_q$ have been taken up in previous iterations, and hence
    \begin{equation*}
        \sum_{i=1}^{q} \left\lfloor\frac{|T_i| \cdot |S_i|}{n}\right\rfloor > |T_q|.
    \end{equation*}
    Since $|T_q| \geq |T_i|$ for all $i \leq q$, this inequality implies
    \begin{align*}
        \frac{|T_q|}{n} \cdot \sum_{i=1}^q |S_i|
        & \geq \sum_{i=1}^q \frac{|T_i| \cdot |S_i|}{n} 
         \ge \sum_{i=1}^q \left\lfloor\frac{|T_i| \cdot |S_i|}{n}\right\rfloor
         > |T_q|.
    \end{align*}
    Hence, $\sum_{i=1}^q |S_i| > n$, a contradiction with the fact that the sets $S_1,\dots,S_q$ are pairwise disjoint.

     Thus, when the algorithm processes $S_i$, it is presented with $\floor{|T_i|\cdot |S_i|/n}$ rounds from $T_i$ and selects outcomes for these rounds so as to maximize the minimum satisfaction of voters in $S_i$ (line~13). Hence, the satisfaction of every $j\in S_i$ from $\outcome$ is at least $\mu_{S_i}(T_i)$.
     By the choice of $T_i$ it means that $j$'s satisfaction is at least $\max_{T\subseteq [\ell]} \mu_{S_i}(T)$, 
     i.e., the FJR condition is satisfied for all voters in $S_i$. In particular, none of the sets $S_1,\dots,S_p$ can be a witness that FJR is violated.
    
    It remains to show that no subset of voters $S\neq S_1, \dots,S_p$ can witness that FJR is violated. Fix a subset $S \subseteq V$.
    Since $S_1,\dots,S_p$ form a partition of $N$, the set $S$ must intersect with at least one of them. 
    Let $i=\min\{q: S_q\cap S\neq\varnothing\}$, and let $j$ be some voter in $S\cap S_i$. As $j\in S_i$, 
    her satisfaction from $\outcome$ is at least $\max_{T\subseteq [\ell]}\mu_{S_i}(T)$. On the other hand,
    since \Cref{alg:gcr} picked $S_i$ over $S$ in line~4, we have $\max_{T'\subseteq [\ell]}\mu_{S}(T')\le \max_{T'\subseteq [\ell]}\mu_{S_i}(T')$. 
    Hence, $j$'s satisfaction from $\outcome$ is at least $\max_{T'\subseteq [\ell]}\mu_{S}(T')$. As $j\in S$,  
    the set $S$ cannot be a witness that FJR is violated.
\end{proof}

The running time of GCR is not polynomial in the input size, so it remains open whether an FJR outcome can be computed in polynomial time; this problem is also open in the multiwinner case.

\begin{remark}
The fact that there always exists an outcome that provides FJR is also implied by Theorem 11 of \citet{MasPieSko-2023-GeneralProportionality}, because our notion of FJR is implied by their analogous notion in a more general setting. We discuss this connection in detail in \Cref{app:general-prop}.
\end{remark}

\section{Full Proportional Justified Representation}\label{sec:fpjr}

In the definition of FJR, the guarantee provided to each group $S$ is of the form `some voter in $S$ has high satisfaction', i.e., it is a lower bound on $\max_{i\in S}\sat_i(\outcome)$; this approach is also used in the definition of EJR. In contrast, 
the guarantee provided by PJR is of the form `collectively, voters in $S$ have high satisfaction', i.e., it is a lower bound on $\sat_S(\outcome)$. One can combine the FJR approach to deciding what each group deserves with a PJR-style 
collective guarantee; we refer to the resulting notion as FPJR (the multiwinner version of FPJR was very recently proposed by \citet{kalayci2025fpjr}).
Based on the discussion in Sections~\ref{sec:ejr-plus} and~\ref{sec:fjr}, we define strong FPJR, FPJR, and weak FPJR.
\begin{definition}\label{def:fpjr}
    Given a temporal election $E =(C,N , \ell, A )$,
    an outcome $\outcome$
    provides
    \emph{strong full proportional justified representation (sFPJR)}
    (resp., \emph{full proportional justified representation (FPJR)}, or
    \emph{weak full proportional justified representation (wFPJR)})
    if for every $S \subseteq N$
    we have $\sat_S(\outcome) \ge \rho^s(S)$
    (resp., $\sat_S(\outcome) \ge \rho(S)$, or
    $\sat_S(\outcome) \ge \rho^w(S)$),
    where
    \begin{align*}
    \rho^s(S) &= \displaystyle
        \max_{R \subseteq [\ell]:|R|= \floor{\ell \cdot |S|/n}} \
        \max_{\outcome'_R \in C^R} \
        \min_{i\in S} \
        \sat_i(\outcome'_R), \\
    \rho(S) &= \displaystyle 
            \max_{T \subseteq [\ell]} \
            \min_{R \subseteq T:|R|=\floor{|T| \cdot |S|/n}} \
            \max_{\outcome'_R \in C^R} \
            \min_{i \in S} \
            \sat_i(\outcome'_R), \\
    \rho^w(S) &= \displaystyle
            \min_{R \subseteq [\ell]:|R|=\floor{\ell \cdot |S|/n}} \
            \max_{\outcome'_R \in C^R} \
            \min_{i\in S} \
            \sat_i(\outcome'_R).
    \end{align*}
\end{definition}

Similarly to EJR,
FPJR is implied by FJR and implies PJR
on each level of our axiomatic hierarchy.

\begin{restatable}{proposition}{PropFPJRimplications}\label{prop:fpjr:implications}
    It holds that: (i) sFJR $\implies$ sFPJR $\implies$ sPJR and FPJR, (ii) FJR $\implies$ FPJR $\implies$ PJR and wFPJR, and (iii) wFJR $\implies$ wFPJR $\implies$ wPJR.
\end{restatable}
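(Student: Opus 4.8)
The plan is to treat the three kinds of implications separately. The horizontal steps from the FJR-family to the FPJR-family (sFJR $\implies$ sFPJR, FJR $\implies$ FPJR, wFJR $\implies$ wFPJR) become immediate once one notices that the right-hand sides of \Cref{def:sFJR}, \Cref{def:FJR}, and \Cref{def:wFJR} are exactly the functions $\rho^s$, $\rho$, and $\rho^w$ from \Cref{def:fpjr}. Indeed, for any $S$ and any $i\in S$ we have $a_{i,r}\subseteq\bigcup_{j\in S}a_{j,r}$ for every $r$, hence $\sat_i(\outcome)\le\sat_S(\outcome)$ and so $\sat_S(\outcome)\ge\max_{i\in S}\sat_i(\outcome)$. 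Each FJR-variant guarantees that some $i\in S$ satisfies $\sat_i(\outcome)\ge\rho^s(S)$ (respectively $\rho(S)$, $\rho^w(S)$), i.e., $\max_{i\in S}\sat_i(\outcome)$ is at least that value; combined with the previous inequality this gives $\sat_S(\outcome)\ge\rho^s(S)$ (respectively $\rho(S)$, $\rho^w(S)$), which is the matching FPJR-variant.

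For the steps into the PJR-family (sFPJR $\implies$ sPJR, FPJR $\implies$ PJR, wFPJR $\implies$ wPJR), I would show that whenever $S$ satisfies the relevant PJR hypothesis, the corresponding $\rho$-value already meets the PJR bound, so the FPJR guarantee suffices. Suppose $S$ agrees on a set $Q$ of $t$ rounds, and pick $c_r\in\bigcap_{i\in S}a_{i,r}$ for each $r\in Q$. For sPJR, with $k=\floor{\ell\cdot|S|/n}$, I choose $R$ to contain $\min(t,k)$ rounds of $Q$ (padding with arbitrary rounds of $[\ell]$ when $t<k$, which is possible since $k\le\ell$) and set $\outcome'_r=c_r$ on the agreement rounds; every voter in $S$ then has satisfaction at least $\min(t,k)$, so $\rho^s(S)\ge\min(t,\floor{\ell\cdot|S|/n})$. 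For PJR, I take $T=Q$ in the outer maximum defining $\rho$: every $R\subseteq Q$ of size $\floor{t\cdot|S|/n}$ consists of agreement rounds, so $\outcome'_r=c_r$ gives each voter satisfaction $|R|=\floor{t\cdot|S|/n}$, whence $\rho(S)\ge\floor{t\cdot|S|/n}$. For wPJR (where $t=\ell$), the same assignment on any admissible $R$ gives each voter satisfaction $\floor{\ell\cdot|S|/n}$, so $\rho^w(S)=\floor{\ell\cdot|S|/n}$.

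The remaining vertical steps sFPJR $\implies$ FPJR and FPJR $\implies$ wFPJR reduce to the pointwise inequalities $\rho^s(S)\ge\rho(S)\ge\rho^w(S)$, since all three guarantees lower-bound $\sat_S(\outcome)$ by the respective $\rho$-value. Fixing $S$ and writing $g(R)=\max_{\outcome'_R\in C^R}\min_{i\in S}\sat_i(\outcome'_R)$, the one fact I need is that $g$ is monotone under inclusion: if $R\subseteq R'$ then $g(R)\le g(R')$, because any optimal assignment on $R$ extends to $R'$ without lowering any voter's satisfaction. Then $\rho^w(S)$ is the minimum of $g$ over the $\floor{\ell\cdot|S|/n}$-element subsets of $[\ell]$, which is precisely the $T=[\ell]$ term of the maximum defining $\rho(S)$, so $\rho(S)\ge\rho^w(S)$. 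For $\rho^s(S)\ge\rho(S)$, I let $T^*$ attain the maximum in $\rho(S)$ and take any $R_0\subseteq T^*$ with $|R_0|=\floor{|T^*|\cdot|S|/n}$; then $g(R_0)\ge\rho(S)$, and since $\floor{|T^*|\cdot|S|/n}\le\floor{\ell\cdot|S|/n}\le\ell$, I can enlarge $R_0$ to a set $R_1\subseteq[\ell]$ of size exactly $\floor{\ell\cdot|S|/n}$, so that $\rho^s(S)\ge g(R_1)\ge g(R_0)\ge\rho(S)$ by monotonicity.

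I expect the only mildly delicate point to be the sFPJR $\implies$ sPJR case, where the $\min(t,\floor{\ell\cdot|S|/n})$ term forces one to check that enough agreement rounds and enough padding rounds coexist (which follows from $\floor{\ell\cdot|S|/n}\le\ell$). Everything else is routine: the FJR-to-FPJR steps are a one-line consequence of $\sat_S(\outcome)\ge\max_{i\in S}\sat_i(\outcome)$, and the vertical steps follow immediately once the monotonicity of $g$ is recorded.
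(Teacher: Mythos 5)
Your proposal is correct and follows essentially the same route as the paper's proof: the horizontal steps via $\sat_S(\outcome)\ge\max_{i\in S}\sat_i(\outcome)$, the PJR bounds via choosing agreement candidates on a suitably padded/restricted set of rounds (your padding argument is the paper's case split on $t$ versus $\floor{\ell\cdot|S|/n}$), and the vertical steps via extending round sets and monotonicity of satisfaction. Your only cosmetic deviation is packaging the vertical implications as the pointwise inequalities $\rho^s(S)\ge\rho(S)\ge\rho^w(S)$ with an explicit monotonicity lemma for $g$, where the paper argues the sFPJR $\implies$ FPJR step by contraposition---the underlying extension argument is identical.
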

Propositions~\ref{prop:fpjr:implications} and~\ref{prop:sjr:not-satisfiable} imply that some elections have no sFPJR outcomes.
However, sFPJR turns out to be satisfiable 
for elections in $\calE_{\ge 1}$ as long as the number of rounds $\ell$ 
is divisible by the number of voters $n$.%
\footnote{This turns out to be the best we can hope for in terms of satisfiability of the strong versions of our axioms since for sEJR, and hence every axiom that implies it, even in this restricted case there are elections where it is unsatisfiable. We prove this in \Cref{app:fpjr}. \label{footnote_fpjr}}
Specifically, we show that 
in this case the simple \emph{Serial Dictatorship Rule} (SDR)
\citep{satterthwaite1981strategy}
outputs sFPJR outcomes.

SDR operates by iterating through the rounds
in $[\ell]$ while cycling through voters in $N$;
in each round $r \in [\ell]$
it picks an outcome approved by the current voter, i.e., it selects an arbitrary
$o_r \in a_{r\!\!\!\mod\! n,r}$
(for elections in $\calE_{\ge 1}$ the set $a_{r\!\!\!\mod\! n,r}$ cannot be empty, so such $o_r$ has to exist).

\begin{theorem} \label{thm:sfpjr-sdr}
    For every $n$-voter $\ell$-round temporal election $E \in \calE_{\ge 1}$
    such that $n|\ell$,
    every output of SDR provides sFPJR.
\end{theorem}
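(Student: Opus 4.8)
The plan is to show that the Serial Dictatorship Rule produces, for each voter, enough individually-approved rounds to meet the per-group collective guarantee $\rho^s(S)$. The key structural fact is that $n \mid \ell$, so writing $\ell = kn$, SDR cycles through all $n$ voters exactly $k$ times; in each cycle, since $E \in \calE_{\ge 1}$, every voter $i$ gets a round in which the selected candidate is approved by $i$. Hence each individual voter $i$ satisfies $\sat_i(\outcome) \ge k = \ell/n$. More precisely, I would let $R_i \subseteq [\ell]$ be the set of rounds in which voter $i$ is the ``dictator'' (there are exactly $k$ such rounds), and note that by construction $o_r \in a_{i,r}$ for every $r \in R_i$, so these $k$ rounds each contribute to $\sat_i(\outcome)$; moreover the sets $R_1, \dots, R_n$ partition $[\ell]$.

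Next I would upper-bound $\rho^s(S)$ and show the SDR outcome already meets it. Fix $S \subseteq N$ and set $t = \floor{\ell \cdot |S|/n}$. The quantity $\rho^s(S)$ is the best value, over all choices of $t$ rounds $R$ and all suboutcomes $\outcome'_R$, of $\min_{i \in S}\sat_i(\outcome'_R)$; in particular $\rho^s(S) \le t = \floor{\ell|S|/n}$, since a suboutcome over $t$ rounds can give any single voter satisfaction at most $t$. So it suffices to prove $\sat_S(\outcome) \ge \floor{\ell \cdot |S|/n}$ for the SDR outcome $\outcome$. For this, I would restrict attention to the rounds ``owned'' by voters in $S$: the set $\bigcup_{i \in S} R_i$ has size exactly $k|S| = \ell|S|/n$ (an integer, since $n \mid \ell$), and in every such round the dictator is a member of $S$ who approves the selected candidate, so each of these rounds contributes to $\sat_S(\outcome)$. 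Therefore $\sat_S(\outcome) \ge \ell|S|/n = \floor{\ell|S|/n} \ge \rho^s(S)$, which is exactly the sFPJR condition for $S$.

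The main thing to verify carefully is the bound $\rho^s(S) \le \floor{\ell|S|/n}$: one must check that no clever choice of the $t$ designated rounds and the accompanying suboutcome can push the minimum individual satisfaction above $t$, which follows immediately because a suboutcome over $|R| = t$ rounds contributes at most $t$ to any fixed voter's satisfaction, so the minimum over $S$ is also at most $t$. I expect this step to be the only genuinely substantive point, and it is short; the rest is bookkeeping with the partition $R_1, \dots, R_n$. I would close by noting that divisibility $n \mid \ell$ is used exactly twice---to guarantee each voter gets precisely $k = \ell/n$ dictator rounds (so that the worst-off voter in any $S$ is not shortchanged by an uneven cycle) and to ensure $\ell|S|/n$ is an integer matching the floor---which explains why the hypothesis cannot simply be dropped.
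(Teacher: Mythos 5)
Your proposal is correct and follows essentially the same route as the paper's proof: lower-bounding $\sat_S(\outcome)$ by the $\ell\cdot|S|/n$ dictator rounds owned by members of $S$, and upper-bounding $\rho^s(S)$ by $\floor{\ell\cdot|S|/n}$ via the trivial bound $\sat_i(\outcome'_R)\le|R|$. The only difference is that you spell out the partition $R_1,\dots,R_n$ explicitly, which the paper compresses into one sentence; note also that the final comparison only needs $\sat_S(\outcome)\ge \ell|S|/n \ge \floor{\ell|S|/n}$, so integrality of $\ell|S|/n$ is a convenience rather than a second essential use of $n\mid\ell$.
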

\begin{proof}
    Consider an election $E = (C, N, \ell, A)\in \calE_{\ge 1}$ with $n | \ell$.
    Let SDR return the outcome $\outcome$, and fix a subset of voters $S$.
    Since each voter selects the outcome of exactly $\ell/n$ rounds, we have
    $\sat_S(\outcome) \geq \ell \cdot |S|/{n}$. Also, for every $R\subseteq [\ell]$, every $\outcome'_R$ and every $i\in N$ we have $\sat_i(\outcome'_R)\le |R|$ and hence $\rho^s(S)\le \floor{\ell\cdot |S|/n}$. Thus, $\sat_S(\outcome)\ge \rho^s(S)$, which is what we wanted to prove.
\end{proof}

Interestingly, if an election with $n|\ell$ is additionally in $\calE_{=1}$, then the outputs of SDR also provide wEJR.
This is because for 
elections from $\calE_{=1}$
we can show that wPJR
(which by \Cref{prop:fpjr:implications} is implied by sFPJR)
is equivalent to wEJR.

\begin{restatable}{proposition}{PropwPJRiffwEJR}\label{prop:wpjr_iff_wEJR}
    For every temporal election $E \in \calE_{=1}$,
    it holds that wPJR $\iff$ wEJR.
\end{restatable}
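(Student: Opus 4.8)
The plan is to prove the two directions separately, with the forward direction being immediate and the reverse direction exploiting the rigidity of $\calE_{=1}$.

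First, wEJR $\implies$ wPJR holds in \emph{every} election (not just in $\calE_{=1}$): if $S$ agrees in all $\ell$ rounds and some voter $i \in S$ satisfies $\sat_i(\outcome) \ge \floor{\ell \cdot |S|/n}$, then since $\{i\} \subseteq S$ we have $\sat_S(\outcome) \ge \sat_i(\outcome) \ge \floor{\ell \cdot |S|/n}$, which is exactly the wPJR guarantee demanded for $S$. So it suffices to establish the converse in the restricted domain.

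The substance is therefore wPJR $\implies$ wEJR, where I would invoke the defining feature of $\calE_{=1}$: each voter approves exactly one candidate per round. Fix a subset $S$ that agrees in all $\ell$ rounds; the key claim is that all voters in $S$ cast identical ballots. Indeed, in each round $r$ the sets $a_{i,r}$ for $i \in S$ are singletons (by the $\calE_{=1}$ restriction) with nonempty common intersection (since $S$ agrees in round $r$), so they must all equal the same singleton $\{c_r\}$. Consequently $\bigcup_{i\in S} a_{i,r} = \{c_r\} = a_{i,r}$ for every $i \in S$, and hence $\sat_S(\outcome) = |\{r : o_r = c_r\}| = \sat_i(\outcome)$ for every $i \in S$. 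In other words, the collective satisfaction of such a group coincides exactly with the individual satisfaction of each of its members.

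Given this identity the conclusion is immediate: if $\outcome$ provides wPJR, then $\sat_S(\outcome) \ge \floor{\ell \cdot |S|/n}$, and by the identity above every $i \in S$ satisfies $\sat_i(\outcome) = \sat_S(\outcome) \ge \floor{\ell \cdot |S|/n}$, so in particular some voter in $S$ meets the threshold, which is precisely wEJR. I do not anticipate a genuine obstacle; the only point worth flagging is that the whole argument hinges on the $\calE_{=1}$ restriction. With more than one approved candidate per round, a group could agree in every round without having identical ballots, and then $\sat_S(\outcome)$ could strictly exceed each $\sat_i(\outcome)$, so the equivalence would break down---consistent with the fact that the arrow from wPJR to wEJR in \Cref{fig:axiom_map} is dashed and labelled $\calE_{=1}$ rather than holding in general.
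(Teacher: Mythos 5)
Your proof is correct and follows essentially the same route as the paper: the forward direction via $\sat_S(\outcome) \ge \sat_i(\outcome)$, and the reverse direction by observing that in $\calE_{=1}$ a group agreeing in every round must consist of voters with identical (singleton) ballots, forcing $\sat_S(\outcome) = \sat_i(\outcome)$ for all $i \in S$. No gaps; your closing remark about why the equivalence fails outside $\calE_{=1}$ is a correct observation consistent with the separation in \Cref{fig:axiom_map}.
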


\section{Core Stability} \label{sec:core}
To complete the analysis of proportionality notions in the temporal setting, we consider the classic concept of \emph{core stability}, which originates in cooperative game theory. Core stability for multiwinner voting was defined by \citet{aziz2017jr}, and it remains open if all multiwinner elections admit outcomes in the core.
This concept captures resistance to deviations: a subset of voters can deviate by selecting outcomes in a subset of rounds whose size is proportional to the size of the group, and an outcome is stable if there is no deviation that benefits all members of the group. Strong core stability, core stability and weak core stability differ in how the voters are allowed to choose this subset of rounds.

\begin{definition}
    Given a temporal election $E =(C, N, \ell, A )$,
    an outcome $\outcome$
    provides
    \begin{itemize}[leftmargin=0.3in]
    \item 
    \emph{strong core stability (sCore)}
    if for each $S\subseteq N$, each $R\subseteq [\ell]$ with $|R|=\floor{\ell\cdot |S|/n}$ 
    and each $\outcome'_R\in C^R$ there is an $i\in S$ with $\sat_i(\outcome)\ge \sat_i(\outcome'_R)$;
    \item
    \emph{core stability (Core)}
    if for each $S\subseteq N$ and each $T\subseteq [\ell]$ 
    there is $R\subseteq T$ with 
    $|R|\!=\!\floor{|T| \cdot |S|/n}$ such that
    for each $\outcome'_R\in C^R$ there is an $i\in S$ with $\sat_i(\outcome)\ge \sat_i(\outcome'_R)$;
    \item
    \emph{weak core stability (wCore)}
    if for each $S\subseteq N$ there is an $R\subseteq [\ell]$ with $|R|=\floor{\ell\cdot |S|/n}$ such that
    for each $\outcome'_R\in C^R$ there is an $i\in S$ with $\sat_i(\outcome)\ge \sat_i(\outcome'_R)$.
    \end{itemize}
\end{definition}

In the multiwinner voting setting,
every outcome in the core provides FJR. 
This is also true in the temporal setting, 
for each level of our axiomatic hierarchy.

\begin{restatable}{proposition}{PropCoreImplications}\label{prop:core_implications}
    It holds that: (i) sCore $\implies$ sFJR and Core, (ii) Core $\implies$ FJR and wCore, and (iii) wCore $\implies$ wFJR.
\end{restatable}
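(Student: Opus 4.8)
The plan is to derive all six implications from a single elementary observation relating the core conditions to the FJR conditions, together with two short quantifier manipulations that stay entirely within the core hierarchy. The key observation is that for any subset of rounds $R$, any suboutcome $\outcome'_R$, and any voter $i \in S$ we trivially have $\sat_i(\outcome'_R) \ge \min_{j \in S}\sat_j(\outcome'_R)$. Hence any voter $i$ whose existence is guaranteed by a core condition---namely one with $\sat_i(\outcome) \ge \sat_i(\outcome'_R)$---automatically satisfies the corresponding FJR condition $\sat_i(\outcome) \ge \min_{j \in S}\sat_j(\outcome'_R)$.

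First I would dispatch the three ``core implies FJR'' implications: sCore $\implies$ sFJR, Core $\implies$ FJR, and wCore $\implies$ wFJR. In each case the two axioms share the same quantifier structure over $S$, over $T$ (where present), and over $R$, differing only in that the core demands $\sat_i(\outcome) \ge \sat_i(\outcome'_R)$ while FJR demands the weaker $\sat_i(\outcome)\ge\min_{j\in S}\sat_j(\outcome'_R)$. Fixing the same choice of $R$ (and of $T$ at the non-weak levels) that the core condition provides and applying the observation above yields the FJR condition verbatim for the very same witnessing voter. This needs no new bookkeeping on rounds, since the size constraints on $R$ coincide across the two hierarchies at each level.

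Next I would handle Core $\implies$ wCore, which is a pure specialisation: instantiating the Core condition with $T = [\ell]$ forces $|R| = \floor{\ell\cdot|S|/n}$ and reproduces the wCore condition exactly.

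The one implication that needs a genuine argument is sCore $\implies$ Core, and this is where I expect the only real obstacle: the two conditions use different round-set sizes, $\floor{\ell\cdot|S|/n}$ for sCore versus $\floor{|T|\cdot|S|/n}$ for Core, with the latter being no larger since $|T|\le\ell$. Given $S$ and $T$, I would pick an arbitrary $R\subseteq T$ of size $k := \floor{|T|\cdot|S|/n}$ and argue that it witnesses the Core condition. For any suboutcome $\outcome'_R$, extend $R$ to a set $R'\subseteq[\ell]$ of size $K := \floor{\ell\cdot|S|/n}$ (possible because $K\le\ell$ and at least $\ell-k$ rounds lie outside $R$) and extend $\outcome'_R$ to $\outcome'_{R'}$ by assigning arbitrary candidates on $R'\setminus R$; since adding rounds never decreases satisfaction, $\sat_i(\outcome'_{R'})\ge\sat_i(\outcome'_R)$ for every $i$. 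Applying sCore to $(S,R',\outcome'_{R'})$ yields a voter $i$ with $\sat_i(\outcome)\ge\sat_i(\outcome'_{R'})\ge\sat_i(\outcome'_R)$, which is precisely what the Core condition demands for the fixed $R$. The only points to verify carefully are the floor inequality $k\le K$ and that enough rounds remain outside $R$ to carry out the extension, both of which follow from $|S|\le n$ and $|T|\le\ell$.
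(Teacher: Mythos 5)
Your proposal is correct and takes essentially the same route as the paper's proof: the same elementary observation $\sat_i(\outcome'_R) \ge \min_{j \in S}\sat_j(\outcome'_R)$ dispatches all three core-to-FJR implications, the same specialisation $T = [\ell]$ gives Core $\implies$ wCore, and your extension of $R$ to a superset $R' \subseteq [\ell]$ of size $\floor{\ell \cdot |S|/n}$ combined with monotonicity of satisfaction under adding rounds is exactly the paper's argument for sCore $\implies$ Core, stated directly where the paper argues by contraposition. The side conditions you flag (the floor inequality from $|T| \le \ell$ and the availability of enough rounds for the extension) are verified correctly, so there are no gaps.
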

The satisfiability of Core and wCore (but not sCore) remains open in the temporal setting.

\section{Separation Results}
\label{sec:independence}

Finally, we show that, apart from the implications already established, no further implication holds among the axioms in Figure~\ref{fig:axiom_map}. To rule out all remaining potential implications, we identify twelve specific pairs of axioms for which we construct outcomes that satisfy one notion but not the other.
These counterexamples exhaust all possible implication relationships among the axioms we study.

\begin{proposition} \label{prop:separation_prop}
For any two axioms $A, B$ in Figure \ref{fig:axiom_map}, $A$ implies $B$ if and only if there exists a path (a sequence of arrows) from $A$ to $B$.
\end{proposition}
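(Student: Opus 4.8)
The proof splits into the two directions of the biconditional. The forward direction (path $\implies$ implication) is essentially bookkeeping: every solid arrow in \Cref{fig:axiom_map} is an implication already proved in \Cref{prop:vs:implications,prop:ejr+:implications,prop:WEJRplusrelationship,prop:fjr_implications,prop:fpjr:implications,prop:core_implications} (together with the basic implications recorded in \Cref{sec:preliminaries}), and since $\implies$ is defined via containment of the sets of providing outcomes it is transitive. Hence any directed path composes into a single implication, so the first step is merely to check that the arrows of the figure coincide with these statements.

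The content is the reverse direction: for every pair $(A,B)$ admitting no directed path from $A$ to $B$, I must produce a temporal election and an outcome that provides $A$ but not $B$. To keep the number of explicit instances small, the plan is to exploit a monotonicity (lifting) observation: if an outcome $\outcome$ provides an axiom $A'$ and there is a path from $A'$ to $A$, then $\outcome$ provides $A$; and if $\outcome$ fails $B'$ and there is a path from $B$ to $B'$, then $\outcome$ fails $B$. Consequently a single outcome that provides a ``source'' axiom $A'$ while violating a ``sink'' axiom $B'$ simultaneously refutes $A \implies B$ for the entire rectangle of pairs $\{(A,B) : A' \text{ reaches } A,\ B \text{ reaches } B'\}$. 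I would therefore select twelve separating outcomes that are extremal in this sense---each providing as strong an axiom as possible while violating as weak an axiom as possible---so that their rectangles jointly cover every non-path pair.

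Concretely, I would organize the twelve counterexamples along the structure of the diagram, which decomposes into three horizontal layers (the strong, plain, and weak variants) and several vertical columns (the Core/FJR family, the EJR+/EJR family, the FPJR/PJR family, and JR). The separations needed fall into three types: (a) vertical ones, showing each weak axiom fails to imply its plain version and each plain axiom fails to imply its strong version; (b) horizontal ones, showing weaker columns do not imply stronger ones (e.g.\ JR $\not\Rightarrow$ PJR, PJR $\not\Rightarrow$ EJR, EJR $\not\Rightarrow$ EJR+, EJR $\not\Rightarrow$ FJR, FJR $\not\Rightarrow$ Core); and (c) the incomparability of the FPJR/PJR column with the EJR+/EJR column, in particular that sFPJR does not imply EJR (already flagged in the introduction) and, conversely, that the EJR-type axioms do not entail the FPJR-type collective guarantees. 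Wherever the figure caption asserts the stronger ``even in $\calE_{=1}$'' claim I would place the instance in $\calE_{=1}$; the one unavoidable exception is the wPJR-versus-wEJR separation, since by \Cref{prop:wpjr_iff_wEJR} these coincide on $\calE_{=1}$, so that counterexample must use a voter approving more than one candidate in some round---precisely why that edge is drawn dashed rather than solid.

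The main obstacle, and where nearly all the work lies, is verifying the ``satisfied'' side of each counterexample. Failing a weak axiom is easy, as it needs only one violating group; but certifying that an outcome provides a strong axiom such as sFJR, sFPJR, sEJR+, or sCore requires checking its guarantee against \emph{every} subset of voters, \emph{every} admissible subset of rounds, and \emph{every} alternative suboutcome $\outcome'_R$ at once, and the instances must be engineered so that this demanding quantifier holds everywhere while one group remains underserved relative to the target weak axiom. Once the twelve instances are in hand, the final step is a purely finite, mechanical check that their rectangles cover all non-path pairs of the diagram---tedious but routine given the lifting observation.
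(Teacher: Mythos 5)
Your architecture matches the paper's exactly: the forward direction is transitivity of the already-established implications, and the reverse direction is handled in the paper (via \Cref{prop:independence}) by precisely your ``rectangle'' device---each counterexample provides an extremal source axiom while failing an extremal sink axiom, so that twelve pairs suffice to cover every non-path pair. You even identify the one structural subtlety correctly: the separation refuting wPJR-type axioms implying wEJR cannot live in $\calE_{=1}$ because of \Cref{prop:wpjr_iff_wEJR}, and the paper indeed places that single instance (an outcome providing sFPJR but not wEJR) in $\calE_{\geq 1}$, which is exactly the dashed edge of \Cref{fig:axiom_map}. One wrinkle: two of the paper's elections certify two incomparable source axioms simultaneously (wCore together with wEJR+, and Core together with EJR+), which is needed to cover the whole weak and plain layers because the Core family does not reach the EJR+ family in the diagram; your framing accommodates this, but you would have to either engineer such double certifications or spend extra instances.

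The genuine gap is that the proposal contains none of the counterexamples. For a separation statement the constructions \emph{are} the proof: the paper exhibits ten concrete elections ($E_1,\dots,E_9 \in \calE_{=1}$ and $E_0 \in \calE_{\geq 1}$) witnessing the twelve extremal pairs (wCore \& wEJR+ versus JR; Core \& EJR+ versus sJR; sJR versus wPJR; sEJR+ versus wFPJR; sFJR versus wCore; sCore versus wEJR+; sFPJR versus wFJR, wEJR+, and EJR; sFPJR versus wEJR), and certifying the strong side is nontrivial---for instance, showing that the outcome in Claim~vi provides sCore requires a case analysis over coalition sizes $|S| \in \{4,5,6\}$ with counting arguments bounding how much satisfaction any deviating suboutcome can distribute, and Claim~ii requires showing that Core and EJR+ give \emph{no} positive guarantee to any group in that election. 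You correctly predict that ``nearly all the work lies'' in these verifications, but predicting the work is not doing it: without the ten engineered instances and their checks, the reverse direction of the proposition is unproved, as is the final coverage argument that the chosen rectangles exhaust all non-path pairs. The plan is the right one---indeed it is the paper's plan---but as submitted it is a proof outline, not a proof.
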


\section{Conclusion}

We have explored several candidates for further strengthening all existing concepts studied in the temporal voting literature so far.
We adapted EJR+, FJR, FPJR, and the Core from multiwinner elections to the temporal setting.
We showed the polynomial-time computability of EJR+ outcomes, the existence of FJR outcomes, and the polynomial-time computability of sFPJR outcomes in a specific instance.
We also establish non-existence results for stronger variants of these concepts, including the Core, and analyze the implications (or lack thereof) between the considered axioms.
Our work provides a comprehensive analysis of proportionality concepts in temporal voting.

Possible directions for future work include exploring domain restrictions for notions where proportionality is not guaranteed, or studying extensions of these concepts in the broader setting of participatory budgeting.

\bibliographystyle{plainnat}
\bibliography{abb,bib}

\newpage

\appendix

\section{Relation to General Approval Voting}
\label{app:general-prop}

In this appendix,
we discuss the relation between our EJR and FJR axioms and
analogous notions defined by \citet{MasPieSko-2023-GeneralProportionality}
for a more general setting.

\subsection{Notation for General Approval Voting}

Let us first introduce a notation for the \emph{general approval voting} setting.
A general approval election instance is a tuple $\calI = (\calC,\calN,\calF,\calA)$,
where $\calC$ is a set of candidates,
$\calN$ is a set of $n$ voters,
$\calF \subseteq 2^\calC$ is a family of \emph{feasible} subsets of the set of candidates,
and $\calA = (A_i)_{i \in \calN}$ is the approval profile,
where $A_i \subseteq \calC$ is a ballot of each voter $i \in \calN$, i.e., a subset of candidates that he or she approves.
We assume that the family $\calF$
is closed under inclusion, i.e.,
if $Y \subseteq X \subseteq C$
and we know that $X \in \calF$,
then also $Y \in \calF$.
In particular, this means that necessarily $\varnothing \in \calF$.
Let $\bar{\calF}$ be a subset of \emph{maximal} feasible subsets, i.e.,
subsets $X \in \calF$,
for which there is no $Y \in \calF$
such that $X \subsetneq Y$.
Then, a \emph{voting rule}
is a function $f$ that for every instance
$\calI = (\calC,\calN,\calF,\calA)$
returns a non-empty subset of maximal feasible subsets of candidates
$f(I) \subseteq \bar{\calF}$.

For a subset of voters $S \subseteq \calN$
and a feasible subset of candidates $X \in \calF$,
the satisfaction of $S$ for $X$
is a number of candidates in $X$
that are approved by at least one voter in $S$, i.e.,
$\sat_S(X)=|\{c \in X : c \in \bigcup_{i \in N}A_i\}|$.
If $S = \{i\}$ for some $i \in \calN$,
we drop the brackets and write simply $\sat_i(X)$.

To show that temporal voting is a special case of general approval voting,
for every temporal election $E=(C,N,\ell,A)$
let us define a corresponding general approval voting instance
$\calI_E = (\calC_E, \calN_E, \calF_E, \calA_E)$.
The set of candidates in $\calI_E$ will be
the set of all pairs of candidates and rounds, i.e.,
$\calC_E = \{(c,r) : c \in C, r \in [\ell]\}$.
The set of voters will be identical, i.e., $\calN_E = N$.
Next, for every voter $i \in N$,
candidate $c \in C$, and round $r \in [\ell]$,
we will say that $i$ approves $(c,r)$,
if and only if, it approves candidate $c$ in round $r$, i.e.,
$\calA_E = (A_i)_{i \in N}$ where
$A_i = \{(c,r) \in \calC_E : c \in a_{i,r}\}$.
Finally, the feasible subsets of $\calC_E$ are exactly those
in which every round appears at most once, i.e.,
$\calF_E = \{X \subseteq \calC_E : ((c,r),(c',r) \in X) \Rightarrow (c = c')\}$.
Then, there is a one-to-one correspondence
between the outcomes $\outcome \in C^\ell$ in temporal election $E$
and maximal feasible subsets of $\calC_E$, i.e.,
subsets $X \in \calF_E$ in which every round appears exactly once.
In turn, all feasible subsets, not necessarily maximal,
correspond to suboutcomes $\outcome_R \in C^R$ for some subsets of rounds $R \subseteq [\ell]$, where the selected candidates in the remaining rounds are not specified
(including suboutcome $\outcome_\varnothing$,
in which we do not specify a selected candidate
in any round).
Note that for every $R \subseteq [\ell]$ suboutcome $\outcome_R$
and a corresponding feasible set $X$
we have $\sat_S(\outcome_R) = \sat_S(X)$,
for every subset of voters $S \subseteq N$.

\subsection{EJR}

Let us now discuss the relation between EJR
as defined in \Cref{def:JR:PJR:EJR:combined}
and \emph{base extended justified representation (BEJR)}
introduced by
\citet{MasPieSko-2023-GeneralProportionality}
for the general approval voting setting.
The definition of BEJR is as follows.

\begin{definition}
\label{def:bejr}
    Given a general approval election instance 
    $I = (\calC,\calN,\calF,\calA)$
    a group of voters $S \subseteq \calN$
    \emph{deserves} a satisfaction level of $\gamma$,
    if for each feasible set $X \in \calF$, it holds that either
    \begin{itemize}
        \item[i)] there exists $Y \subseteq \bigcap_{i \in S} A_i$
            with $|Y| \ge \gamma$ such that $X \cup Y \in \calF$, or
        \item[ii)] $|S|/n > \gamma / (|X| + \gamma)$.
    \end{itemize}
    Then, a feasible outcome $W \in \calF$ satisfies
    \emph{base extended justified representation (BEJR)}
    if for every $\gamma \in \mathbb{N}$
    and every group of voters $S \subseteq \calN$
    that deserves a satisfaction level $\gamma$
    there is a voter $i \in S$ for which
    $\sat_i(W) \ge \gamma$.
\end{definition}

Intuitively, we decide what level of satisfaction $S$ deserves
by looking at feasible sets $X$
that we can interpret as possible counterproposals
for a solution that $S$ would like.
In order to prove that it deserves a satisfaction of $\gamma$,
subset $S$ has to show that
each such counterproposal $X$
is either compatible with the demands of $S$ (condition i)
or is too large (condition ii).
To show the former,
$S$ has to find their own proposal $Y$ of size at least $\gamma$,
universally approved in $S$,
that can be combined with $X$ in an outcome,
i.e., $X \cup Y \in \calF$.
For the latter,
the proportion between the size of $|X|$
and the number of voters outside of $S$
(so all potential voters behind the counterproposal)
must be larger than
the ratio between $\gamma$ and the number of voters in $S$, i.e.,
$|X|/(n - |S|) > \gamma / |S|$.
This inequality is equivalent to 
$|S|/n > \gamma / (|X| + \gamma)$ that appears in the definition.


As we will show later in this subsection,
in the temporal voting setting
BEJR implies EJR.%
\footnote{This observation was made also by \citet{chandak2024proportional}.}
However, the converse is not true,
i.e., EJR is strictly weaker than BEJR.
To see this consider the following example.

\begin{example}
\label{example:ejr:droop:hare}
    Let $E = (C,N,\ell,A)$ be
    a temporal election with $\ell=1$ round,
    2 candidates $C=\{c_1,c_2\}$,
    3 voters $N=\{1,2,3\}$, and
    an approval profile in which
    voters $1$ and $2$ approve candidate $c_1$
    while voter $3$ approves $c_2$, i.e.,
    $a_{1,1} = a_{2,1} = \{c_1\}$ and $a_{3,1} = \{c_2\}$.
    Observe that the outcome $\outcome = (c_2)$ satisfies EJR
    (and even sCore and sEJR+).
    Indeed, voter $3$ has maximal possible satisfaction,
    and for the subset of voters $S=\{1,2\}$ we have
    $\floor{\ell \cdot |S|/n} = \floor{2/3} = 0$,
    hence they do not receive any satisfaction guarantee from EJR.
    
    However, according to BEJR,
    they do deserve satisfaction of $\gamma=1$.
    To see this, consider all possible counterproposals $X$, i.e.,
    all feasible subsets of candidates $\calC_E$, i.e.,
    $\varnothing, \{(c_1, 1)\}, \{(c_2,1)\} \in \calF_E$.
    For $X = \varnothing$ or $X = \{(c_1,1)\}$,
    we can clearly find a set $Y = \{(c_1,1)\}$
    that is compatible with $X$, i.e., $Y \cup X \in \calF_E$
    and we have that $Y \in A_1 \cap A_2$ as well as $|Y| = 1 = \gamma$.
    On the other hand, if $X = \{(c_2, 1)\}$,
    then observe that $|S|/n = 2/3 > 1/2 = \gamma/(|X| + \gamma)$.
    Hence, $S$ deserves a satisfaction of $1$,
    which is not met by $\outcome$.
    Thus, outcome $\outcome$ violates BEJR.
\end{example}

Informally, the difference between BEJR and EJR
can be explained as follows.
For a subset of voters $S \subseteq N$
that agree in certain $t$ rounds,
EJR gives a satisfaction guarantee of $\gamma$
if the fraction of voters that $S$ constitute
is greater or equal to the fraction that $\gamma$ is of $t$, i.e.,
$|S|/n \ge \gamma/t$ or, equivalently, 
\begin{equation}
\label{eq:ejr:droop:hare:comparison:1}
    \gamma \le t \cdot |S|/n.
\end{equation}
However, for BEJR it is enough that $S$ is large enough
so that no other group of voters
can make claim to $t - \gamma + 1$ rounds.
So, the fraction of voters that are inside $S$
has to be strictly larger than $\gamma/(t+1)$
(as then no other group would constitute a fraction
of at least $(t - \gamma + 1)/(t + 1)$).
This gives us inequality 
$|S|/n > \gamma/(t+1)$ or, equivalently,
\begin{equation}
\label{eq:ejr:droop:hare:comparison:2}
    \gamma < (t + 1) \cdot |S|/n.
\end{equation}
This is a significant difference
as in many cases $\gamma$ may fail
Inequality~\eqref{eq:ejr:droop:hare:comparison:1}
but at the same time satisfy
Inequality~\eqref{eq:ejr:droop:hare:comparison:2},
which, for example, is the case in \Cref{example:ejr:droop:hare}.

The above difference is very similar to the difference
between two formulations of 
\emph{Proportionality for Solid Coalitions (PSC)} axiom
in the setting of
multiwinner voting with ordinal preferences
(see e.g.,~\citet{aziz2020expanding}).
Roughly speaking, the axiom says that for every subset of voters $S$
that all agree that candidates in a certain subset $X$
are better than every candidate outside of this subset,
at least $\min(\eta, |X|)$ candidates
need to be select to the committee from $X$.
How big is $\eta$ depends on the formulation.
In \emph{Hare-PSC}, we set $\eta = \floor{k \cdot |S|/n}$,
where $k$ is the size of the committee
and $n$ the total number of voters,
which resembles the approach of EJR
and Inequality~\eqref{eq:ejr:droop:hare:comparison:1}.
In \emph{Droop-PSC}, we set $\eta = \ceil{(k + 1) \cdot |S|/n} - 1$, i.e.,
we take the largest integer strictly smaller than $(k + 1) \cdot |S|/n$,
which matches Inequality~\eqref{eq:ejr:droop:hare:comparison:2}.

Following this distinction,
we can define \emph{Droop-EJR} for temporal voting as follows.

\begin{definition}
    Given a temporal election
     $E =( C ,N , \ell, A )$
    an outcome $\outcome$
    provides \emph{Droop extended justified representation (Droop-EJR)}
    if for every $t>0$ and every subset of voters $S \subseteq N$
    that agree in at least $t$ rounds,
    there exists a voter $i \in S$ such that
    $\sat_{i}(\outcome) \geq \ceil{(t+1) \cdot |S|/n} - 1$.
\end{definition}

Observe that Droop-EJR implies EJR
as to every subset of voters $S \subseteq N$
it gives the same or larger satisfaction guarantee.
We will now show
that BEJR implies Droop-EJR,
which then means that BEJR implies EJR as well.
The converse statement,
i.e., that Droop-EJR implies BEJR,
is also true under an additional assumption
that there is no voter that in some round
approves all of the candidates.
We need this assumption
as BEJR can give even higher satisfaction guarantees to groups of voters
that universally approve all available candidates in some rounds.

\begin{proposition}
\label{prop:bejr:ejr:equivalence}
    For every temporal election $E = (C,N,\ell,A)$,
    every outcome $\outcome$ that provides BEJR
    provides Droop-EJR as well.
    If additionally for every voter $i \in N$
    and every round $r \in [\ell]$ it holds that
    $a_{i,r} \neq C$,
    then every outcome $\outcome$ that provides Droop-EJR
    provides BEJR as well.
\end{proposition}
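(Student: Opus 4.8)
The statement has two directions. The plan is to prove each by unpacking the definitions and matching the satisfaction thresholds via the algebraic correspondence encoded in Inequalities~\eqref{eq:ejr:droop:hare:comparison:1} and~\eqref{eq:ejr:droop:hare:comparison:2}. Throughout I would work through the general-approval instance $\calI_E$ associated with $E$, using the established fact that suboutcomes $\outcome_R$ correspond to feasible sets $X$ with $\sat_S(\outcome_R) = \sat_S(X)$, and that maximal feasible sets correspond to full outcomes.

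For the first direction (BEJR $\implies$ Droop-EJR), I would fix an outcome $\outcome$ providing BEJR, a threshold $t > 0$, and a subset $S \subseteq N$ that agrees in at least $t$ rounds; let $\gamma = \ceil{(t+1)\cdot |S|/n} - 1$ be the Droop guarantee. The goal is to show that $S$ \emph{deserves} satisfaction $\gamma$ in the sense of \Cref{def:bejr}, since BEJR then yields a voter $i \in S$ with $\sat_i(\outcome) \ge \gamma$, which is exactly Droop-EJR. To establish this, I would take an arbitrary feasible counterproposal $X$ and split into two cases. If $|X| \le t - \gamma$, then because $S$ agrees in at least $t$ rounds, at most $|X|$ of those rounds can be blocked by $X$, leaving at least $t - |X| \ge \gamma$ agreement rounds in which $S$ can insert a universally approved candidate; collecting these as $Y$ gives $|Y| \ge \gamma$ with $X \cup Y \in \calF_E$ (since the new rounds are distinct from those used by $X$), so condition~(i) holds. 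If instead $|X| \ge t - \gamma + 1$, then the defining inequality $\gamma \le (t+1)\cdot|S|/n - 1 < (t+1)\cdot|S|/n$, i.e. $\gamma(t+1) < \ldots$, should be manipulated to yield $|S|/n > \gamma/(|X|+\gamma)$ and hence condition~(ii). The key is checking that $\gamma < (t+1)|S|/n$ (immediate from the ceiling definition) combined with $|X| \ge t-\gamma+1$ forces $|X| + \gamma \ge t+1$, so that $\gamma/(|X|+\gamma) \le \gamma/(t+1) < |S|/n$.

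For the converse (Droop-EJR $\implies$ BEJR, under the assumption $a_{i,r} \neq C$ for all $i, r$), I would fix an outcome providing Droop-EJR and a group $S$ that deserves satisfaction $\gamma$ in the BEJR sense, and aim to produce a $t$ with $S$ agreeing in at least $t$ rounds and $\gamma \le \ceil{(t+1)|S|/n}-1$, so that Droop-EJR supplies the needed voter. The natural choice is to let $t$ be the number of rounds in which $S$ agrees. The task is then to show that the BEJR deservingness of $\gamma$ forces both $\gamma \le t$ and the Droop inequality $\gamma < (t+1)|S|/n$. I would argue this by feeding BEJR a well-chosen counterproposal: the assumption $a_{i,r} \neq C$ guarantees that in each of the $t$ agreement rounds there is a candidate \emph{not} approved by some voter in $S$, and in the remaining $\ell - t$ rounds the set $S$ by definition has no universally approved candidate, so one can build a feasible $X$ that blocks $S$'s options exactly in the rounds outside $S$'s agreement and in as many agreement rounds as needed. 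Choosing $X$ to have size $t - \gamma$ (or the critical boundary size) and checking that condition~(i) then fails for this $X$ forces condition~(ii), which rearranges to the Droop inequality.

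\textbf{Main obstacle.} The delicate step is the converse direction, specifically the construction of the counterproposal $X$ that simultaneously rules out condition~(i) of BEJR while having small enough size to make condition~(ii) informative. This is exactly where the hypothesis $a_{i,r} \neq C$ is indispensable: without it, an agreement round in which $S$ unanimously approves \emph{every} candidate cannot be blocked by any feasible $X$, so BEJR could demand strictly more than Droop-EJR, breaking the implication. I expect the bookkeeping relating $|X|$, $\gamma$, and $t$ through the floor and ceiling functions --- translating $|S|/n > \gamma/(|X|+\gamma)$ into $\gamma < (t+1)|S|/n$ at the boundary $|X| = t - \gamma + 1$ --- to be the part requiring the most care, though it is ultimately a routine rearrangement once the right $X$ is fixed.
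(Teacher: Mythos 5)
Your plan follows the paper's proof in both directions. In the forward direction your case split on $|X|$ is just the contrapositive form of the paper's argument (the paper assumes condition~(ii) fails, derives $|X| \le t - \gamma$ from $|S|/n \le \gamma/(|X|+\gamma)$ and $\gamma = \ceil{(t+1)\cdot|S|/n}-1$, and then builds $Y$ in the untouched agreement rounds); your version is sound as sketched, since $\gamma < (t+1)\cdot|S|/n$ is immediate from the ceiling and $|X| \ge t-\gamma+1$ gives $\gamma/(|X|+\gamma) \le \gamma/(t+1) < |S|/n$.

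In the converse direction your skeleton matches the paper's (build a blocking counterproposal using $a_{i,r} \neq C$, force condition~(ii), rearrange), but two details of your sketch would break the computation if implemented literally. First, you propose an $X$ that ``blocks $S$'s options exactly in the rounds outside $S$'s agreement and in as many agreement rounds as needed.'' Blocking non-agreement rounds is not merely unnecessary but harmful: any admissible $Y \subseteq \bigcap_{i \in S} A_i$ can only live in agreement rounds anyway, so padding $X$ with non-agreement rounds inflates $|X|$ up to $\ell+1-\gamma$, and condition~(ii) then yields only $|S|/n > \gamma/(\ell+1)$, too weak to produce the Droop inequality relative to $t$. The paper's $X$ sits entirely inside the agreement set $R$. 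Second, your stated size $|X| = t-\gamma$ fails to rule out condition~(i): it leaves $\gamma$ free agreement rounds, in which $S$ can still assemble a universally approved $Y$ of size $\gamma$, so nothing is forced. The critical size is $|X| = t+1-\gamma$ (which you do name in your obstacle paragraph), leaving only $\gamma - 1$ free agreement rounds, so condition~(i) fails and condition~(ii) gives $|S|/n > \gamma/(t+1)$, i.e.\ $\gamma \le \ceil{(t+1)\cdot|S|/n}-1$. Finally, you correctly flag that $\gamma \le t$ must be established first (it guarantees $1 \le t+1-\gamma \le t$, so the blocking set exists), but you do not supply the argument; the paper gets it by feeding BEJR the counterproposal $X = \varnothing$, for which condition~(ii) would read $|S|/n > \gamma/\gamma = 1$ and is thus impossible, forcing condition~(i) and hence at least $\gamma$ agreement rounds. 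With these three repairs your proof coincides with the paper's.
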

\begin{proof}
    Let us start by showing that BEJR implies Droop-EJR.
Take an arbitrary temporal election $E = (C,N,\ell,A)$
and outcome $\outcome$ that does not provide Droop-EJR.
Thus, there is a set of voters $S \subseteq N$
that agree in a subset of rounds $R$, with $|R|=t$,
and for each $i \in S$ we have
$\sat_i(\outcome) < \ceil{(t+1) \cdot |S|/n} - 1$.
We will show that according to BEJR
$S$ deserves a satisfaction guarantee of at least
$\gamma = \ceil{(t+1) \cdot |S|/n} - 1$,
which will mean that $\outcome$ fails BEJR as well.

To this end, take an arbitrary subset $X \in \calF_E$
such that $|S|/n \le \gamma / (|X| + \gamma)$.
Let us denote the subset of rounds that appear in $X$
by $R_X = \{ r : (c,r) \in X\}$.
We want to show that there is a subset $Y \subseteq \calC_E$
universally approved by voters in $S$
such that $|Y| \ge \gamma$ and
$X \cup Y \in \calF_E$.
Note that the last condition is equivalent to saying
that $R_X$ and the subset of rounds that appear in $Y$
are disjoint.

Observe that $|S|/n \le \gamma / (|X| + \gamma)$ is equivalent to
$(|X|+\gamma)/\gamma \le n/|S|$, from which we get
$|X|/\gamma \le n/|S| - 1$.
Thus,
\begin{align*}
    |X| &\le \gamma \cdot \frac{n}{|S|} - \gamma \\
    \tag{as $\gamma = \ceil{(t+1) \cdot |S|/n} - 1$}
    &= \left(\left\lceil(t+1) \cdot \frac{|S|}{n}\right\rceil - 1 \right)\cdot\frac{n}{|S|} - \gamma\\
    \tag{since $\ceil{x}<x+1$ for every $x \in \mathbb{R}$}
    &< \left((t+1) \cdot \frac{|S|}{n}\right)\cdot\frac{n}{|S|} - \gamma\\
    &= t+1 -\gamma.
\end{align*}
However, as $|X|$ needs to be an integer,
this actually implies that $|X|\le t - \gamma$
and consequently that $|R_X \cap R| \le t - \gamma$.
Hence, there is subset of rounds $R_Y \subseteq R$ such that
$R_Y \cap R_X = \varnothing$ and $|R_Y| = \gamma$.
Since $S$ agrees in every round in $R$,
for each round $r \in R_Y \subseteq R$
we can find candidate $c_r$
such that $c_r \in \bigcap_{i \in S} a_{i,r}$.
Then, we set $Y = \{(c_r, r) : r \in R_Y\}$,
which meets all of the desired criteria.
This means that $S$ deserves a satisfaction guarantee of $\gamma$
according to BEJR,
which concludes the proof that BEJR implies Droop-EJR.

Now, let us show that Droop-EJR implies BEJR
when there is no voter that in some round approves all of the candidates.
To this end,
take an arbitrary temporal election $E = (C,N,\ell,A)$
such that $a_{i,r} \neq C$,
for every $i \in N$ and $r \in [\ell]$,
and an arbitrary outcome $\outcome$
that does not provide BEJR.
This means that there exists a subset of voters $S$
that deserves a satisfaction guarantee $\gamma$,
but for every voter $i \in S$
it holds that $\sat_i(\outcome) < \gamma$.
We will show that $S$ receives guarantee $\gamma$ by Droop-EJR as well.

Let $R \subseteq [\ell]$ be a subset of all rounds
in which voters in $S$ agree and let $t = |R|$.
We will show that necessarily
$\gamma \le \ceil{(t+1) \cdot |S|/n} - 1$,
which will prove the thesis.

First let us show that $t \ge \gamma$.
To this end, take $X = \varnothing \in \calF_E$.
According to the definition of BEJR,
since $S$ deserves satisfaction $\gamma$,
there must be a subset of candidates $Y \subseteq \calC_E$
universally approved in $S$, such that $|Y| \ge \gamma$ and
$Y = Y \cup \varnothing \in \calF_E$.
By definition, for every $(c,r) \in Y$
it must hold that $c \in a_{i,r}$,
for every $i \in S$.
Hence, voters in $S$ agree on at least $\gamma$ rounds.
Thus, indeed $t \ge \gamma$.

Now, let us take arbitrary subset of rounds
$R_X \subseteq R$ such that
$|R_X| = t + 1 - \gamma$.
For every round $r \in R_X$,
let $c_r \in C \setminus a_{i,r}$
be a candidate not approved by arbitrary voter $i \in S$
(by the assumption, we know that such exists).
Then, let $X = \{(c_r,r) : r \in R_X\}$.

Observe that now $|R \setminus R_X| < \gamma$.
This means that there is no subset of rounds $R_Y \subseteq [\ell]$
with $|R_Y| \ge \gamma$
that would be disjoint with $R_X$ and
in which all voters in $S$ agree.
And since for every $(c,r) \in X$
there is a voter in $S$
that does not approve candidate $c$ in round $r$,
there is no set $Y$ that could satisfy
condition (i) of \Cref{def:bejr}.
Hence, it must be the condition (ii) that holds, i.e.,
$|S|/n > \gamma / (|X| + \gamma) = \gamma/(t+1)$.
Observe that this is equivalent to
$\gamma < (t+1) \cdot |S|/n$.
Now, for every positive real $x$,
the largest integer value strictly smaller than $x$
is $\ceil{x}-1$.
Thus, it must hold that
$\gamma \le \ceil{(t+1) \cdot |S|/n} - 1$,
which concludes the proof of the thesis.
\end{proof}

\subsection{FJR}

\citet{MasPieSko-2023-GeneralProportionality} also introduced an axiom
that is an analogue of FJR in the general approval voting setting.

\begin{definition}
\label{def:bfjr}
    Given a general approval election instance 
    $I = (\calC,\calN,\calF,\calA)$
    a group of voters $S \subseteq \calN$
    is $(\alpha,\beta)$-\emph{BFJR-cohesive},
    if for each feasible set $X \in \calF$, it holds that either
    \begin{itemize}
        \item[i)] there exists $Y \subseteq \calC$ with $|Y| = \alpha$ such that $\sat_i(Y) \ge \beta$, for every $i \in S$, and $X \cup Y \in \calF$, or
        \item[ii)] $|S|/n > \alpha / (|X| + \alpha)$.
    \end{itemize}
    Then, a feasible outcome $W \in \calF$ satisfies
    \emph{base full justified representation (BFJR)}
    if for every $\alpha, \beta \in \mathbb{N}$
    and every $(\alpha, \beta)$-BFJR-cohesive group of voters $S \subseteq \calN$
    there is a voter $i \in S$ for which
    $\sat_i(W) \ge \beta$.
\end{definition}

Observe that BFJR is a stronger axiom than BEJR
as BEJR gives the same guarantees but only to $(\gamma, \gamma)$-BFJR-cohesive groups.

Analogously to what we have proved in \Cref{prop:bejr:ejr:equivalence},
we can show that BFJR implies
a stronger version of the FJR axiom,
which we can define as follows.

\begin{definition}
\label{def:droopFJR}
    Given a temporal election $E =( C ,N , \ell, A )$,
    we say that 
    an outcome $\outcome$ provides 
    \emph{Droop full justified representation} (Droop-FJR)
    if for every subset of voters $S \subseteq N$
    and every subset of rounds $T \subseteq [\ell]$
    there exists subset $R \subseteq T$
    such that $|R| = \ceil{(|T| + 1) \cdot |S|/n} - 1$
    and for every other outcome $\outcome'$,
    there is a voter $i \in S$ with
    \(
        \sat_i(\outcome) \ge \min_{j \in S} \sat_j(\outcome'_R).
    \)
    Equivalently, for every subset of voters $S$,
    there is $i \in S$ such that
    \[
        \sat_i(\outcome) \ge 
        \max_{T \subseteq [\ell]} \
        \min_{R \subseteq T:|R|= \ceil{(|T|+1) \cdot |S|/n} -1} \
        \max_{\outcome'_R \in C^R} \
        \min_{j\in S} \
        \sat_j(\outcome'_R).
    \]
\end{definition}

Note that Droop-FJR implies FJR
as for every subset of voters $S \subseteq N$
and every subset of rounds $T$,
Droop-FJR gives at least as many rounds $R$
to be used by the voters in $S$
as does the standard FJR.
Thus, the minimum satisfaction from
suboutcome $\outcome'_R$
cannot be smaller for Droop-FJR than for FJR.

As we will now show,
BFJR implies Droop-FJR in the temporal voting setting,
which means that it implies FJR as well.
We leave open the question whether,
possibly under some additional assumptions,
the converse implication is true, i.e.,
whether Droop-FJR implies BFJR.

\begin{proposition}
    For every temporal election $E = (C,N,\ell,A)$,
    every outcome $\outcome$ that provides BFJR
    provides Droop-FJR as well.
\end{proposition}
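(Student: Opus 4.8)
The plan is to prove the contrapositive, following exactly the template of the proof of \Cref{prop:bejr:ejr:equivalence} but replacing the ``universal agreement'' witness by the ``$(\alpha,\beta)$-cohesion'' witness appropriate to BFJR. Suppose $\outcome$ does not provide Droop-FJR. Then by \Cref{def:droopFJR} there is a subset of voters $S \subseteq N$ and a witnessing subset of rounds $T^\ast \subseteq [\ell]$ attaining the outer maximum, such that every $i \in S$ has $\sat_i(\outcome) < \beta$, where, writing $t = |T^\ast|$ and $\alpha = \ceil{(t+1)\cdot|S|/n} - 1$,
\[
    \beta = \min_{R \subseteq T^\ast:\, |R| = \alpha} \ \max_{\outcome'_R \in C^R} \ \min_{j \in S} \ \sat_j(\outcome'_R).
\]
The goal is to show that, under the correspondence to $\calI_E$, the group $S$ is $(\alpha,\beta)$-BFJR-cohesive. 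Once this is established, BFJR (\Cref{def:bfjr}) forces some $i \in S$ with $\sat_i(\outcome) \ge \beta$, contradicting the strict inequality above, and the proof is done.

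To verify cohesiveness, I would fix an arbitrary feasible set $X \in \calF_E$ and assume condition (ii) of \Cref{def:bfjr} fails, i.e.\ $|S|/n \le \alpha/(|X|+\alpha)$; the task is to produce the witness $Y$ for condition (i). Here I would reuse verbatim the arithmetic from \Cref{prop:bejr:ejr:equivalence}: the inequality $|S|/n \le \alpha/(|X|+\alpha)$ rearranges to $|X| \le \alpha\,(n/|S| - 1)$, and since $\alpha = \ceil{(t+1)|S|/n} - 1 < (t+1)|S|/n$ we get $\alpha n/|S| < t+1$, whence $|X| < t+1-\alpha$; as $|X|$ is an integer this sharpens to $|X| \le t - \alpha$. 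Letting $R_X = \{r : (c,r) \in X\}$ denote the rounds occupied by $X$, feasibility gives $|R_X| = |X| \le t - \alpha$, so $T^\ast$ retains at least $t - (t-\alpha) = \alpha$ free rounds; pick any $R' \subseteq T^\ast \setminus R_X$ with $|R'| = \alpha$.

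The step that is genuinely new relative to the BEJR argument—and the one I expect to be the crux—is producing the set $Y$ with $\sat_i(Y) \ge \beta$ for all $i \in S$, rather than a set of size $\beta$ that is universally approved. The key observation is that $R'$ is a size-$\alpha$ subset of $T^\ast$, and $\beta$ was defined precisely as a \emph{minimum} over all such subsets; hence $\max_{\outcome'_{R'}} \min_{j\in S}\sat_j(\outcome'_{R'}) \ge \beta$, so there is a suboutcome $\outcome'_{R'} = (o'_r)_{r\in R'}$ with $\min_{j \in S}\sat_j(\outcome'_{R'}) \ge \beta$. Setting $Y = \{(o'_r, r) : r \in R'\}$, the correspondence between suboutcomes and feasible sets gives $|Y| = \alpha$ and $\sat_i(Y) = \sat_i(\outcome'_{R'}) \ge \beta$ for every $i \in S$, while $R' \cap R_X = \varnothing$ guarantees that every round appears at most once in $X \cup Y$, so $X \cup Y \in \calF_E$. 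Thus condition (i) holds for this $X$, and since $X$ was arbitrary, $S$ is $(\alpha,\beta)$-BFJR-cohesive, completing the contradiction. The only subtle points to get right are the translation between the temporal suboutcome $\outcome'_{R'}$ and the general-approval witness $Y$, and making sure the size prescribed by $\beta$'s minimization ($|R| = \alpha$) matches the number of free rounds the counting argument leaves in $T^\ast$; both fall out cleanly from the choice $\alpha = \ceil{(t+1)|S|/n}-1$.
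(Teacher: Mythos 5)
Your proposal is correct and mirrors the paper's own proof essentially step for step: the same contrapositive structure, the same counting argument yielding $|X| \le t - \alpha$, and the same construction of the witness $Y$ from a suboutcome on a size-$\alpha$ subset of free rounds in $T^\ast$, using that $\beta$ is a minimum over all such subsets (the paper calls your $\beta$ by the name $\mu$). No gaps; the crux you identified---translating the min-max-min guarantee into the BFJR-cohesiveness witness---is exactly how the paper handles it.
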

\begin{proof}
    Fix arbitrary temporal election $E = (C,N,\ell,A)$
    and $\outcome$ that does not provide Droop-FJR.
    This means that there exists a subset of voters $S \subseteq N$
    such that for every $i \in S$
    it holds that $\sat_i(\outcome) < \mu$,
    where
    \[
        \mu = 
        \max_{T \subseteq [\ell]} \
        \min_{R \subseteq T:|R|=\ceil{(|T|+1) \cdot |S|/n} -1} \
        \max_{\outcome'_R \in C^R} \
        \min_{j\in S} \
        \sat_j(\outcome'_R).
    \]
    Let us fix an arbitrary subset $T \subseteq [\ell]$
    that provides the above equality
    and denote $t = |T|$
    as well as $\alpha = \ceil{(t+1)\cdot |S|/n} - 1$.
    We will show that $S$ is $(\alpha,\mu)$-BFJR-cohesive,
    which will imply that $\outcome$ does not provide BFJR as well.

    Take an arbitrary subset $X \in \calF_E$
    such that $|S|/n \le \alpha / (|X| + \alpha)$.
    Let us denote the subset of rounds in $X$
    as $R_X = \{ r: (c,r) \in X\}$.
    Our goal is to show that there is a subset $Y \subseteq \calC_E$
    such that $|Y| = \alpha$, 
    $\sat_i(Y) \ge \beta$, for every $i \in S$, and
    $X \cup Y \in \calF_E$.
    The last condition simply says
    that the subset of rounds in $Y$
    is disjoint from $R_X$.

    Now, $|S|/n \le \alpha / (|X| + \alpha)$
    is equivalent to
    $(|X|+\alpha)/\alpha \le n/|S|$, which gives us
    $|X|/\alpha \le n/|S| - 1$.
    Further,
    \begin{align*}
        |X| &\le \alpha \cdot \frac{n}{|S|} - \alpha \\
        \tag{as $\alpha = \ceil{(t+1) \cdot |S|/n} - 1$}
        &= \left(\left\lceil(t+1) \cdot \frac{|S|}{n}\right\rceil - 1 \right)\cdot\frac{n}{|S|} - \alpha\\
        \tag{since $\ceil{x}<x+1$ for every $x \in \mathbb{R}$}
        &< \left((t+1) \cdot \frac{|S|}{n}\right)\cdot\frac{n}{|S|} - \alpha\\
        &= t+1 -\alpha.
    \end{align*}
But $|X|$ needs to be an integer,
which means that actually $|X| \le t - \alpha$.
Thus, $|R_X \cap T| \le t - \alpha$ as well,
which implies that there is a subset of rounds
$R_Y \subseteq T$ such that
$R_Y \cap R_X = \varnothing$ and $|R_Y| = \alpha$.
Then, observe that
\[
        \max_{\outcome'_{R_Y} \in C^{R_Y}} \
        \min_{j\in S} \
        \sat_j(\outcome'_R) \ge
        \min_{R \subseteq T:|R|=\ceil{(|T|+1) \cdot |S|/n} -1} \
        \max_{\outcome'_R \in C^R} \
        \min_{j\in S} \
        \sat_j(\outcome'_R) = \mu.
\]
In other words, there exists an outcome $\outcome'$
such that for every voter $i \in S$
its satisfaction from $\outcome'$
in rounds $R_Y$ is at least $\mu$.
We can thus take $Y = \{ (o'_r, r) : r \in R_Y\}$
and it will hold that $\sat_i(Y) \ge \mu$
for every $i \in S$.
Also, $|Y| = \alpha$ and, since $R_X$ and $R_Y$ are disjoint,
$X \cup Y \in \calF_E$.
Thus, indeed $S$ is $(\alpha,\mu)$-BFJR-cohesive,
which concludes the proof.
\end{proof}

\section{Omitted Proofs from Section \ref{sec:ejr-plus}}

\PropVsImplications*
\begin{proof}
Consider a subset of voters $S$ that agree in $t$ rounds.
We have $t\le \ell$, $|S|/n\le 1$ and hence $\lfloor t\cdot|S|/n\rfloor\le \min(t, \lfloor\ell\cdot|S|/n\rfloor)$. This shows that sEJR implies EJR and sPJR implies PJR:
in both cases, the strong version of the axiom provides a stronger guarantee to each group of voters. Similarly, sJR implies JR because $t\le \ell$.

We will now argue that sEJR+ implies sEJR.
Consider an outcome $\outcome$ that provides sEJR+, and a subset of voters $S$ who agree in a size-$t$ subset of rounds $R$. We need to show that the satisfaction of some voter $i\in S$ is at least
$\min(t, \lfloor\ell\cdot|S|/n\rfloor)$.
Suppose first that for each round $r\in R$ 
the outcome $o_r$ is approved by all voters in $S$. Then the satisfaction of each voter in $S$ is at least $|R|=t\ge \min(t, \lfloor\ell\cdot|S|/n\rfloor)$. Otherwise,
consider a round $r\in R$ such that not all voters in $S$ approve $o_r$. By the choice of $r$, there is some candidate $c$ that is approved by all voters in $S$ in round $r$.
Then $o_r\neq c$ and hence sEJR+ implies that 
$\sat_i(\outcome)\ge \lfloor\ell\cdot|S|/n\rfloor\ge \min(t, \lfloor\ell\cdot|S|/n\rfloor)$ for some $i\in S$. 

It remains to show that sEJR implies sPJR and sPJR implies sJR; this follows by the same arguments as in the multiwinner case.
\end{proof}

\EJRplusImplications*
\begin{proof}
    For part i),
    consider an arbitrary election $E = (C, N, \ell, A)$
    and an outcome $\outcome$ that does not provide EJR+.
    We will show that $\outcome$ does not provide sEJR+ either.
    By the definition of EJR+ there exist
    $\sigma \in [n]$, $\tau \in [\ell]$, 
    a $(\sigma, \tau)$-cohesive subset of voters $S \subseteq N$,
    a round $r \in [\ell]$, and a candidate $c \in C$ approved by all voters in $S$ in round $r$ such that 
    $o_r\neq c$ and for each voter $i \in S$ we have $\sat_i(\outcome) < \floor{\tau \cdot \sigma/n}$.
    Then, since $\tau \le \ell$ and $\sigma \le |S|$, 
    for each voter $i \in S$ it holds that 
    \[
        \sat_i(\outcome) < 
        \floor{\tau \cdot \sigma /n}
        \le \floor{\ell \cdot |S|/n}.
    \]
    Hence, $S$ is a witness that 
    $\outcome$ fails to provide 
    sEJR+.

    For part ii),
    consider an arbitrary election $E = (C, N, \ell, A)$
    and an outcome $\outcome$ that does not provide EJR.
    We will show that $\outcome$ does not provide EJR+ either.
    By the definition of EJR, there exists a 
    $\tau \in [\ell]$, a set of rounds $R$, $|R|=\tau$,
    and a subset of voters $S$ 
    such that all voters in $S$ agree in each round $r\in R$, but
    for every $i \in S$ we have
    $\sat_i(\outcome) < \floor{\tau \cdot |S|/n}$.
    As $|S|\le n$, this means that
    $\sat_i(\outcome) < \tau$,
    which implies that there is a round $r\in R$ such that some voter in $S$ does not approve $o_r$, 
    and hence $o_r\not\in\bigcap_{i\in S}a_{i, r}$.
    Let $\sigma = |S|$ and note that $S$ is $(\sigma, \tau)$-cohesive, as witnessed by $R$.
    Hence, $S$ witnesses that $\outcome$ fails to provide EJR+.
\end{proof}

\WEJRplusrelationship*

\begin{proof}
    We will first show that EJR+ implies wEJR+.
    Consider an arbitrary election $E = (C, N, \ell, A)$ and an outcome $\outcome$ that satisfies EJR+.
    For any $\sigma \in [n]$ and $\tau \in [\ell]$, for any $(\sigma, \tau)$-cohesive subset of voters $S \subseteq N$ it holds that either there exists a voter $i \in S$ such that $\sat_i(\outcome) \ge \floor{\tau \cdot \sigma/n}$, or for every round $r\in [\ell]$ with $\bigcap_{i\in S}a_{i, r}\neq\varnothing$ we have that $o_r \in \bigcap_{i \in S} a_{i,r}$.

    Consider the case when $\tau = \ell$.
    For those $(\sigma, \ell)$-cohesive subsets of voters $S$ where there exists a voter $i \in S$ such that $\sat_i(\outcome) \ge \floor{\tau \cdot \sigma/n} = \floor{\ell \cdot \sigma/n}$, the first condition of wEJR+ is satisfied.
    Otherwise, for those $(\sigma, \ell)$-cohesive subsets of voters $S$ where no such voter $i \in S$ exists, by definition of EJR+ we know that for every round with $\bigcap_{i\in S}a_{i, r}\neq\varnothing$ we have that $o_r \in \bigcap_{i \in S} a_{i,r}$.
    This satisfies the second condition of wEJR+, hence at least one of the two wEJR+ conditions always holds, so $\outcome$ provides wEJR+.

    It remains to show that wEJR+ implies wEJR.
    Consider an arbitrary election $E = (C, N, \ell, A)$ and an outcome $\outcome$ that does not provide wEJR.
    We will show that $\outcome$ does not provide wEJR+ either.
    By definition of wEJR, there exists a subset of voters $S$ such that all voters in $S$ agree in each round $r \in [\ell]$, but for every $i \in S$ we have $\sat_i(\outcome) < \floor{\ell \cdot |S|/n}$.
    As $|S| \leq n$, this means that $\sat_i(\outcome) < \ell$, which implies there is a round $r \in [\ell]$ such that some voter in $S$ does not approve $o_r$, and  hence $o_r \notin \bigcap_{i \in S} a_{i,r}$.
    Let $\sigma = |S|$ and note that $S$ is $(\sigma, \ell)$-cohesive,. Hence, $S$ witnesses that $\outcome$ fails to provide wEJR+.
\end{proof}

\section{Omitted Proofs from Section \ref{sec:fjr}}
\PropFJRimplications*
\begin{proof}
    For part i), we first show that sFJR implies sEJR. Consider an arbitrary election $E = (C, N, \ell, A)$ and an outcome $\outcome$ that provides sFJR. Let $S \subseteq N$ be a subset of voters that agree in a set of $t$ rounds $T$, $t>0$, in election $E$ (note that if $t=0$ then $S$ cannot be a witness that sEJR is violated).

    First, suppose that $t \geq \floor{\ell \cdot |S|/n}$. Let $T' \subseteq T$ be a subset of $T$ of size $|T'| = \floor{\ell \cdot |S|/n}$. Then voters in $S$ agree in every round $r\in T'$, so there exists an outcome $\outcome'$ such that for all voters $i \in S$ it holds that $\sat_i(\outcome'_{T'}) = \floor{\ell \cdot |S|/n}$. Since $\outcome$ provides sFJR, there exists a voter $i \in S$ such that
    \[
        \sat_i(\outcome) \ge 
        \max_{R \subseteq [\ell]:|R|= \floor{\ell \cdot |S|/n}} \
        \max_{\outcome''_R \in C^R} \
        \min_{j\in S} \
        \sat_j(\outcome''_R) \ge
    \min_{j\in S} \
        \sat_j(\outcome'_{T'}) =
    \floor{\ell \cdot |S|/n} = \min(t, \floor{\ell \cdot |S|/n})
    \]
    Hence, sEJR is satisfied.

    On the other hand, suppose that $t < \floor{\ell \cdot |S|/n}$. We extend $T$ to a subset of rounds $T' \subseteq [\ell]$ such that $|T'| = \floor{\ell \cdot |S|/n}$ and $T\subseteq T'$. Consider an outcome $\outcome'$ that satisfies $o'_r\in\bigcap_{i\in S}a_{i, r}$ for all $r\in T$.
    Then for each $i\in S$ we have $\sat_i(\outcome'_{T'}) \ge t$. Given that $\outcome$ provides sFJR, by definition there exists a voter $i \in S$ such that 
    $$
    \sat_i(\outcome) \ge 
        \max_{R \subseteq [\ell]:|R|= \floor{\ell \cdot |S|/n}} \
        \max_{\outcome''_R \in C^R} \
        \min_{j\in S} \
        \sat_j(\outcome''_R) \ge
    \min_{j\in S} \
        \sat_j(\outcome'_{T'}) \ge
     t = \min(t, \floor{\ell \cdot |S|/n}).
     $$
    Hence, sEJR is satisfied in this case as well.

    We will now show that sFJR implies FJR. Fix a temporal election $E$, and consider an outcome $\outcome$ that does not provide FJR.
    Then there exists a subset of voters $S \subseteq N$ such that 
    for all $i\in S$ it holds that $\sat_i(\outcome)<\max_{T\subseteq [\ell]}\mu_S(T)$. This means 
    that for some subset of rounds $T \subseteq [\ell]$ it holds that for each subset $R \subseteq T$ of size $\floor{|T| \cdot |S|/n}$ there exists an outcome $\outcome'$ such that \(
        \sat_i(\outcome) < \min_{j \in S} \sat_j(\outcome'_R)
    \) for all voters $i \in S$.
    Let $R \subseteq T$ be some such subset of rounds. Extend it to a subset of rounds $R' \subseteq [\ell]$ such that $R \subseteq R'$ and $|R'| = \floor{\ell \cdot |S|/n}$. Then for $R'$ and for all voters $i \in S$ we have
    $$
    \sat_i(\outcome) < \min_{j \in S} \sat_j(\outcome'_R) \leq \min_{j \in S} \sat_j(\outcome'_{R'}).
    $$
    Hence, $S$, $R'$ and $\outcome'$ witness that $\outcome$ does not satisfy sFJR. 

    For part~ii), we first show that FJR implies EJR. Take an arbitrary election $E = (C, N, \ell, A)$ and an outcome $\outcome$ that provides FJR. Let $S \subseteq N$ be a subset of voters that agree in a set of rounds $T$ of size $|T|=t$ in election $E$. Consider an outcome $\outcome'$ that satisfies $o'_r\in\bigcap_{i\in S}a_{i, r}$ for each $r\in T$. 
    Observe that for every subset of rounds $R \subseteq T$ with $|R| = \floor{t \cdot |S|/n}$ we have $\sat_j(\outcome'_R)=\floor{t \cdot |S|/n}$ for each $j\in S$.
    Consequently, 
$$
\mu_S(T) = \min_{R \subseteq T:|R|= \floor{t \cdot |S|/n}} \
        \max_{\outcome''_R \in C^R} \
        \min_{j\in S} \
        \sat_j(\outcome''_R) \ge 
        \min_{R \subseteq T:|R|= \floor{t \cdot |S|/n}} \
        \min_{j\in S} \
        \sat_j(\outcome'_R) \ge
        \floor{t \cdot |S|/n}.
$$
    Since $\outcome$ provides FJR, there exists a voter $i \in S$ such that 
    $$
    \sat_i(\outcome) \geq 
    \max_{T'\subseteq [\ell]}
\mu_S(T') \ge
     \mu_S(T) \geq \floor{t \cdot |S|/n}.
    $$ 
    Hence, $\outcome$ provides EJR.

    We will now argue that FJR implies wFJR.
    Consider an outcome $\outcome$ that provides FJR for a temporal election $E$. Then for every subset of voters $S$ we have $\sat_i(\outcome)\ge \max_{T\subseteq [\ell]}\mu_S(T)$. Thus, in particular, 
    for $T=[\ell]$ we have 
    $$
    \sat_i(\outcome)\ge \mu_S([\ell]) = 
    \min_{R \subseteq [\ell]:|R|= \floor{\ell \cdot |S|/n}} \
        \max_{\outcome'_R \in C^R} \
        \min_{j\in S} \
        \sat_j(\outcome'_R).
    $$
    This means that $\outcome$ provides wFJR.

    For part iii), we will show that wFJR implies wEJR. Fix an election $E = (C, N, \ell, A)$ and an outcome $\outcome$ that provides wFJR. Suppose there exists a subset of voters $S \subseteq N$ such that $\bigcap_{i\in S}a_{i, r}\neq\varnothing$ for all $r\in [\ell]$.
    We construct an outcome $\outcome'$ so that $o'_r\in\bigcap_{i\in S}a_{i, r}$ for each $r\in [\ell]$.
    Then, for each subset of rounds $R \subseteq [\ell]$ such that $|R| = \floor{\ell \cdot |S|/n}$ 
    we have $\sat_j(\outcome'_R)=\floor{\ell \cdot |S|/n}$ for all $j\in S$.
    Since $\outcome$ provides wFJR, there exists some voter $i \in S$ such that 
    $$
    \sat_i(\outcome)\ge 
    \min_{R \subseteq [\ell]:|R|= \floor{\ell \cdot |S|/n}} \
        \max_{\outcome''_R \in C^R} \
        \min_{j\in S} \
        \sat_j(\outcome''_R) \ge  
    \min_{R \subseteq [\ell]:|R|= \floor{\ell \cdot |S|/n}} \
        \min_{j\in S} \
        \sat_j(\outcome'_R) \ge  \floor{\ell \cdot |S|/n}.
    $$
 Hence, $\outcome$ provides wEJR.
\end{proof}

\section{Omitted Proofs from Section \ref{sec:fpjr}}
\label{app:fpjr}

We first state and prove the result that was implied by footnote \ref{footnote_fpjr}.

\begin{proposition}
\label{prop:sejr:not-satisfiable}
    There exists an $n$-voter, $\ell$-round temporal election $E \in \calE_{=1}$
    such that $n | \ell$ and
    no outcome $\outcome$ provides sEJR for $E$.
\end{proposition}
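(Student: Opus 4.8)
The plan is to exhibit a single small election in $\calE_{=1}$ with $n=\ell=6$ (so that $n\mid\ell$ holds trivially) for which the sEJR demands overflow the per-round capacity. A fresh construction is needed rather than a reuse of \Cref{prop:sjr:not-satisfiable}: once $n\mid\ell$, every singleton $\{i\}$ trivially agrees in all rounds and is entitled to $\min(\ell,\floor{\ell/n})=\floor{\ell/n}\ge 1$, so the sJR/sPJR-style ``spread out the satisfaction'' obstruction vanishes (indeed SDR already satisfies sPJR here, by \Cref{thm:sfpjr-sdr}). The deficit must instead be produced by the stronger \emph{concentration} requirement of sEJR, namely that some \emph{single} voter of each cohesive group be highly satisfied.

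Concretely, I would partition the six voters into three disjoint blocks $\{1,2\}$, $\{3,4\}$, $\{5,6\}$, have each block approve its own dedicated candidate (say $a$, $b$, $c$, respectively) in both of rounds $1$ and $2$, and make the remaining four rounds fully contested: in each of rounds $3,4,5,6$ every voter approves a distinct private candidate. This is clearly in $\calE_{=1}$, and the blocks agree in exactly the two rounds $\{1,2\}$, on pairwise distinct candidates.

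Next I would read off the binding sEJR constraints. Each singleton forces $\sat_i(\outcome)\ge\min(6,\floor{6/6})=1$ for every voter $i$. Each block $S$, agreeing in the $t=2$ rounds $\{1,2\}$ with $\floor{\ell\cdot 2/n}=\floor{12/6}=2$, forces some member to have satisfaction at least $\min(2,2)=2$. Since the blocks are disjoint, this gives a lower bound on total satisfaction that holds for \emph{any} sEJR outcome: within each block one representative contributes $\ge 2$ and the other $\ge 1$, so $\sum_{i}\sat_i(\outcome)\ge 3\cdot(2+1)=9$. I would then bound the available satisfaction from above round by round: in each of rounds $1$ and $2$ only the block candidates $a,b,c$ are approved by more than one voter, and each of them by exactly two, so any choice satisfies at most two voters; in each contested round the chosen candidate is approved by at most one voter. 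Hence $\sum_i\sat_i(\outcome)\le 2+2+1+1+1+1=8$. Combining the bounds yields $9\le\sum_i\sat_i(\outcome)\le 8$, a contradiction, so no outcome provides sEJR.

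The main difficulty is conceptual rather than computational: one must recognize that, with $n\mid\ell$, the impossibility cannot be routed through sJR/sPJR (which remain satisfiable), so it has to be engineered purely from the single-representative clause of sEJR, and then one must tune the parameters so that total demand \emph{strictly} exceeds capacity. Checking the three-block template in general, an $\ell=n=3s$ election with blocks of size $s$ agreeing in $s$ rounds produces demand $\ge 3(2s-1)$ against capacity $\le s^2+2s$, so a strict deficit requires $(s-1)(s-3)<0$, i.e.\ exactly $s=2$. This pins down $n=\ell=6$ with three blocks of size two as the natural clean instance, which is precisely the election described above.
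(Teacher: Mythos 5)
Your proof is correct, and your election is in fact the same instance as the paper's (three size-two blocks approving a shared block candidate in rounds $1$ and $2$, private candidates in rounds $3$--$6$; only the candidate names differ). Where you diverge is in the verification: the paper argues by case analysis on the outcome of the first two rounds ($o_1=o_2$ versus $o_1\neq o_2$), tracing in each case which pair's demand for a satisfaction-$2$ representative cannot be met, whereas you close with a single global counting argument --- the sEJR constraints (each singleton forces $\sat_i(\outcome)\ge 1$, each disjoint block forces one member to $\min(2,\floor{12/6})=2$) yield total demand at least $3\cdot(2+1)=9$, while the per-round capacity is at most $2+2+1+1+1+1=8$. Both extractions of the sEJR requirements are identical; your aggregate bound simply replaces the paper's case split and is, if anything, cleaner, since it avoids the WLOG bookkeeping. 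Your approach also buys two things the paper's does not: an explicit explanation of why the obstruction must come from the single-representative clause of sEJR once $n\mid\ell$ (since sPJR remains satisfiable via \Cref{thm:sfpjr-sdr} and \Cref{prop:fpjr:implications}), and a tightness calculation showing that in the three-block template with blocks of size $s$ the deficit condition $(s-1)(s-3)<0$ pins down $s=2$, i.e., $n=\ell=6$ is the unique clean instance of this shape --- a genuine addition. Conversely, the paper's case analysis buys a constructive trace of exactly where any attempted outcome fails, which the counting argument hides. I verified your two bounds: the lower bound is valid because the three blocks are pairwise disjoint and the singleton groups trivially agree in all rounds in $\calE_{=1}$, and the upper bound is valid because in rounds $1$ and $2$ every candidate is approved by at most two voters and in rounds $3$--$6$ by at most one, regardless of which candidate the outcome selects.
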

\begin{proof}
    Consider an election $E=(C,N,\ell,A)$ with $\ell = 6$ rounds, a set of voters $N=[6]$, and a set of candidates $C = \{x, y, z, c_1,\dots,c_{6}\}$.
The voters' ballots are as follows (each $a_{i, r}$ is a singleton, so we omit the braces):

\begin{table}[ht]
    \setlength{\tabcolsep}{3pt}
    \renewcommand{\arraystretch}{1.2}
    \small
    \begin{center}
        \begin{tabular}{c | cc cc cc }
        \toprule
        & \multicolumn{6}{c}{Voter} \\
        Rounds & 1 & 2 & 3 & 4 & 5 & 6 \\
        \midrule
        1 & $x$ & $x$ & $y$ & $y$ & $z$ & $z$ \\
        2 & $x$ & $x$ & $y$ & $y$ & $z$ & $z$ \\
        3 & $c_1$ & $c_2$ & $c_3$ & $c_4$ & $c_5$ & $c_6$ \\
        4 & $c_1$ & $c_2$ & $c_3$ & $c_4$ & $c_5$ & $c_6$ \\
        5 & $c_1$ & $c_2$ & $c_3$ & $c_4$ & $c_5$ & $c_6$ \\
        6 & $c_1$ & $c_2$ & $c_3$ & $c_4$ & $c_5$ & $c_6$ \\
        \bottomrule
        \end{tabular}
    \end{center}
    \label{table:sEJR:not-satisfiable}
\end{table}
Intuitively, in the first two rounds the voters form three cohesive blocks of size two, and in the next four rounds, every voter supports a different candidate.

Observe that sEJR requires that every voter in $N$
should receive satisfaction of at least $1$.
Additionally, since each pair of voters $(1,2)$, $(3,4)$, and $(5,6)$ agrees in two rounds,
at least one voter from each of these pairs
needs to receive satisfaction of at least $2$,
according to sEJR.
We will show that it is not possible to satisfy both of this requirements at the same time.

Take an arbitrary outcome $\outcome$.
It can either select the same candidate in the first two rounds, or two different ones.
Let us consider both cases.

Case 1, $o_1 = o_2$.
Without loss of generality assume that
$o_1 = o_2 = x$.
Then, in order to guarantee satisfaction of at least $1$ 
to each of the voters $3$, $4$, $5$, and $6$,
we need to select a different candidate in each of the rounds $3$, $4$, $5$, and $6$.
However, in this way no voter in the pair $(3,4)$
will receive satisfaction of at least $2$.
Thus, sEJR cannot be satisfied if $o_1 = o_2$.

Case 2, $o_1 \neq o_2$.
Without loss of generality assume that
$o_1 = x$ and $o_2 = y$.
In order to guarantee that one voter in each of the pairs $(1,2)$ and $(3,4)$ has satisfaction of at least $2$,
in one round from $3$, $4$, $5$, or $6$,
we need to select candidate $c_1$ or $c_2$
and in one $c_3$ or $c_4$.
Then, in the remaining two rounds
we have to select candidates $c_5$ and $c_6$,
once each,
so to guarantee that both voter $5$ and voter $6$
have satisfaction of at least  $1$.
However, in this way no agent from the pair $(5,6)$
will have satisfaction of at least $2$.
Therefore, it is not possible to satisfy sEJR also in the case when $o_1 \neq o_2$,
which concludes the proof.
\end{proof}

\PropFPJRimplications*
\begin{proof}
For part i), we first show that sFJR implies sFPJR.
Take an arbitrary election $E = (C,N,\ell,A)$ and an outcome $\mathbf{o}$ satisfying sFJR.
For any subset of voters $S \subseteq N$, there exists an $i \in S$ such that
\begin{equation*}
    \sat_i(\outcome) \ge 
        \max_{R \subseteq [\ell]:|R|= \floor{\ell \cdot |S|/n}} \
        \max_{\outcome'_R \in C^R} \
        \min_{j\in S} \
        \sat_j(\outcome'_R).
\end{equation*}
Then, since $i \in S$, we must have that $\sat_S(\outcome) \geq  \sat_i(\outcome)$ and $\mathbf{o}$ satisfies sFPJR as well.

Next, we show that sFPJR implies sPJR.
Take an arbitrary election $E = (C,N,\ell,A)$ and an outcome $\mathbf{o}$ satisfying sFPJR.
Let $S \subseteq N$ be a subset of voters that agree in a set of $t$ rounds $T, t > 0$, in election $E$ (note that if $t = 0$ then $S$ cannot be a witness that sPJR violated).

First, suppose that $t \geq \floor{\ell \cdot |S|/n}$.
Let $T' \subseteq T$ be a subset of $T$ of size $|T'| = \floor{\ell \cdot |S|/n}$.
Then voters in $S$ agree in every round $r \in T'$, so there exists an outcome $\outcome'$ such that for all voters $i \in S$, it holds that $\sat_i(\outcome'_{T'}) = \floor{\ell \cdot |S|/n}$.
Since $\outcome'$ provides sFPJR, by definition, we have that
\begin{equation*}
    \sat_{S}(\outcome) \geq \max_{R \subseteq [\ell]:|R|= \floor{\ell \cdot |S|/n}} \
        \max_{\outcome''_R \in C^R} \
        \min_{j\in S} \
        \sat_j(\outcome''_R) \geq \min_{j\in S} \
        \sat_j(\outcome'_{T'}) \geq \floor{\ell \cdot |S|/n} = \min(t,\floor{\ell \cdot |S|/n}).
\end{equation*}
Hence, sPJR is satisfied.

On the other hand, suppose that $t < \floor{\ell \cdot |S|/n}$.
We extend $T$ to a subset of rounds $T' \subseteq [\ell]$ such that $|T'| = \floor{\ell \cdot |S|/n}$ and $T \subseteq T'$.
Consider an outcome $\outcome'$ that satisfies $o'_r \in \bigcap_{i \in S} a_{i,r}$ for all $r \in T$.
Then for each $i \in S$, we have $\sat_i(\outcome'_{T'}) \geq t$.
Given that $\outcome$ provides sFPJR, by definition, we have that
\begin{equation*}
    \sat_{S}(\outcome) \geq \max_{R \subseteq [\ell]:|R|= \floor{\ell \cdot |S|/n}} \
        \max_{\outcome''_R \in C^R} \
        \min_{j\in S} \
        \sat_j(\outcome''_R) \geq \min_{j\in S} \
        \sat_j(\outcome'_{T'}) \geq t = \min(t,\floor{\ell \cdot |S|/n}).
\end{equation*}
Hence, sPJR is satisfied in this case as well.

We now show that sFPJR implies FPJR.
Fix a temporal election $E = (C,N,\ell,A)$, and consider an outcome $\outcome$ that does not satisfy FPJR.
Then there exists a subset of voters $S \subseteq N$ and a subset of rounds $T \subseteq [\ell]$ such that for all subsets $R \subseteq T \subseteq [\ell]$ where $|R| = \floor{|T| \cdot |S|/n}$, there exists an outcome $\outcome'$ such that $\sat_S(\outcome) < \min_{i \in S} \sat_i(\outcome'_R)$.
Take one such round subset $R \subseteq T$ that violates the FPJR condition.
The round subset $R' \subseteq [\ell]$ such that $R \subseteq R'$ and $|R'| = \floor{\ell \cdot |S|/n}$ also violates the FPJR condition because 
\begin{equation*}
    \sat_S(\outcome) < \min_{i \in S} \sat_i(\outcome'_R) \leq \min_{i \in S} \sat_i(\outcome'_{R'}).
\end{equation*}
Note that $R'$ also violates the sFPJR condition, and hence $\outcome$ does not satisfy sFPJR.
As we assumed that $\outcome$ violates FPJR, the desired result follows by contraposition.

For part ii), we first show that FJR implies FPJR.
Take an arbitrary election $E = (C, N, \ell, A)$ and an outcome $\outcome$ satisfying FJR.
For any subset of voters $S \subseteq N$, there exists an $i \in S$ such that 
\begin{equation*}
    \sat_i(\outcome) \ge 
        \max_{T \subseteq [\ell]} \
        \min_{R \subseteq T:|R|= \floor{|T| \cdot |S|/n}} \
        \max_{\outcome'_R \in C^R} \
        \min_{j\in S} \
        \sat_j(\outcome'_R).
\end{equation*}
Then, since $i \in S$, we must have that $\sat_S(\outcome) \geq \sat_i(\outcome)$ and $\outcome$ satisfies FPJR as well.

Next, we show that FPJR implies PJR.
Take an arbitrary election $E = (C, N, \ell, A)$ and an outcome $\outcome$ that provides FPJR.
Let $S \subseteq N$ be a subset of voters that agree in a set of rounds $T$ of size $|T| = t$ in election $E$.
Consider an outcome $\outcome'$ that satisfies $o'_r \in \bigcap_{i \in S} a_{i,r}$ for each $r \in T$.
Observe that for every subset of rounds $R \subseteq T$ with $|R| = \floor{t \cdot |S|/n}$ we have $\sat_j(\outcome'_R) = \floor{t \cdot |S|/n}$ for each $j \in S$.
Consequently,
\begin{equation*}
    \mu_S(T) =
            \min_{R \subseteq T:|R|=\floor{t \cdot |S|/n}} \
            \max_{\outcome''_R \in C^R} \
            \min_{j \in S} \
            \sat_j(\outcome''_R) \geq \min_{R \subseteq T:|R|=\floor{t \cdot |S|/n}} \
            \min_{j \in S} \
            \sat_j(\outcome'_R) \geq \floor{t \cdot |S|/n}
\end{equation*}
Since $\outcome$ provides FPJR, we have that
\begin{equation*}
    \sat_S(\outcome) \geq 
    \max_{T' \subseteq [\ell]} \
            \mu_S(T') \geq \mu_S(T) \geq \floor{t \cdot |S|/n}.
\end{equation*}
Hence, $\outcome$ provides PJR.

We will now argue that FPJR implies wFPJR.
Consider an outcome $\outcome$ that provides FPJR for a temporal election $E$.
Then for every subset of voters $S$ we have $\sat_S(\outcome) \geq \max_{T \subseteq [\ell]} \mu_S(T)$.
Thus, in particular, for $T = [\ell]$, we have
\begin{equation*}
    \sat_S(\outcome) \geq \mu_S([\ell]) = \min_{R \subseteq [\ell]:|R|=\floor{\ell \cdot |S|/n}} \
            \max_{\outcome'_R \in C^R} \
            \min_{j\in S} \
            \sat_j(\outcome'_R).
\end{equation*}
This means that $\outcome$ provides wFPJR.

For part iii), we first show that wFJR implies wFPJR.
Take an arbitrary election $E = (C, N, \ell, A)$ and an outcome $\outcome$ satisfying wFJR.
For any subset of voters $S \subseteq N$, there exists an $i \in S$ such that 
\begin{equation*}
    \sat_i(\outcome) \ge 
        \min_{R \subseteq [\ell]:|R|= \floor{\ell \cdot |S|/n}} \
        \max_{\outcome'_R \in C^R} \
        \min_{j\in S} \
        \sat_j(\outcome'_R).
\end{equation*}
Then, since $i \in S$, we must have that $\sat_S(\outcome) \geq \sat_i(\outcome)$ and $\outcome$ satisfies wFPJR as well.

Next, we show that wFPJR implies wPJR.
Fix an election $E = (C, N, \ell, A)$ and an outcome $\outcome$ that provides wFPJR.
Suppose there exists a subset of voters $S \subseteq N$ such that $\bigcap_{i \in S} a_{i,r} \neq \varnothing$ for all $r \in [\ell]$.
We construct an outcome $\outcome'$ so that $o'_r \in \bigcap_{i \in S} a_{i,r}$ for each $r \in [\ell]$.
Then, for each subset of rounds $R \subseteq [\ell]$ such that $|R| = \floor{\ell \cdot |S|/n}$, we have $\sat_j(\outcome'_R) = \floor{\ell \cdot |S|/n}$ for all $j \in S$.
Since $\outcome$ provides wFPJR, we have that
\begin{equation*}
    \sat_S(\outcome) \geq \min_{R \subseteq [\ell]:|R|= \floor{\ell \cdot |S|/n}} \
        \max_{\outcome''_R \in C^R} \
        \min_{j\in S} \
        \sat_j(\outcome''_R)
        \geq \min_{R \subseteq [\ell]:|R|= \floor{\ell \cdot |S|/n}} \
        \min_{j\in S} \
        \sat_j(\outcome'_R) \geq \floor{\ell \cdot |S|/n}.
\end{equation*}
Hence, $\outcome$ provides wPJR.
\end{proof}

\PropwPJRiffwEJR*
\begin{proof}
It is trivially observable that wEJR implies wPJR, since for every subset of voters $S \subseteq N$ that agrees in all rounds and any outcome $\outcome$ satisfying wEJR,
    \begin{equation*}
        \sat_S(\outcome) \geq \sat_i(\outcome) \geq \left\lfloor\ell \cdot \frac{|S|}{n} \right\rfloor,
    \end{equation*}
    where the first inequality follows from the fact that $i \in S$ and the second inequality follows from the definition of wEJR.

    We now prove that wPJR implies wEJR for any election $E \in \calE_{=1}$.
    Note that any subset of voters $S\subseteq N$ that agree in all $\ell$ rounds must necessarily have the same ballot.
    This means that for any outcome $\outcome$, $\sat_S(\outcome) = \sat_i(\outcome)$, giving us the desired wEJR guarantee
    \begin{equation*}
        \sat_i(\outcome) = \sat_S(\outcome) \geq \left\lfloor \ell \cdot \frac{|S|}{n} \right\rfloor. \qedhere
    \end{equation*}
\end{proof}

\section{Omitted Proofs from Section \ref{sec:core}}
\PropCoreImplications*
\begin{proof}
    For part i), we first show that sCore implies sFJR. Take an arbitrary election $E = (C, N, \ell, A)$ and an outcome $\outcome$ satisfying sCore. For any subset of voters $S \subseteq N$, any subset of rounds $R \subseteq [\ell]$ such that $|R| = \floor{\ell \cdot |S|/n}$ and any outcome $\textbf{o}'$, it follows that $\sat_i(\outcome) \geq \sat_i(\outcome'_R)$ for some voter $i \in S$. Given that $\sat_i(\outcome) \geq \min_{j\in S} \
    \sat_j(\outcome)$ for every voter $i \in S$, $\outcome$ must also satisfy sFJR since the amount of satisfaction that sCore provides to a voter is at least as much as sFJR provides by definition.

    We now show that sCore implies Core by considering an outcome $\outcome$ that is not core stable. There then exists a subset of voters $S \subseteq N$ and a subset of rounds $T \subseteq [\ell]$ such that for any subset $R \subseteq T$ where $|R| = \floor{|T| \cdot |S|/n}$, there exists an outcome $\outcome'$ such that \(
        \sat_i(\outcome) < \sat_i(\outcome'_R)
    \) for all voters $i \in S$. Fix one such subset of rounds $R \subseteq T$ such that $\outcome'_R$ witnesses a violation of core stability, and consider a subset of rounds $R' \subseteq [\ell]$ such that $|R'| = \floor{\ell \cdot |S|/n}$ and $R \subseteq R'$ (i.e., $R'$ is an extension of $R$). Given that $\sat_i(\outcome'_R) \leq \sat_i(\outcome'_{R'})$ for all voters $i \in S$ and $\sat_i(\outcome) < \sat_i(\outcome'_R)$ since $\outcome$ is not core stable, it follows that $\sat_i(\outcome) < \sat_i(\outcome'_{R'})$ and so $\outcome$ is not strongly core stable either. As our original assumption was that $\outcome$ violated core stability, it follows by contraposition that $\outcome$ satisfying sCore implies that $\outcome$ satisfies Core.

    For part ii), consider an arbitrary election $E = (C, N, \ell, A)$ and an outcome $\outcome$ satisfying Core. For any subset of voters $S \subseteq N$ and any subset of rounds $T \subseteq [\ell]$, there therefore exists a subset $R \subseteq T$ such that $|R| = \floor{|T| \cdot |S|/n}$ and for any outcome $\textbf{o}'$, there exists some voter $i \in S$ such that $sat_i(\outcome) \geq \sat_i(\outcome'_R)$.
    
    We first show that Core implies FJR. As with the proof of sCore implying sFJR, we observe that Core guarantees at least as much satisfaction as FJR given that $\sat_i(\outcome) \geq \min_{j\in S} \
    \sat_j(\outcome)$ for every voter $i \in S$, therefore $\outcome$ automatically satisfies FJR too.

    We now show that Core implies wCore. If we look at the satisfaction guarantee made by $\outcome$ when $T = [\ell]$, we find that for any subset of voters $S \subseteq N$, there exists a subset of rounds $R \subseteq T = [\ell]$ such that $|R| = \floor{\ell \cdot |S|/n}$ and for any outcome $\textbf{o}'$, there exists some voter $i \in S$ such that $sat_i(\outcome) \geq \sat_i(\outcome'_R)$. Notice that this is exactly the definition of wCore, hence $\outcome$ satisfies wCore.

    For part iii), we show that wCore implies wFJR. Take an arbitrary election $E = (C, N, \ell, A)$ and an outcome $\outcome$ satisfying wCore. For any subset of voters $S \subseteq N$, there exists a subset of rounds $R \subseteq [\ell]$ with $|R| = \floor{\ell \cdot |S|/n}$ such that for any outcome $\outcome'$, there exists a voter $i \in S$ such that $\sat_i(\outcome) \geq \sat_i(\outcome'_R)$. Observe that for any voter $i \in S$ we have that $\sat_i(\outcome'_R) \geq \min_{j\in S} \
    \sat_j(\outcome'_R)$. It therefore follows that there exists a voter $i \in S$ such that $\sat_i(\outcome) \geq \min_{j\in S} \
    \sat_j(\outcome'_R)$, where $R$ is the same subset of rounds that enabled $\outcome$ to satisfy wCore, hence $\outcome$ also satisfies wFJR.
\end{proof}

\section{Omitted Proofs from Section \ref{sec:independence}}
As mentioned in the discussion in 
\Cref{sec:independence},
to rule out all remaining potential implications, we identify twelve specific pairs of axioms for which we construct outcomes that satisfy one notion but not the other.  
Therefore, to prove \Cref{prop:separation_prop}, it is equivalent to prove the following (which completes the `if' part; the `only if' part follows from all our results in previous sections).
\begin{restatable}{proposition}{PropIndependence}\label{prop:independence}
There exist elections
$E_1$,$E_2$,$E_3$,$E_4$,$E_5$,$E_6$,
$E_7$,$E_8$,$E_9 \in \calE_{=1}$ and $E_0 \in \calE_{\geq 1}$,
    as well as outcomes
    $\outcome_1$,\,$\outcome_2$,\,$\outcome_3$, $\outcome_4$,$\outcome_5$,$\outcome_6$,$\outcome_7$,$\outcome_8$,$\outcome_9$, and
    $\outcome_0$
    for respective elections, such that
\begin{itemize}
        \item[i)] in $E_1$, outcome $\outcome_1$ provides wCore \& wEJR+ but not JR,
        \item[ii)] in $E_2$, outcome $\outcome_2$ provides Core \& EJR+ but not sJR,
        \item[iii)] in $E_3$, outcome $\outcome_3$ provides sJR but not wPJR,
        \item[iv)] in $E_4$, outcome $\outcome_4$ provides sEJR+ but not wFPJR,
        \item[v)] in $E_5$, outcome $\outcome_5$ provides sFJR but not wCore,
        \item[vi)] in $E_6$, outcome $\outcome_6$ provides sCore but not wEJR+,
        \item[vii)] in $E_7$, outcome $\outcome_7$ provides sFPJR but not wFJR,
        \item[viii)] in $E_8$, outcome $\outcome_8$ provides sFPJR but not wEJR+,
        \item[ix)] in $E_9$, outcome $\outcome_9$ provides sFPJR but not EJR, and 
        \item[x)] in $E_0$, outcome $\outcome_0$ provides sFPJR but not wEJR.
    \end{itemize}
\end{restatable}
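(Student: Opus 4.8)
The plan is to exhibit, for each of the ten cases, an explicit temporal election together with a concrete outcome, and to verify the satisfied axiom and the violated one directly by hand. Since every instance is finite and small, I can check the satisfied axiom by exhausting the relevant voter subsets (and, where needed, subsets of rounds and candidate deviations), and certify the violated axiom by producing a single explicit witness. In particular, the open or negative status of general satisfiability for Core, wCore, and sCore is irrelevant here: I only need these axioms to hold on the specific instances I build, which I verify case by case.

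I would group the separations by the structural distinction they exploit. The first group (cases i, ii, iii, vi) separates the left ``cohesiveness'' column (Core, EJR+ and their variants) from the JR/PJR/EJR column. The recurring device is that the weak axioms wCore and wEJR+ only constrain groups agreeing in all $\ell$ rounds, whereas JR (and sJR) reward agreement on as few as a proportional number of rounds. Thus for (i) and (ii) I would design instances in which no large group agrees in every round---so the weak-left axioms are easy to meet---yet a small group agreeing on just enough rounds forces a JR (resp.\ sJR) obligation that the chosen outcome deliberately ignores. Case (iii), sJR but not wPJR, runs inside the JR column: a single large all-rounds group deserves collective satisfaction $\floor{\ell\cdot|S|/n}>1$ under wPJR, while sJR only demands satisfaction $1$, so an outcome giving that group exactly one approved round separates the two. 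Case (vi), sCore but not wEJR+, isolates the ``+'' clause (ii) of wEJR+: when a cohesive group is denied the individual bound, wEJR+ forces the commonly approved candidate to be selected in every agreeing round, a demand entirely absent from the core condition.

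The second group (cases iv, v, vii, viii, ix, x) separates the FPJR cluster, and turns on the contrast between a \emph{collective} guarantee (a lower bound on $\sat_S(\outcome)$, as in FPJR) and an \emph{individual} one (some voter reaching a threshold, as in EJR/FJR/Core). To obtain sFPJR while failing an individual axiom (vii, viii, ix, x), I would build elections in which a group's collective satisfaction is high because distinct members are satisfied in distinct rounds, yet no single voter ever clears the individual bar; spreading satisfaction thinly across the group is the whole point. Case (iv), sEJR+ but not wFPJR, goes the opposite way and shows the two guarantee styles are incomparable in both directions: the strong individual/``+'' guarantee holds, but the collective bound $\rho^w(S)$ of wFPJR is violated for some group. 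Case (v), sFJR but not wCore, similarly contrasts FJR's ``some voter is as happy as the minimum-over-$S$ value of the best joint deviation'' with the core's prohibition of a deviation that strictly helps every member; I would engineer a deviation improving all of $S$ while the (low) FJR baseline is still met.

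The hard part will be case (x), sFPJR but not wEJR, which is exactly why $E_0$ must lie in $\calE_{\ge 1}\setminus\calE_{=1}$. By \Cref{prop:wpjr_iff_wEJR} we have wPJR $\iff$ wEJR on $\calE_{=1}$, and since sFPJR $\implies$ wPJR via \Cref{prop:fpjr:implications}, any $\calE_{=1}$ instance satisfying sFPJR would automatically satisfy wEJR; the separation can therefore only be realized once some voter approves several candidates in a single round. I would use an all-rounds-agreeing group whose members each individually approve only a few of the selected candidates---so wEJR fails---while the round-by-round compromise structure still meets $\rho^s(S)$. Verifying sFPJR across all subsets is the most laborious step; I would streamline it using the uniform bound $\rho^s(S)\le\floor{\ell\cdot|S|/n}$ together with a near-uniform selection rule (in the spirit of SDR from \Cref{thm:sfpjr-sdr}), so that the bound is trivially met for all but a handful of critical groups that I then check explicitly.
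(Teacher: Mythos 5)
Your overall strategy---ten explicit finite instances, the satisfied axiom checked by exhausting the relevant voter subsets (and round subsets and deviations), the violated axiom certified by a single witness---is exactly the paper's route, and most of your structural observations match the paper's constructions: the collective-versus-individual contrast driving the sFPJR cluster, the ``each voter is satisfied in their own round'' device that makes $\sat_S(\outcome)=|S|=\floor{\ell\cdot|S|/n}\ge\rho^s(S)$ hold uniformly (the paper's claims vii--ix use precisely such diagonal outcomes), and, importantly, the necessity argument for case (x) that $E_0$ must lie outside $\calE_{=1}$, which you correctly derive from \Cref{prop:wpjr_iff_wEJR} and \Cref{prop:fpjr:implications} just as the paper's design implicitly does.

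There is, however, a genuine gap in your mechanism for case (vi), and it propagates to case (viii). You propose to violate wEJR+ via its clause (ii), using ``a cohesive group denied the individual bound.'' If the witnessing group is cohesive in the full sense---$(\sigma,\ell)$-cohesive with $\sigma=|S|$, i.e., all of $S$ agreeing in every round---then failing clause (i) at level $\floor{\ell\cdot|S|/n}$ is precisely a violation of wEJR; but sCore $\implies$ sFJR $\implies$ sEJR $\implies$ EJR $\implies$ wEJR (\Cref{prop:core_implications}, \Cref{prop:fjr_implications}, \Cref{prop:vs:implications}, and the known implication from each axiom to its weak counterpart), so no sCore outcome admits such a witness; and if clause (i) holds for $S$, then no round yields a violation at all, regardless of clause (ii). The only escape---and the paper's actual device in $E_6$---is a group that is $(\sigma,\ell)$-cohesive with $\sigma<|S|$: a \emph{rotating} majority (four of six voters approve $a$ in every round, but a different four each time), which generates a demand of $\floor{\ell\cdot\sigma/n}=4$ that no member of $S$ meets and that no core-style deviation can detect, with clause (ii) failing in the single full-agreement round. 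The same forcing argument applies to case (viii): since $E_8$ must lie in $\calE_{=1}$ and sFPJR $\implies$ wPJR $\iff$ wEJR there, the wEJR+ witness must again have $\sigma<|S|$ (the paper uses a $(2,\ell)$-cohesive triple). Your sketch never identifies this sub-group cohesion ingredient---the defining feature of the EJR+-type axioms---and without it the constructions for (vi) and (viii) that your plan describes cannot exist.
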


\begin{proof}
Let us consider each claim separately.
For each election in $\calE_{=1}$
we will omit the curly braces when
writing the ballots of the voters.
In the proof of Claim ii, we explain how any outcome of $E_2$ may be selected as $\outcome_2$. In the proofs of the other nine claims, we show that a specific outcome $\outcome$ witnesses the desired result, where the underlined candidate in row $r$ of the corresponding table indicates  the candidate that was selected in round $r$ of $\outcome$.

\vspace{0.2cm}
\noindent
\textbf{Claim i: wCore and wEJR+ do not imply JR even for $\calE_{=1}$}
\vspace{0.02cm}

\noindent
Let $E_1=(C,N,\ell,A) \in \calE_{=1}$ be a temporal election with
4 candidates $C = \{a,b, c, d\}$,
6 voters $N = [6]$,
$\ell = 4$ rounds,
and the ballots of voters as presented in \Cref{tab:prop:independence:i}.

\begin{table}[ht]
    \setlength{\tabcolsep}{3pt}
    \renewcommand{\arraystretch}{1.2}
    \small
    \begin{center}
        \begin{tabular}{c | ccc ccc }
        \toprule
        & \multicolumn{4}{c}{Voter} \\
        Rounds & 1 & 2 & 3 & 4 & 5 & 6 \\
        \midrule
        1 & $\underline{a}$ & $\underline{a}$ & $\underline{a}$ & $b$ & $b$ & $b$ \\
        2 & $\underline{a}$ & $\underline{a}$ & $\underline{a}$ & $b$ & $b$ & $b$ \\
        3 & $\underline{a}$ & $\underline{a}$ & $\underline{a}$ & $b$ & $c$ & $d$ \\
        4 & $\underline{a}$ & $\underline{a}$ & $\underline{a}$ & $b$ & $c$ & $d$ \\
        \bottomrule
        \end{tabular}
    \end{center}
    \caption{The ballots of voters in the proof of \Cref{prop:independence}, Claim i.}
    \label{tab:prop:independence:i}
\end{table}

Consider outcome $\outcome_1 = (a, a, a, a)$. First, let us show that $\outcome_1$ provides wCore. Given that voters 1, 2 and 3 approve the winner in every round of $\outcome_1$, none of these voters can be more satisfied by any other outcome of $E_1$, so any subset of voters containing one of these three voters cannot witness a violation of wCore. It therefore remains to check that no subset of voters $S \subseteq \{4, 5, 6\}$ can violate wCore.

We now consider the three possible cases for the size of such subsets of voters $S$. Let $\outcome'$ be an arbitrary outcome that witnesses a violation of wCore. If $|S| = 1$ then we find that $\floor{\ell \cdot |S|/n} = \floor{4 \cdot 1/6} = 0$, therefore any suboutcome $\outcome'_R$ causing the violation would have $|R| = \floor{\ell \cdot |S|/n} = 0$ rounds. Clearly no voter can be more satisfied by a suboutcome of 0 rounds than by $\outcome_1$, so wCore is not violated in this case. 

If $|S| = 2$ then we calculate that $|R| = \floor{\ell \cdot |S|/n} = 1$. However, taking $R$ to be round 3, we find that there exists a round subset $R$ where at least one of the two voters of $S$ must have a satisfaction of 0 given that no two voters approve a common candidate. This voter is therefore not more satisfied by any proposed suboutcome $\outcome'_R$ than by $\outcome_1$, so wCore is not violated.

If $|S| = 3$ then we calculate that $|R| = \floor{\ell \cdot |S|/n} = 2$. However, taking $R$ to be rounds 3 and 4, we find that there exists a round subset $R$ where at least one of the three voters of $S$ must have a satisfaction of 0 given that no two voters approve a common candidate in either round. This voter is therefore not more satisfied by any proposed suboutcome $\outcome'_R$ than by $\outcome_1$, so again wCore is not violated. This exhausts the possibilities of $|S|$, therefore the assumption that $\outcome'$ witnesses a violation of wCore is contradicted. As $\outcome'$ was an arbitrary outcome, we conclude that $\outcome_1$ satisfies wCore.

We now show that $\outcome_1$ provides wEJR+. First, we observe that voters 1, 2 and 3 receive utility $\ell$ from $\outcome_1$, therefore for any subset of voters $S \subseteq N$ containing any of these three voters, it holds that $\sat_i(\outcome) = \ell \ge \floor{\ell \cdot \sigma/n}$ for some $i\in S$ because $\sigma \leq n$. This satisfies the first condition of wEJR+, hence if a subset of voters exists that witnesses a violation of wEJR+, then this subset must itself be a subset of $\{4, 5, 6\}$.

Given that no two voters from the subset $\{4, 5, 6\}$ agree on a common candidate in rounds 3 or 4, it just remains to consider the three $(1, \ell)$-cohesive singleton subsets of $\{4, 5, 6\}$. Let $S'$ be one such singleton subset. By noting that $\floor{\ell \cdot \sigma/n} = \floor{4 \cdot 1/6} = 0$, it vacuously holds that $\sat_i(\outcome_1) \ge \floor{\ell \cdot \sigma/n}$ for the singleton voter $i\in S'$, hence the first condition of wEJR+ is satisfied. We therefore conclude that $\outcome_1$ satisfies wEJR+.

It remains to show that $\outcome_1$ violates JR. To see this, consider the subset of voters $S = \{4, 5, 6\}$, which agree on candidate $b$ in rounds 1 and 2 (so $t = 2$).
However, we find that
$$
\min(1, \floor{t \cdot |S|/n}) = \min(1, \floor{2 \cdot 3/6}) = \min(1, 1) = 1 > 0 = \sat_{S}(\outcome).
$$

Therefore, $S$ witnesses a violation of JR, hence $\outcome_1$ violates JR.

\vspace{0.2cm}
\noindent
\textbf{Claim ii: Core and EJR+ do not imply sJR even for $\calE_{=1}$}
\vspace{0.02cm}

Let $E_2=(C,N,\ell,A) \in \calE_{=1}$
with $\ell = 6$ rounds,
a set of voters $N=[12]$, 
a set of candidates $C = \{x, y, z, c_1,\dots,c_{12}\}$,
and the ballots of voters as presented in \Cref{tab:prop:independence:ii}.

\begin{table}[ht]
    \setlength{\tabcolsep}{3pt}
    \renewcommand{\arraystretch}{1.2}
    \small
    \begin{center}
        \begin{tabular}{c | cccc cccc cccc }
        \toprule
        & \multicolumn{12}{c}{Voter} \\
        Rounds & 1 & 2 & 3 & 4 & 5 & 6 & 7 & 8 & 9 & 10 & 11 & 12 \\
        \midrule
        1 & $x$ & $x$ & $x$ & $x$ & $y$ & $y$ & $y$ & $y$ & $z$ & $z$ & $z$ & $z$ \\
        2 & $c_1$ & $c_2$ & $c_3$ & $c_4$ & $c_5$ & $c_6$ & $c_7$ & $c_8$ & $c_9$ & $c_{10}$ & $c_{11}$ & $c_{12}$\\
        3 & $c_1$ & $c_2$ & $c_3$ & $c_4$ & $c_5$ & $c_6$ & $c_7$ & $c_8$ & $c_9$ & $c_{10}$ & $c_{11}$ & $c_{12}$\\
        4 & $c_1$ & $c_2$ & $c_3$ & $c_4$ & $c_5$ & $c_6$ & $c_7$ & $c_8$ & $c_9$ & $c_{10}$ & $c_{11}$ & $c_{12}$\\
        5 & $c_1$ & $c_2$ & $c_3$ & $c_4$ & $c_5$ & $c_6$ & $c_7$ & $c_8$ & $c_9$ & $c_{10}$ & $c_{11}$ & $c_{12}$\\
        6 & $c_1$ & $c_2$ & $c_3$ & $c_4$ & $c_5$ & $c_6$ & $c_7$ & $c_8$ & $c_9$ & $c_{10}$ & $c_{11}$ & $c_{12}$\\
        \bottomrule
        \end{tabular}
    \end{center}
    \caption{The ballots of voters in the proof of \Cref{prop:independence}, Claim ii.}
    \label{tab:prop:independence:ii}
\end{table}

Observe that this is the same election
as the one we have considered in the proof of \Cref{prop:sjr:not-satisfiable}.
Thus, we already know
that there is no sJR outcome in this election.

On the other hand, we will prove that every outcome $\outcome$ provides EJR+ and Core,
as both of these axioms do not give any satisfaction guarantee in this election
to any subset of voters.

For EJR+,
let us consider for what values of $\sigma$ and $\tau$
we can have $(\sigma, \tau)$-cohesive subsets of voters $S \subseteq N$.
Clearly, $\sigma$ can be at most $4$
as in no round more than $4$ voters agree.
However, if $\sigma > 1$,
then $\tau$ has to be equal to $1$,
because there is only one round
in which at least two voters agree.
However, even for the maximum $\sigma=4$,
a subset of voters that is $(4,1)$-cohesive
receives a satisfaction guarantee of 
$$\floor{\tau \cdot \sigma / n} = \floor{1 \cdot 4 / 12} = \floor{1/3} = 0.$$
On the other hand, if $\sigma=1$,
then even for the maximum $\tau = \ell = 6$,
we again get a satisfaction guarantee of
$$\floor{\tau \cdot \sigma / n} = \floor{6 \cdot 1 / 12} = \floor{1/2} = 0.$$
Thus, indeed, EJR+ does not give any satisfaction guarantee in election $E_2$,
which means that every outcome $\outcome$ provides EJR+.

For Core,
in order for it to give a positive satisfaction guarantee,
there has to be a subset of voters $S \subseteq N$
and a subset of rounds $T \subseteq [\ell]$
such that for every $R \subseteq T$
with $|R| = \floor{|T|\cdot |S|/n}$
there exists $\outcome'_R \in C^R$
for which the minimum satisfaction in $S$ is positive, i.e.,
$\min_{i \in S} \sat_i(\outcome'_R) > 0$.
Otherwise, the voter yielding the minimum satisfaction equal to $0$
would not benefit from the deviation to $\outcome'_R$
from any other outcome $\outcome$.
Since $|R| \le |T|\cdot |S|/n \le \ell \cdot |S|/n \le |S|/2$
and each candidate in rounds $2$--$6$
gives a positive satisfaction to at most one voter,
it is not possible to guarantee all voters in $S$ a positive satisfaction from $\outcome'_R$
if $R$ does not contain round $1$.
However, as we choose the worst $\floor{|T|\cdot |S|/n}$ rounds from $T$ to $R$,
in order to have round $1$ in $R$
we would need to have $|S|/n = 1$, i.e., $S = N$.
But if $S = N$, then
$\outcome'_R$ would have to give a positive satisfaction to $12$ voters.
This is not possible as every suboutcome
can give a summed satisfaction to all voters of at most $9$
(satisfaction of $4$ can be given in the first round
and $1$ in each consecutive round).
Thus, indeed Core does not give a positive satisfaction guarantee to any subset of voters,
which means that it is satisfied by every outcome $\outcome$.

\vspace{0.2cm}
\noindent
\textbf{Claim iii: sJR does not imply wPJR even for $\calE_{=1}$}
\vspace{0.02cm}

\noindent
Let $E_3=(C,N,\ell,A) \in \calE_{=1}$ be a temporal election with
2 candidates $C = \{a,b\}$,
2 voters $N = [2]$,
$\ell = 4$ rounds,
and the ballots of voters as presented in \Cref{tab:prop:independence:iii}.

\begin{table}[htb]
    \setlength{\tabcolsep}{3pt}
    \renewcommand{\arraystretch}{1.2}
    \small
    \begin{center}
        \begin{tabular}{c | cc }
        \toprule
        & \multicolumn{2}{c}{Voter} \\
        Rounds & 1 & 2 \\
        \midrule
        1 & $\underline{a}$ & $b$ \\
        2 & $a$ & $\underline{b}$ \\
        3 & $a$ & $\underline{b}$ \\
        4 & $a$ & $\underline{b}$ \\
        \bottomrule
        \end{tabular}
    \end{center}
    \caption{The ballots of voters in the proof of \Cref{prop:independence}, Claim iii.}
    \label{tab:prop:independence:iii}
\end{table}

Consider outcome $\outcome_3 = (a, b, b, b)$.
Clearly, $\outcome_3$ provides sJR as $\sat_i(\outcome_3) \geq 1$ for every voter $i \in N$ (thus, there cannot exist a subset of voters $S$ who collectively receive a satisfaction of less than $\min(1, \floor{\ell \cdot |S|/n})$).

It remains to show that $\outcome_3$ violates wPJR. To see this, consider the singleton subset $S = \{1\}$.
Clearly, voter $1$ agrees with itself on candidate $a$ in every round of $E_3$.
However, we find that
$$
\floor{\ell \cdot |S_1|/n} = \floor{4 \cdot 2/4} = 2 < 1 = \sat_{S}(\outcome_3).
$$ 

Therefore, $S$ is a witness of a violation of wPJR,
which means that $\outcome_3$ violates wPJR.

\vspace{0.2cm}
\noindent
\textbf{Claim iv: sEJR+ does not imply wFPJR even for $\calE_{=1}$}
\vspace{0.02cm}

\noindent
Let $E_4=(C,N,\ell,A) \in \calE_{=1}$ be a temporal election with
3 candidates $C = \{x,y,z\}$,
7 voters $N = [7]$,
$\ell =7$ rounds,
and the ballots of voters as presented in \Cref{tab:prop:independence:iv}.

\begin{table}[ht]
    \setlength{\tabcolsep}{3pt}
    \renewcommand{\arraystretch}{1.2}
    \small
    \begin{center}
        \begin{tabular}{c | ccccccc }
        \toprule
        & \multicolumn{7}{c}{Voter} \\
        Rounds & 1 & 2 & 3 & 4 & 5 & 6 & 7\\
        \midrule
        1 & \underline{$x$} & \underline{$x$} & \underline{$x$} & \underline{$x$} & \underline{$x$} & \underline{$x$} & \underline{$x$} \\
        2 & $y$ & \underline{$x$} & \underline{$x$} & \underline{$x$} & \underline{$x$} & \underline{$x$} & \underline{$x$} \\
        3 & $y$ & \underline{$x$} & \underline{$x$} & \underline{$x$} & \underline{$x$} & \underline{$x$} & \underline{$x$} \\
        4 & $y$ & \underline{$x$} & \underline{$x$} & \underline{$x$} & \underline{$x$} & \underline{$x$} & \underline{$x$} \\
        5 & \underline{$y$} & $z$ & $z$ & $x$ & $x$ & $x$ & $x$ \\
        6 & \underline{$y$} & $x$ & $x$ & $z$ & $z$ & $x$ & $x$ \\
        7 & \underline{$y$} & $x$ & $x$ & $x$ & $x$ & $z$ & $z$ \\
        \bottomrule
        \end{tabular}
    \end{center}
    \caption{The ballots of voters in the proof of \Cref{prop:independence}, Claim iv.}
    \label{tab:prop:independence:iv}
\end{table}

Consider outcome $\outcome_4 = (x, x, x, x, y, y, y)$.
First, let us show that $\outcome_4$ provides sEJR+.
Observe that since the number of rounds
is equal to the number of voters,
sEJR+ requires that
for every round $r \in [\ell]$
and a subset of voters $S$
such that $\bigcap_{i \in S} a_{i,r} \neq \varnothing$, either $o_r \in \bigcap_{i \in S} a_{i,r}$ or
there exists a voter in $S$
with the satisfaction greater or equal to $|S|$.
Observe that $\bigcap_{i \in S} a_{i,r} \neq \varnothing$
implies that there has to be a candidate $c \in C$ that is approved by all voters in $S$ in round $r$.
Let $S' = \{i \in N : c \in a_{i,r}\}$ be a subset of all voters that approve $c$ in $r$.
Then observe that if one of the conditions is satisfied by $S'$ it has to be also satisfied by $S$.
Thus,
we can exhaustively check for every round $r \in [\ell]$
and every unselected candidate in this round $c \in C$,
whether either of the conditions is satisfied for
the subset of all voters that approve $c$ in $r$.
It turns out to be the case,
which we summarize in
\Cref{tab:prop:independence:iv:sejr+}.

\begin{table}[ht]
    \setlength{\tabcolsep}{5pt}
    \renewcommand{\arraystretch}{1.2}
    \small
    \begin{center}
        \begin{tabular}{c | ccc }
        \toprule
        & \multicolumn{3}{c}{Candidate} \\
        Rounds & $x$ & $y$ & $z$ \\
        \midrule
        1 & selected &
        not approved by any voter &
        not approved by any voter \\
        2 & selected &
        $\sat_1(\outcome_4) \ge 1$ &
        not approved by any voter \\
        3 & selected &
        $\sat_1(\outcome_4) \ge 1$ &
        not approved by any voter \\
        4 & selected &
        $\sat_1(\outcome_4) \ge 1$ &
        not approved by any voter \\
        5 & $\sat_4(\outcome_4) \ge 4$ &
        selected &
        $\sat_2(\outcome_4) \ge 2$ \\
        6 & $\sat_2(\outcome_4) \ge 4$ &
        selected &
        $\sat_4(\outcome_4) \ge 2$ \\
        7 &  $\sat_2(\outcome_4) \ge 4$ &
        selected &
        $\sat_6(\outcome_4) \ge 2$ \\
        \bottomrule
        \end{tabular}
    \end{center}
    \caption{sEJR+ conditions satisfaction for each round $r \in [\ell]$ and each candidate in $C$.}
    \label{tab:prop:independence:iv:sejr+}
\end{table}

On the other hand, we can prove that wFPJR is violated.
Consider a subset of agents $S = \{2,3,4,5,6,7\}$.
To determine the satisfaction guarantee for this subset
according to wFPJR
we take the worst possible subset of rounds $R$
of size $|R|=\floor{\ell \cdot |S|/n} = \floor{7\cdot 6/7} = 6$
for which we fix a suboutcome $\outcome'_R$
so to maximize the minimum satisfaction of voters from $S$.
We can easily check that by selecting candidate $x$ in each round,
we guarantee the minimum satisfaction of $5$ to every voter in $S$,
no matter which $6$ rounds we choose to $R$.
This means that according to wFPJR,
the satisfaction of subset $S$
for the outcome should be at least $5$.
However, $\sat_S(\outcome_4) = 4$,
as there are only 4 rounds in which the selected candidate
is supported by at least one voter from $S$.
Hence, wFPJR is indeed violated.

\vspace{0.2cm}
\noindent
\textbf{Claim v: sFJR does not imply wCore even for $\calE_{=1}$}
\vspace{0.02cm}

\noindent
Let $E_5=(C,N,\ell,A) \in \calE_{=1}$ be a temporal election with
3 candidates $C = \{a,b,c\}$,
8 voters $N = [8]$,
$\ell =4$ rounds,
and the ballots of voters as presented in \Cref{tab:prop:independence:v}.

\begin{table}[ht]
    \setlength{\tabcolsep}{3pt}
    \renewcommand{\arraystretch}{1.2}
    \small
    \begin{center}
        \begin{tabular}{c | cccc cccc }
        \toprule
        & \multicolumn{8}{c}{Voter} \\
        Rounds & 1 & 2 & 3 & 4 & 5 & 6 & 7 & 8 \\
        \midrule
        1 & \underline{$a$} & \underline{$a$} & \underline{$a$} & \underline{$a$} & $b$ & \underline{$a$} & $c$ & $c$ \\
        2 & \underline{$a$} & \underline{$a$} & \underline{$a$} & \underline{$a$} & \underline{$a$} & $b$ & $c$ & $c$ \\
        3 & $a$ & $a$ & $a$ & $a$ & $a$ & $a$ & \underline{$c$} &  \underline{$c$} \\
        4 & $a$ & $a$ & $a$ & $a$ & $a$ & $a$ & \underline{$c$} &  \underline{$c$} \\
        \bottomrule
        \end{tabular}
    \end{center}
    \caption{The ballots of voters in the proof of \Cref{prop:independence}, Claim v.}
    \label{tab:prop:independence:v}
\end{table}

Consider outcome $\outcome_5 = (a, a, c, c)$.
Let us first show that it satisfies sFJR.
The satisfaction guarantee given by sFJR
to a subset $S$
is the minimum satisfaction of any voter in $S$
under the best possible suboutcome $\outcome'_R$
for any subset of rounds $R \subseteq [\ell]$
with $|R| = \floor{\ell \cdot |S|/n} = |S|/2$.
Observe that under $\outcome_5$,
there are only two voters with satisfaction 1
and the rest of voters have satisfaction 2.
Thus, if sFJR was to be violated,
we would need to have a subset of voters $S \subseteq N$
for which sFJR gives a satisfaction guarantee of at least $3$.
This means that the size of $S$ needs to be at least $6$.

Observe that voters $7$ and $8$ do not agree with any voter from $\{1,2,3,4,5,6\}$ in any of the rounds.
Hence, we can analyze these two groups of voters separately.
However, combined with the fact
that the size of $S$ needs to be at least $6$,
this means that the only subset of voters
that we need to check is $S = \{1,2,3,4,5,6\}$.
Now, observe that there are no $3$ rounds
in which all voters from $S$ agree.
Thus, for any subset of $3$ rounds $R$,
there is no suboutcome $\outcome'_R$
that would give a satisfaction of at least $3$
to each voter in $S$.
Therefore, there is no subset $S$ witnessing the violation of sFJR,
which means that $\outcome_5$ provides sFJR.

On the other hand, to show that $\outcome_5$ fails wCore, consider arbitrary subset of $3$ rounds $R$
and a suboutcome $\outcome'_R$ on these rounds in which we select candidate $a$ each time.
Observe that $\sat_i(\outcome'_R) = 3 \ge 2 = \sat_i(\outcome_5)$ for every $i \in \{1,2,3,4\}$.
Moreover, $\sat_i(\outcome'_R) \ge 2 \ge 1 = \sat_i(\outcome_5)$ for $i \in \{5,6\}$.
Thus, subset of voters $S = \{1,2,3,4,5,6\}$
is a witness that wCore is violated.

\vspace{0.2cm}
\noindent
\textbf{Claim vi: sCore does not imply wEJR+ even for $\calE_{=1}$}
\vspace{0.02cm}

\noindent
Let $E_6=(C,N,\ell,A) \in \calE_{=1}$ be a temporal election with
4 candidates $C = \{a,b,c,d\}$,
7 voters $N = [7]$,
$\ell =7$ rounds,
and the ballots of voters as presented in \Cref{tab:prop:independence:vi}.

\begin{table}[ht]
    \setlength{\tabcolsep}{3pt}
    \renewcommand{\arraystretch}{1.2}
    \small
    \begin{center}
        \begin{tabular}{c | ccccccc }
        \toprule
        & \multicolumn{7}{c}{Voter} \\
        Rounds & 1 & 2 & 3 & 4 & 5 & 6 & 7 \\
        \midrule
        1 & $a$ & $a$ & $a$ & $a$ & $a$ & $a$ & \underline{$d$} \\
        2 & \underline{$a$} & \underline{$a$} & \underline{$a$} & \underline{$a$} & $b$ & $b$ & $d$ \\
        3 & \underline{$a$} & \underline{$a$} & \underline{$a$} & $b$ & \underline{$a$} & $b$ & $d$ \\
        4 & \underline{$a$} & \underline{$a$} & \underline{$a$} & $b$ & $b$ & \underline{$a$} & $d$ \\
        5 & $a$ & $a$ & $a$ & $a$ & \underline{$b$} & \underline{$b$} & $d$ \\
        6 & $a$ & $a$ & $a$ & \underline{$b$} & $a$ & \underline{$b$} & $d$ \\
        7 & $a$ & $a$ & $a$ & \underline{$b$} & \underline{$b$} & $a$ & $d$ \\
        \bottomrule
        \end{tabular}
    \end{center}
    \caption{The ballots of voters in the proof of \Cref{prop:independence}, Claim vi.}
    \label{tab:prop:independence:vi}
\end{table}
Consider outcome $\outcome_6 = (d,a,a,a,b,b,b)$.
First, let us show that $\outcome_6$ satisfies sCore.
Since in each round,
voter 7 does not agree with any voter from $\{1,\dots,6\}$
and $\sat_7(\outcome_6) \ge 1 = \floor{\ell \cdot |\{7\}|/n}$,
for the purpose of checking whether sCore is satisfied
we can focus on subsets $S \subseteq \{1,\dots,6\}$ only.

Observe that the satisfaction from outcome $\outcome_6$
of each voter in the set $\{1,\dots,6\}$ is equal to $3$.
Thus, if sCore was not to be satisfied,
there would have to be a subset $S \subseteq \{1,\dots,6\}$,
subset of rounds $R \subseteq [\ell]$ of size
$|R| = \floor{\ell \cdot |S|/n} = |S|$
and suboutcome $\outcome'_R$ such that
$\sat_i(\outcome'_R) \ge 4$ for every $i \in S$.
Since $\sat_i(\outcome'_R) \le |R|$ for every $i \in N$, $R \subseteq [\ell]$ and $\outcome'_R \in C^R$,
we know that $S$ must be of size $4$, $5$, or $6$.
Let us consider each case separately.

Case 1, $|S|=4$.
When $S$ consists of $4$ voters,
in order for some $\outcome'_R$ to give satisfaction $4$ to all voters in $S$,
it would have to hold that voters in $S$ all agree in some $4$ rounds.
There are no $4$ voters that all agree in some $4$ rounds,
hence no set $S$ of size $4$ is a witness of sCore violation.

Case 2, $|S|=5$.
Observe that election $E_6$ is symmetric with 
regard to voters $1,2,3$ as well as $4,5,6$.
Hence, without loss of generality,
we can focus on two sets of size 5, i.e.,
$S_1 = \{1,2,3,4,5\}$ and
$S_2 = \{2,3,4,5,6\}$.
In the first round, selecting candidate $a$ to suboutcome $\outcome'_R$ would give us satisfaction 1 to each voter in $S$.
Hence, every subset $R$
that does not include round $1$
and every suboutcome $\outcome'_R$,
would clearly be dominated by suboutcome we obtain
by exchanging any round in $R$ for round $1$
and selecting $a$ there.
Thus, let us assume that the first round is in $R$
and candidate $a$ is selected in that round.

In the remaining rounds, we have to choose between selecting candidate $a$ or $b$ (selecting $c$ would never be beneficial).
Let say that we selected candidate $a$ in $t_a$ rounds and candidate $b$ in $t_b$ rounds.
Observe that $|R| = 1 + t_a + t_b$.

Then, for set $S_1 = \{1,2,3,4,5\}$,
the summed satisfaction of voters $1$, $2$, and $3$ would be equal to
\[
\sat_1(\outcome'_R) + \sat_2(\outcome'_R) + \sat_3(\outcome'_R) = 3 + 3 \cdot t_a.
\]
Since satisfaction of each voter must be at least $4$, we get that $t_a \ge 3$.
On the other hand, the total satisfaction of the remaining voters is bounded by
\[
\sat_4(\outcome'_R) + \sat_5(\outcome'_R) \le 2 + t_a + 2 \cdot t_b = |R| + 1 + t_b = 6 + t_b.
\]
In order to give a satisfaction guarantee of at least $4$ to each of the voters, we need $t_b \ge 2$.
However, if round $1$ is in $R$ and $|R| = 5$,
it is impossible to have both $t_a \ge 3$
and $t_b \ge 2$.
Thus, $S_1$ (and any symmetric subset) is not a witness of sCore violation.

For set $S_2 = \{2,3,4,5,6\}$,
the total satisfaction of voters $2$ and $3$ under $\outcome'_R$ can be expressed as
\[
\sat_2(\outcome'_R) + \sat_3(\outcome'_R) = 2 + 2 \cdot t_a.
\]
Thus, again we need $t_a \ge 3$ in order to guarantee a satisfaction of at least $4$.
For voters $4$, $5$, and $6$, we have 
\[
\sat_4(\outcome'_R) + \sat_5(\outcome'_R) + \sat_6(\outcome'_R) \le 3 + t_a + 2 \cdot t_b = |R| + 2 + t_b = 7 + t_b.
\]
This time we would need $t_b \ge 5$ in order to get a satisfaction guarantee of at least $4$,
which is clearly not possible if $t_a \ge 3$
and $|R|=5$.
Thus, $S_2$ (and any symmetric subset) is not a witness of sCore violation, which concludes the case of $|S|=5$.

Case 3, $|S|=6$.
Finally consider subset $S = \{1,2,3,4,5,6\}$.
Here, again we can assume that round 1 is in $R$
and candidate $a$ is selected to $\outcome'_R$ in that round.
Let $t_a$ and $t_b$ be the respective numbers of times
$a$ and $b$ is selected in the remaining rounds.
Then, the summed satisfaction of voters $1$, $2$, and $3$
is given by
\[
\sat_1(\outcome'_R) + \sat_2(\outcome'_R) + \sat_3(\outcome'_R) = 3 + 3 \cdot t_a.
\]
To guarantee a satisfaction of at least $4$ to all voters,
we need that $t_a \ge 3$.
On the other hand,
the summed satisfaction of voters $4$, $5$, and $6$ can be bounded by
\[
\sat_4(\outcome'_R) + \sat_5(\outcome'_R) + \sat_6(\outcome'_R) \le 3 + t_a + 2 \cdot t_b = |R| + 2 + t_b = 8 + t_b.
\]
A satisfaction of at least $4$ for all voters would imply that $t_b \ge 4$.
However, $t_a \ge 3$ and $t_b \ge 4$ cannot be both met
when $|R|=6$, thus $S$ is not a witness that sCore is violated.

Since we have exhaustively shown that there is no witness to sCore violation,
this means that $\outcome_6$ indeed provides sCore.

Finally, let us show that wEJR+ is violated.
To this end, take $S=\{1,2,3,4,5,6\}$
and observe that it is $(4,\ell)$-cohesive.
Indeed, in every round,
candidate $a$ is approved by at least $4$ voters from $S$.
Then, wEJR+ requires that either at least one voter in $S$
has a satisfaction of at least
$\floor{\ell \cdot 4/n} = \floor{7 \cdot 4/7} = 4$
or candidate $a$ is selected in the first round.
None of these conditions are satisfied by $\outcome_6$,
which means that wEJR+ is indeed violated.

\vspace{0.2cm}
\noindent
\textbf{Claim vii: sFPJR does not imply wFJR even for $\calE_{=1}$}
\vspace{0.02cm}

\noindent
Let $E_7=(C,N,\ell,A) \in \calE_{=1}$ be a temporal election with
2 candidates $C = \{a,b\}$,
3 voters $N = [3]$,
$\ell =3$ rounds,
and the ballots of voters as presented in \Cref{tab:prop:independence:vii}.

\begin{table}[bht]
    \setlength{\tabcolsep}{3pt}
    \renewcommand{\arraystretch}{1.2}
    \small
    \begin{center}
        \begin{tabular}{c | ccc }
        \toprule
        & \multicolumn{3}{c}{Voter} \\
        Rounds & 1 & 2 & 3 \\
        \midrule
        1 & $\underline{b}$ & $a$ & $a$ \\
        2 & $a$ & $\underline{b}$ & $a$ \\
        3 & $a$ & $a$ & $\underline{b}$ \\
        \bottomrule
        \end{tabular}
    \end{center}
    \caption{The ballots of voters in the proof of \Cref{prop:independence}, Claim vii.}
    \label{tab:prop:independence:vii}
\end{table}

Consider outcome $\outcome_7 = (b, b, b)$. First, let us show that $\outcome_7$ provides sFPJR. We notice that each voter has a satisfaction of 1 from $\outcome_7$ and voter $i \in N$ approves the winning candidate in round $i$. Therefore, for any subset of voters $S$ we find that $\sat_S(\outcome_7) = |S|$. Now, let $\outcome'$ be an outcome and $R \subseteq [\ell]$ be a subset of rounds such that $\outcome'_R$ witnesses a violation of sFPJR. This means that there must exist a subset of voters $S$ such that $\sat_S(\outcome_7) < \sat_i(\outcome'_R)$ for every voter $i \in S$. Given that $n = \ell$, we find that $\floor{\ell \cdot |S|/n} = |S|$, but this gives rise to a contradiction because for every voter $i \in S$,
$$
\sat_i(\outcome'_R) \leq |R| = \floor{\ell \cdot |S|/n} = |S| = \sat_S(\outcome_7).
$$
Therefore, no outcome $\outcome'$ and subset of rounds $R$ exists that can witness a violation of sFPJR, so $\outcome_7$ satisfies sFPJR.

It remains to show that $\outcome_7$ violates wFJR. To see this, consider the voter subset $S = N$. Given that $|S| = n$ it follows that $\floor{\ell \cdot |S|/n} = \ell = 3$, so for any subset of rounds $R$ of an outcome that could witness a violation of wFJR, $|R| = \floor{\ell \cdot |S|/n} = 3$. As the only subset of 3 rounds of $[\ell]$ is $[\ell]$ itself, it suffices to show that there exists an outcome $\outcome'$ wherein $\min_{j \in S} \sat_j(\outcome') > 1$ since $\sat_i(\outcome_7) = 1$ for every voter $i \in N$. However, $\outcome' = (a, a, a)$ is exactly this required outcome given that $\sat_i(\outcome') = 2$ for every voter $i \in N$, which completes the proof.

\vspace{0.2cm}
\noindent
\textbf{Claim viii: sFPJR does not imply wEJR+ even for $\calE_{=1}$}
\vspace{0.02cm}

\noindent
Let $E_8=(C,N,\ell,A) \in \calE_{=1}$ be a temporal election with
3 candidates $C = \{a,b,c\}$,
4 voters $N = [4]$,
$\ell =4$ rounds,
and the ballots of voters as presented in \Cref{tab:prop:independence:viii}.

\begin{table}[bht]
    \setlength{\tabcolsep}{3pt}
    \renewcommand{\arraystretch}{1.2}
    \small
    \begin{center}
        \begin{tabular}{c | cccc }
        \toprule
        & \multicolumn{4}{c}{Voter} \\
        Rounds & 1 & 2 & 3 & 4 \\
        \midrule
        1 & $a$ & $a$ & $a$ & \underline{$c$} \\
        2 & $\underline{b}$ & $a$ & $a$ & $c$ \\
        3 & $a$ & $\underline{b}$ & $a$ & $c$ \\
        4 & $a$ & $a$ & $\underline{b}$ & $c$ \\
        \bottomrule
        \end{tabular}
    \end{center}
    \caption{The ballots of voters in the proof of \Cref{prop:independence}, Claim viii.}
    \label{tab:prop:independence:viii}
\end{table}

Consider outcome $\outcome_8 = (c, b, b, b)$.
First let us show that it satisfies sFPJR.
According to sFPJR,
for every subset of voters $S \subseteq N$
and every subset of rounds $R \subseteq [\ell]$
with $|R| = \floor{\ell \cdot |S|/n}$,
and every suboutcome $\outcome'_R$,
subset $S$ should receive a satisfaction of at least $\min_{i \in S}\sat_i(\outcome'_R)$.
Observe that for every voter $i \in N$,
suboutcome $\outcome'_R$ can give a satisfaction to $i$ of at most $|R|$.
Hence, $\min_{i \in S}\sat_i(\outcome'_R) \le |R|$.
On the other hand,
since in outcome $\outcome_8$
in every round a different voter approves the selected candidate,
for every subset of voters $S \subseteq N$, we have that
$\sat_S(\outcome_8) = |S|$.
Combining both inequalities we get
that for every subset $S \subseteq N$,
it holds that
\[
    \min_{i \in S}\sat_i(\outcome'_R) \le |R| = \floor{\ell \cdot |S| /n} = |S| = \sat_S(\outcome_8).
\]
Thus, $\outcome_8$ indeed satisfies sFPJR.

Now, let us show that $\outcome_8$ fails wEJR+.
To this end, take subset of voters $S = \{1,2,3\}$ and consider round 1.
Observe that $\bigcap_{i \in S}a_{i,1} = \{a\} \neq \varnothing$.
Moreover, $S$ is $(2,\ell)$-cohesive as in every round at least 2 voters from $S$ approve candidate $a$.
Thus, wEJR+ requires that either a candidate from $\bigcap_{i \in S}a_{i,1}$ is selected in the first round
or there is a voter $i \in S$
with satisfaction of at least
$\floor{\ell \cdot 2 / n} = \floor{4 \cdot 2 /4} = 2$.
However, $\sat_i(\outcome_8)=1$ for every $i \in S$ and $a \neq o_1$,
so neither requirement is satisfied.
Thus, wEJR+ is violated.

\vspace{0.2cm}
\noindent
\textbf{Claim ix: sFPJR does not imply EJR even for $\calE_{=1}$}
\vspace{0.02cm}

\noindent
Let $E_9=(C,N,\ell,A) \in \calE_{=1}$ be a temporal election with
2 candidates $C = \{a,b\}$,
8 voters $N = [8]$,
$\ell =8$ rounds,
and the ballots of voters as presented in \Cref{tab:prop:independence:ix}.

\begin{table}[ht]
    \setlength{\tabcolsep}{3pt}
    \renewcommand{\arraystretch}{1.2}
    \small
    \begin{center}
        \begin{tabular}{c | cccc cccc }
        \toprule
        & \multicolumn{5}{c}{Voter} \\
        Rounds & 1 & 2 & 3 & 4 & 5 & 6 & 7 & 8 \\
        \midrule
        1 & $\underline{b}$ & $a$ & $a$ & $a$ & $\underline{b}$ & $\underline{b}$ & $\underline{b}$ & $\underline{b}$ \\
        2 & $a$ & $\underline{b}$ & $a$ & $a$ & $\underline{b}$ & $\underline{b}$ & $\underline{b}$ & $\underline{b}$ \\
        3 & $a$ & $a$ & $\underline{b}$ & $a$ & $\underline{b}$ & $\underline{b}$ & $\underline{b}$ & $\underline{b}$ \\
        4 & $a$ & $a$ & $a$ & $\underline{b}$ & $\underline{b}$ & $\underline{b}$ & $\underline{b}$ & $\underline{b}$ \\
        5 & $a$ & $a$ & $a$ & $a$ & $\underline{b}$ & $\underline{b}$ & $\underline{b}$ & $\underline{b}$ \\
        6 & $a$ & $a$ & $a$ & $a$ & $\underline{b}$ & $\underline{b}$ & $\underline{b}$ & $\underline{b}$ \\
        7 & $a$ & $a$ & $a$ & $a$ & $\underline{b}$ & $\underline{b}$ & $\underline{b}$ & $\underline{b}$ \\
        8 & $a$ & $a$ & $a$ & $a$ & $\underline{b}$ & $\underline{b}$ & $\underline{b}$ & $\underline{b}$ \\
        \bottomrule
        \end{tabular}
    \end{center}
    \caption{The ballots of voters in the proof of \Cref{prop:independence}, Claim ix.}
    \label{tab:prop:independence:ix}
\end{table}

Consider outcome $\outcome_9 = (b, b, b, b, b, b, b, b)$. First, let us show that $\outcome_9$ provides sFPJR. Given that voters 5, 6, 7 and 8 approve the winner in every round of $\outcome_9$, none of these voters can be more satisfied by another outcome, so any subset of voters containing one of these four voters cannot witness a violation of sFPJR. It therefore remains to check that no subset of voters $S \subseteq \{1, 2, 3, 4\}$ can violate sFPJR.

Observe that for each voter $i \in \{1, 2, 3, 4\}$, $\sat_i(\outcome_9) = 1$ and for any two voters in the subset $\{1, 2, 3, 4\}$, there exists no round in outcome $\outcome_9$ wherein both voters are satisfied by the winning candidate. Therefore, for any subset of voters $S \subseteq \{1, 2, 3, 4\}$, we find that $\sat_S(\outcome_9) = |S|$.

Now we assume that there exists an outcome $\outcome'$ that witnesses a violation of sFPJR. We would therefore have that for some subset of voters $S \subseteq \{1, 2, 3, 4\}$, $\sat_S(\outcome_9) < \sat_i(\outcome'_R)$ for each $i \in S$ where $|R| = \floor{\ell \cdot |S|/n} = |S|$ given that $n = \ell$. However, this is impossible because $\sat_i(\outcome'_R) \leq |R| = |S| = \sat_S(\outcome_9)$ for every voter $i \in S$.
Therefore, no outcome can bear witness to a violation of sFPJR and we deduce that $\outcome_9$ satisfies sFPJR.

To show that $\outcome_9$ violates EJR, we consider the subset of voters $S = \{1, 2, 3, 4\}$, which agree on candidate $b$ in rounds 5 to 8, hence $t = 4$.
Thus, EJR requires that at least one voter from $S$ should receive a satisfaction of at least $\floor{t \cdot |S|/n} = \floor{4 \cdot 4/8} = 2$. However, for every voter $i \in S$ we find that $\sat_i(\outcome_9) = 1$, thereby violating EJR.

\vspace{0.2cm}
\noindent
\textbf{Claim x: sFPJR does not imply wEJR for $\calE_{\geq 1}$}
\vspace{0.02cm}

\noindent
Let $E_0=(C,N,\ell,A) \in \calE_{\geq 1}$ be a temporal election with
5 candidates $C = \{a,b,c,d,e\}$,
6 voters $N = [6]$,
$\ell =6$ rounds,
and the ballots of voters as presented in \Cref{tab:prop:independence:x}.

\begin{table}[ht]
    \setlength{\tabcolsep}{3pt}
    \renewcommand{\arraystretch}{1.2}
    \small
    \begin{center}
        \begin{tabular}{c | ccc ccc }
        \toprule
        & \multicolumn{6}{c}{Voter} \\
        Rounds & 1 & 2 & 3 & 4 & 5 & 6 \\
        \midrule
        1 & $\{a, \underline{b}, c\}$ & $\{a, \underline{b}, d\}$ & $\{a, c, d\}$ & $\{\underline{b}, c, d, e\}$ & $\{\underline{b}, c, d, e\}$ & $\{\underline{b}, c, d, e\}$ \\
        2 & $\{a, b, \underline{c}\}$ & $\{a, b, d\}$ & $\{a, \underline{c}, d\}$ & $\{b, \underline{c}, d, e\}$ & $\{b, \underline{c}, d, e\}$ & $\{b, \underline{c}, d, e\}$ \\
        3 & $\{a, b, c\}$ & $\{a, b, \underline{d}\}$ & $\{a, c, \underline{d}\}$ & $\{b, c, \underline{d}, e\}$ & $\{b, c, \underline{d}, e\}$ & $\{b, c, \underline{d}, e\}$ \\
        4 & $\{a, b, c\}$ & $\{a, b, d\}$ & $\{a, c, d\}$ & $\{b, c, d, \underline{e}\}$ & $\{b, c, d, \underline{e}\}$ & $\{b, c, d, \underline{e}\}$ \\
        5 & $\{a, b, c\}$ & $\{a, b, d\}$ & $\{a, c, d\}$ & $\{b, c, d, \underline{e}\}$ & $\{b, c, d, \underline{e}\}$ & $\{b, c, d, \underline{e}\}$ \\
        6 & $\{a, b, c\}$ & $\{a, b, d\}$ & $\{a, c, d\}$ & $\{b, c, d, \underline{e}\}$ & $\{b, c, d, \underline{e}\}$ & $\{b, c, d, \underline{e}\}$ \\
        \bottomrule
        \end{tabular}
    \end{center}
    \caption{The ballots of voters in the proof of \Cref{prop:independence}, Claim x.}
    \label{tab:prop:independence:x}
\end{table}

Consider outcome $\outcome_0 = (b, c, d, e, e, e)$. First, let us show that $\outcome_0$ provides sFPJR. Given that voters 4, 5 and 6 approve the winner in every round of $\outcome_0$, none of these voters can be more satisfied by another outcome, so any subset of voters containing one of these three voters cannot witness a violation of sFPJR. It therefore remains to check that no subset of voters $S \subseteq \{1, 2, 3\}$ can violate sFPJR.

Given that $\sat_i(\outcome_0) = 2$ for every voter $i \in \{1, 2, 3\}$, any outcome $\outcome'$ that could violate sFPJR would have to provide a satisfaction of at least 3 to every voter. This would require us to find a suboutcome that contains at least 3 rounds (i.e.\ $\outcome'_R$ where $|R| \geq 3$), but as $n = \ell$ this ensures that $|S| \geq 3$ since $|R| = \floor{\ell \cdot |S|/n} = |S|$. Therefore, the only subset of voters that we need to check is $S = \{1,2,3\}$. However, given that $\sat_{S}(\outcome_0) = 3$ and any suboutcome of 3 rounds can provide a satisfaction of at most 3 to any voter, $S$ is not a witness of sFPJR violation. Since there is no such witness, $\outcome_0$ provides sFPJR.

It remains to show that $\outcome_0$ violates wEJR. To see this, we again consider the subset of voters $S = \{1, 2, 3\}$. Every voter in $S$ agrees on candidate $a$ in all 6 rounds, hence in order for $\outcome_0$ to satisfy wEJR, there must exist a voter $i \in S$ such that
$$\sat_i(\outcome_0) \geq \floor{\ell \cdot |S|/n} = \floor{6 \cdot 3/6} = 3.$$
However, each voter $i \in S_1$ only receives a satisfaction of 2 from $\outcome_0$, hence $\outcome_0$ violates wEJR. 
\end{proof}
\end{document}